%% file: main_SSRN.tex
\documentclass[11pt, twoside]{article}
\usepackage[utf8]{inputenc}
\usepackage{amsmath}
\usepackage{amssymb}
\usepackage{amsthm}
\usepackage{enumitem, multirow, booktabs}
\usepackage[margin = 1in]{geometry}
\usepackage{secdot} 
\usepackage{authblk}
\usepackage{changepage, booktabs}
\usepackage{hyperref}
\usepackage{url, diagbox}
\usepackage{mathrsfs}
\usepackage{amsthm}
\usepackage[normalem]{ulem}
\usepackage{enumitem}
\usepackage{comment}
\usepackage{natbib}
\usepackage{babel}
\usepackage{graphicx}
\usepackage{bm} 
\usepackage{cancel} 
\usepackage{caption}
\usepackage[table,xcdraw]{xcolor}
\usepackage{titletoc}
\usepackage{mathtools}
\usepackage{graphicx}
\usepackage{parskip}
\usepackage{algpseudocode}
\usepackage[ruled,vlined]{algorithm2e}
\usepackage{setspace}
\usepackage{float}
\usepackage{placeins}
\usepackage{natbib}

\DeclareCaptionFormat{custom}
{%
    \textbf{#1#2}\textit{\small #3}
}
\graphicspath{ {./images/} }

\renewcommand{\phi}{\varphi}

\theoremstyle{definition}

\newtheorem{remark}{Remark}[section]

\newtheorem{proposition}{Proposition}[section]

\renewcommand{\tilde}{\widetilde}

\allowdisplaybreaks

\newtheorem{corollary}{Corollary}[section]
\makeatother
\usepackage{thmtools}
\usepackage[framemethod=TikZ]{mdframed}
\mdfsetup{skipabove=1em,skipbelow=0em}

\usepackage{cleveref}
\theoremstyle{definition}

\title{Proactive Inpatient Bed Requests for \\ Emergency Department Admissions\\
{\small \textit{Submitted to Manufacturing \& Service Operations Management}}}
\author[1]{Qian Cheng}
\author[2]{Nilay Tan\i k Argon}
\author[2]{Aniruddhan Ganesaraman}
\author[2]{Serhan Ziya}

\affil[1]{United Airlines}
\affil[2]{Department of Statistics and Operations Research, University of North Carolina at Chapel Hill}

\begin{document}

\maketitle

\let\thefootnote\relax\footnotetext{\textit{Emails:} \href{mailto:registerericcheng@gmail.com}{registerericcheng@gmail.com}, \href{mailto:nilay@unc.edu}{nilay@unc.edu}, \href{mailto:aniruddhan_ganesaraman@unc.edu}{aniruddhan\_ganesaraman@unc.edu}, \href{mailto:ziya@unc.edu}{ziya@unc.edu}}


\subsection*{Abstract.} {\bf Problem definition:} Emergency department (ED) boarding occurs when patients admitted to the hospital remain in the ED while waiting for inpatient beds. Boarding is viewed as a major driver of ED crowding and has been associated with poor patient outcomes. We propose a framework to help EDs reduce boarding time and total length of stay by using information about current ED patients and hospital bed availability to proactively request inpatient beds before admission decisions are finalized. 

{\bf Methodology/results:} We formulate the problem as a Markov decision process in which predictions of each patient's probability of hospital admission and time to disposition are aggregated to guide early inpatient bed requests. This formulation leads to three data-driven policies based on approximate dynamic programming, reinforcement learning, and a newsvendor-type approach. Using a simulation model based on data from a large ED, we evaluate these policies across a wide range of settings. The simulation study shows that proactive aggregate bed requests can reduce average boarding times for admitted patients by 30–70\% and average length of stay for all ED patients by 6–15\%, while creating only modest idle time for prepared inpatient beds. The newsvendor heuristic provides the most attractive tradeoff between ED performance and inpatient bed idle time, whereas the reinforcement learning-based heuristic produces smoother bed-request patterns when stability in downstream hospital processes is especially important. 

{\bf Managerial implications:} Our work shows how EDs can use prediction tools to make proactive, ED-level bed-request decisions that improve ED operations while helping managers balance reductions in ED delays against limited idle time for prepared inpatient beds. Our findings also illustrate the value of evaluating both simple myopic heuristics and more sophisticated reinforcement learning-based approaches in this operational problem, since each can offer distinct advantages depending on the performance measures and implementation constraints most important to managers.

\textbf{Keywords.} Emergency department, Markov decision processes, reinforcement learning.

\section{Introduction}
\cite{american2011definition} defines a {\em boarded patient} as one who remains in the emergency department (ED) after being admitted or placed under observation status, but has not yet been transferred to an inpatient or observation unit. In most EDs and hospitals, boarded patients may remain in the ED for hours after the admission decision, simply because hospital beds and care teams are not immediately available. This process of arranging inpatient care can be lengthy, particularly during certain hours of the day. 

{\em ED boarding}, which refers to the presence of boarded patients in the ED, has been linked to poor patient outcomes as well as subpar operational performance. From the patient perspective, individuals generally dislike being held in the ED once they no longer need emergency care \citep{viccellio2013patients}. In addition, several studies have shown that extended ED boarding is associated with a higher risk of adverse health outcomes \citep{chalfin2007impact, singer2011association}. From an operational perspective, boarding has been identified as a major contributor to ED crowding \citep{hoot2008systematic,sartini2022,pearce2023}. Boarded patients continue to occupy ED beds, consume ED resources, and delay the treatment of other ED patients \citep{bair2010impact, white2013boarding}. Boarding is also costly; \cite{Canellas2024} show that caring for a boarding patient in the ED can cost up to twice as much as treating a comparable patient in an inpatient setting.

\begin{figure}[ht!]
	\begin{center}
		\includegraphics[scale=0.2]{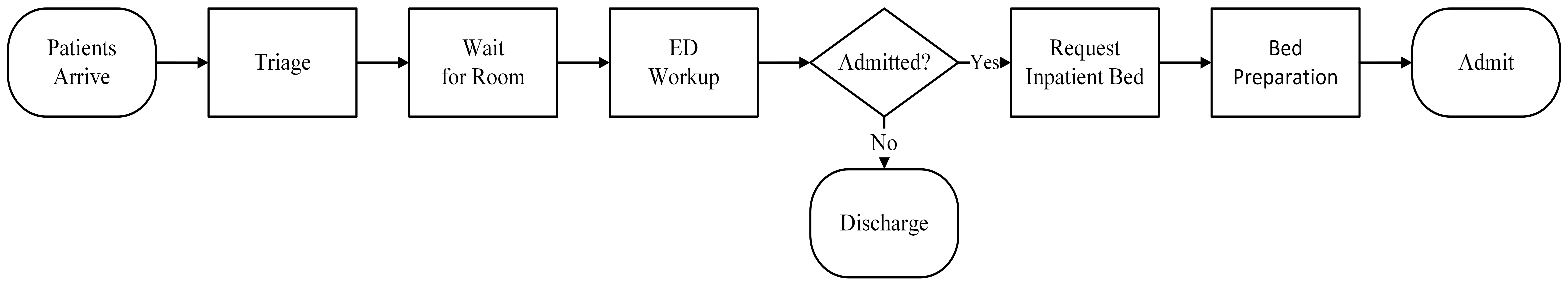}
		\caption{Patient flow in a typical ED.}
		\label{fig:patientworkflow} 
	\end{center}
\end{figure}
\vspace{-10pt}

To understand how boarding contributes to ED crowding, we first take a closer look at the  patient flow in a typical ED as depicted in Figure~\ref{fig:patientworkflow}. The process  begins with triage, where a nurse evaluates and prioritizes patients based on a rapid assessment of their condition upon arrival. This process assigns patients to categories such as the five-level Emergency Severity Index (ESI), ranging from 1 (most acute) to 5 (least acute), which is the most commonly used triage system in the U.S. \citep{gilboy2012emergency}. After triage,  patients are taken to a waiting area if all ED beds are occupied, and are then brought into the treatment area according to a priority mechanism largely based on their triage levels and times of arrival. In most cases, a patient's workup in the ED begins only after the patient is assigned an ED bed. With the exception of patients who leave without being seen or are transferred to another facility, most patients are either discharged or admitted once their ED workup is completed. According to the current practice, the ED places a request for a hospital bed only after a patient has received an admission decision. The transfer preparation process, which includes identifying inpatient beds and care teams as well as completing all necessary steps for the start of patients' transfer, can be time consuming, even when a hospital bed is readily available. If a bed is available at the time of the request, the patient’s boarding time is essentially equal to the transfer preparation time. If no bed is available, the boarding time is even longer.

In this paper, we propose a framework to reduce ED crowding caused by prolonged boarding times by modifying the inpatient bed request process. The core idea is that a patient’s likelihood of hospital admission can be predicted with reasonable accuracy as early as the triage stage, and these predictions can be used to initiate bed requests earlier than in current practice. A similar mechanism was proposed by \cite{chen2022using}, in which admission probabilities are estimated at triage and a bed is requested immediately for patients deemed highly likely to be admitted.  That work demonstrates that individualized early bed requests can reduce  the average length of stay in the ED without generating excessive false early bed requests.  

With this work, we consider an alternative approach to \cite{chen2022using}. {\em Rather than requesting beds for individual patients, we focus on ED-level decision making and generate aggregate bed requests periodically}. Specifically, we envision a decision mechanism that, at regular intervals (e.g., every hour), observes the current ED population, estimates each patient's admission probability and time-to-disposition decision distribution, and uses these estimates together with hospital bed availability to determine how many additional beds should be requested. By aggregating individual-level admission predictions, this approach reduces the impact of uncertainty in single-patient predictions while incorporating both ED operational conditions and hospital capacity. Moreover, discussions with hospital and ED managers suggest that aggregate bed requests may be better aligned with operational practice. Aggregate requests can provide hospital management with useful information about the current and anticipated state of the ED, and  initiating bed preparation earlier may be more acceptable when requests are not tied to specific individual patients~\citep{linthicum2018improving}.

An important assumption underlying the proposed framework is that inpatient beds are largely interchangeable across patients, which may not hold for highly specialized hospital units. This assumption is most appropriate for patients admitted to general medicine units. In our partner healthcare system, which includes more than 15 hospitals with EDs across North Carolina, U.S.A., the proportion of ED-admitted patients assigned to “General Medicine” services was the highest among all service categories during the calendar year 2019, ranging from 16\% at an academic hospital with many specialty inpatient units to 70\% in a large non-academic hospital. Furthermore, several studies recommend acute medical units or centralized admitting units as mechanisms for improving the flow of ED-admitted patients \citep{reid2016effectiveness, moloney2005impact}.

The key contributions of this study are threefold: (i) we propose a generalized decision-making framework that uses patient admission probability and time-to-disposition decision predictions to generate ED bed requests with the goal of balancing operational concerns from both the ED and hospital perspectives; (ii) we develop multiple heuristic policies  grounded in Markov decision processes (MDPs) and reinforcement learning; and (iii) we conduct a high-fidelity simulation study based on patient data from an actual ED to quantify the impact of these policies on key operational performance metrics and generate insights into which policies to use under what conditions. Our simulation study shows that the proposed data-driven framework has the potential to substantially reduce the average ED boarding times for admitted patients as well as the average length of stay for all ED patients, while requiring only modest average idle time for prepared inpatient beds. These benefits persist even when the simulated ED experiences a winter pandemic scenario or when only a fraction of inpatient beds are eligible for early bed requests.


\section{Literature Review}
\label{S2:LR}

In recent years, there has been growing interest in applying statistical and machine learning methods to predict hospital admissions and LoS for ED patients, often with the goal of improving patient flow~\citep{Shin2025}. Studies of individual-level admission prediction have used traditional statistical models, such as regression~\citep{sun2011predicting, peck2012predicting, mehrotra2017, lee2020prediction}, as well as more advanced machine learning methods, including neural networks and natural language processing~\citep{zhang2017prediction, roquette2020prediction, somanchi2022predict, kishore2023}. Similarly, prior studies have applied regression, machine learning, and classification methods to predict ED LoS and waiting times in the form of point estimates, percentiles, or interval-based probabilities~\citep{ang2016accurate, azari2015imbalanced, rahman2020using, kadri2023}.

The primary motivation for admission prediction tools is to identify patients likely to be admitted early in their ED stay, allowing transfer preparation to begin proactively by notifying inpatient units in advance. While this idea is intuitively appealing and has shown promise in simulation studies~\citep{peck2014characterizing}, structured decision-making models to guide implementation remain limited. To the best of our knowledge, only three studies have systematically addressed this question. \cite{qiu2015cost} formulate the individual bed-request decision as a newsvendor problem, focusing on the cost-effective timing of requests when admission likelihood and LoS distributions are known. \cite{lee2021proactive} assume that beds are requested for all predicted admissions and use a Markov process to evaluate system performance under varying prediction accuracies. Most closely related to our work, \cite{chen2022using} propose a framework in which an early bed request is made at triage for individual patients based on their predicted admission probability and the current ED occupancy level, with the goal of reducing average ED LoS while limiting false requests. Our work departs from these studies by shifting the decision from the patient level to the system level: instead of deciding whether to request a bed for each patient at triage, we determine the timing and number of aggregate inpatient bed requests based on broader conditions in both the ED and the hospital. This system-level approach supports more practical implementation by allowing requests to be made at predefined intervals set by ED management.

Although the literature directly addressing our decision problem is limited, a broad operations management literature studies patient flow and operational efficiency in EDs and hospitals; see \cite{wiler2011review}, \cite{saghafian2015operations}, and \cite{dai2021recent} for comprehensive reviews. Closely related studies examine hospital-wide patient flow and ED--ward interactions~\citep{armony2015patient}, proactive diversion decisions based on ED arrival forecasts~\citep{xu2016using}, and provider versus nurse triage using many-server approximations~\citep{kamali2019use}. Another stream studies ED patient streaming strategies based on admission likelihood, case complexity, or resource needs~\citep{saghafian2012patient, saghafian2014complexity, feizi2023a}. Finally, \cite{feizi2023b} examine admission batching for ED patients, showing that batching can improve physician productivity but may increase variability in bed requests and delay downstream inpatient processes.

Other relevant work investigates physician behavior, task prioritization, and staffing decisions in EDs. Studies in this stream analyze providers’ choices between new and in-progress patients using queueing models~\citep{huang2015control}, develop centralized scheduling systems that use real-time patient data and robust-stochastic programming to recommend which patient each provider should see next~\citep{he2019data}, and empirically show that decision makers tend to prioritize patients who are more likely to be discharged when the ED is crowded~\citep{li2021next}. Related work on ED staffing addresses time-varying arrivals, dynamic resource allocation, surge capacity planning, and nonstationary physician productivity~\citep{andersen2019staff, chan2021dynamic, hu2021prediction, zaerpour2022scheduling}.

Complementing the literature on ED patient flow, a related stream studies inpatient-process interventions that improve hospital operations and indirectly reduce ED congestion. These include discharge timing and rounding policies to make inpatient beds available sooner~\citep{shi2016models, chan2017queues} and capacity-management policies for step-down units, ICUs, and hospital wards~\citep{armony2018critical, dong2020structural}. Other studies use dynamic optimization to manage patient overflow and ICU-to-ward transfers~\citep{dai2019inpatient, gonzalez2019proactive}, select care units at admission while balancing readmission risk and bed availability~\citep{zhalechian2020personalized}, and determine when and which patients to discharge using personalized readmission-risk predictions~\citep{shi2021timing}.

Last but not least, there is a growing body of empirical research aimed at improving our understanding of ED and hospital operations while offering useful managerial insights. Recent examples include \cite{song2015diseconomies}, 
\cite{batt2017early}, \cite{kc2017benefits}, \cite{ding2019patient}, \cite{batt2019effects}, \cite{soltani2022does},  and \cite{niewoehner2022physician}.

\section{Problem Formulation and Benchmark Policies}
\label{S4:PFBP}

In this section, we develop a general framework for using real-time information on ED patients and hospital bed status to determine how many inpatient beds to request for potentially soon-to-be admitted ED patients. An effective decision framework would improve ED operational outcomes such as decrease the average LoS in the ED for all patients, while taking into account the hospital's concerns about using their key resources suboptimally, which would manifest itself by the presence of prepared beds that remain unoccupied for extended periods of time.

One of the key assumptions underlying the proposed framework is that requested hospital beds are interchangeable, i.e., each requested bed can be occupied by any ED patient who ended up being admitted to the hospital. Although this may be a restrictive assumption, many hospitals have hospital care teams and beds (such as those in  general medicine wards) that are flexible to accommodate patients with a wide range of ailments. Furthermore, our formulation can be easily extended to the case with multiple types of hospital beds and care teams. For the sake of simplicity, we present the formulation under the assumption that all requested hospital beds are interchangeable, which we later partially relax in our simulation experiments.  

We consider the problem over an infinite planning horizon, where the system is observed periodically at discrete points in time. We call these discrete-time points {\em decision epochs}, and the time between any two consecutive epochs a {\em period}. The choice of the frequency of decision epochs may depend on various factors such as the accessibility of information about patients and hospital beds, and the computational complexity of the solution methods. At each decision epoch $t\geq 1$, the decision maker observes the system state denoted by $\mathbf{s}_t:=(\mathbf{P}_t,\mathbf{B}_t)$, where $\mathbf{P}_t$ represents the state of all patients who are present in the ED with a positive chance of being later admitted to the hospital, and $\mathbf{B}_t$ denotes the state of hospital beds requested by the ED. Specifically, $\mathbf{P}_t$ includes patients waiting for an ED bed, occupying an ED bed without a disposition decision, or boarding. Let $n_t$ be a non-negative integer that denotes the number of all such patients in the ED at time $t$. For each patient $i\in\{1,2,\ldots,n_t\}$ present in the ED at time $t$,  where $n_t\geq 1$,  we assume that the decision maker has an estimate of the distribution of time-to-disposition decision, which is represented by $\boldsymbol{\gamma}_{i,t}=(\gamma_{i,t,1},\gamma_{i,t,2},\ldots)$, with $\gamma_{i,t,k}$ denoting the estimated probability at time $t$ of patient $i$ receiving a disposition decision in period $(t+k-1,t+k]$ for $k\geq 1$. For each patient $i$, the predicted admission probability at time $t$ is represented by $\alpha_{i,t}$, which means that at time $t$, this patient is anticipated to be either admitted with probability $\alpha_{i,t}$ or discharged with probability $1-\alpha_{i,t}$ after their stay in the ED. Information for each patient $i$ at time $t$ is completely characterized by  $\mathbf{P}_{i,t}:=(\boldsymbol{\gamma}_{i,t}, \alpha_{i,t})$. (With this notation,  patient $i$, who is boarding at time $t$ has $\mathbf{P}_{i,t}:=((1,0,\ldots),1)$.) The state of all ED patients at time $t$ is then expressed as
$\mathbf{P}_t=\left(\mathbf{P}_{1,t},\mathbf{P}_{2,t},\ldots,\mathbf{P}_{n_t,t}\right)$.

The vector $\mathbf{B}_t$ captures the state of all ED-requested hospital beds, including those occupied, under preparation, or prepared but not yet assigned to an ED patient. Let $m_t$ be the number of such beds at time $t$. For each requested bed, the remaining preparation time denotes the time until the bed is available and ready for an ED-admitted patient. The distribution of remaining preparation time for  requested hospital bed $j\in\{1,2,\ldots,m_t\}$, with $m_t\geq 1$, is denoted by vector $\boldsymbol{\zeta}_{j,t}=(\zeta_{j,t,1},\zeta_{j,t,2},\ldots)$, where $\zeta_{j,t,k}$, for $k \geq 1$, represents the estimated probability at time $t$ that bed $j$ finishes preparation in period $(t+k-1,t+k]$. (Using this notation, the state of a requested bed that is already prepared but is unoccupied is denoted by $(1,0,\ldots)$.)  The state of all requested hospital beds, $\mathbf{B}_t$, is then represented by $\mathbf{B}_t:=(\boldsymbol{\zeta}_{1,t}, \boldsymbol{\zeta}_{2,t}, \ldots, \boldsymbol \zeta_{m_t,t})$. 
These estimates for hospital beds under preparation and ED patients' admission probabilities and times to disposition decision, all given as part of state $\mathbf{s}_t$, can be updated as more information becomes available. 

Let action $a_t$ be the number of inpatient beds that are requested from the hospital at decision epoch $t$ with action space $\mathcal A:=\{0,1,\ldots, A\}$, where $A$ is a pre-defined positive integer. We assume that the bed requests are sent immediately to the hospital and consequently, $a_t$ hospital beds, either available or unavailable, will be identified for use by patients admitted from the ED. Let $C_t^\pi$ be the total cost of running this bed request system in period $t$ under policy $\pi$, which is a collection of decision rules that maps the state space to the action space for every decision epoch $t\geq 1$. The objective is to minimize the expected total discounted cost over an infinite horizon, i.e., 
$
\inf_{\pi\in\Pi} \mathbb{E} [\sum_{t=1}^{\infty} \rho^t C_t^\pi]
$,
where $\rho \in (0,1)$ is a discount factor and $\Pi$ is the set of all admissible policies.

Specifying an appropriate cost structure is one of the main challenges in formulating this decision problem. At a high level, we envision $C_t^\pi$ to have three main components to capture the basic trade-offs. First, because the primary objective is to reduce ED boarding, the cost should include a penalty that increases with the number of boarding patients. Second, to avoid placing unnecessary burden on the hospital and inefficiently using scarce resources, the cost should penalize requested beds that have completed preparation but remain unoccupied. Third, because large fluctuations in bed requests can delay downstream hospital processes~\citep{feizi2023b}, the cost may include a convex increasing function of the request size $a_t$ to encourage smoother bed-request patterns. In our numerical study, we use the relative weights on these three components as tuning parameters for the proposed heuristic policies and evaluate performance metrics such as long-run average ED LoS under different weight choices. This allows ED and hospital managers to tailor the framework to their operational conditions and priorities.

We conclude this section by introducing two benchmark policies that will later serve as reference points when we evaluate the performance of proposed heuristic policies. 

\noindent\textbf{Current Practice (CP):} According to the current practice in most EDs, a hospital bed is requested for a patient only after their disposition decision is made.  In our formulation, this corresponds to requesting beds at a decision epoch only for patients whose admission decisions have been finalized in the previous period. In other words, we let 
$$
a_t=\max \left(0,\sum_{i=1}^{n_t} \mathbb{I}\left[(\boldsymbol{\gamma}_{i,t}, \alpha_{i,t}) = ((1,0,\ldots),1)\right] -  m_t \right)\ 
$$
at state $\mathbf s_t$, for all $t\geq 1$, where $ \mathbb{I}[\cdot]$ is the indicator function that takes the value of 1 if its argument is true and 0 otherwise. Here, the sum of the indicator functions corresponds to the total number of boarding patients in the ED at time $t$, and hence, subtracting $m_t$ from this value gives the number of hospital beds that are needed for admitted patients when this difference is positive.

\noindent \textbf{Greedy Policy (GP):} This policy aims to match the expected admission demand with the number of requested hospital beds by ignoring patients’ time-to-disposition distributions and hospital bed preparation-time distributions. Specifically, it estimates the expected number of current ED patients who will eventually be admitted, compares it with the number of already requested beds, and requests the rounded shortfall. Thus, in state $\mathbf s_t$ for all $t\geq 1$, we let
$a_t=\max ( 0,\sum_{i=1}^{n_t} \alpha_{i,t} - m_t)$.

\section{MDP Formulation and Resulting Heuristics}
\label{S5:MDP}

Although the generalized framework of Section \ref{S4:PFBP} provides a detailed representation of the system without making overly restrictive assumptions, its complexity and high dimensionality make direct analysis and policy derivation very difficult. Therefore, in Section~\ref{S5:MDP:SS1}, we introduce an alternative MDP formulation that paves the way to three heuristics as discussed in Sections~\ref{S5:MDP:SS2}, \ref{S5:MDP:SS3}, and \ref{S5:MDP:SS4}.

\subsection{MDP Formulation}
\label{S5:MDP:SS1}

Consider an infinite-horizon, discrete-time MDP, where actions are taken at decision epochs $t\geq 1$. Suppose that ED patients can be categorized into $M \geq 2$ classes based on their chief complaints, age, comorbidities, etc., that would inform the decision maker about patients' chances of admission, arrival distributions, and distributions of time to disposition. In this formulation, patients cannot transition from one class to another during their stay at the ED, and hence, all patient characteristics  observed by the decision maker stays unchanged during the process. 

This formulation assumes that there are infinitely many ED and hospital beds.  This means that an ED patient will  be roomed immediately upon arrival, and that the bed preparation process for each hospital bed will begin as soon as it is requested by the ED. (Although these assumptions are not realistic, the simulation experiments in Section~\ref{S7:NS} show that the resulting policies still perform well under finite bed capacities.) The number of arrivals at the ED during a period has a finite mean and is independent of the past arrivals, the state of the ED and the hospital, and past actions. Also, the probability that a patient receives a disposition decision during a period is a constant that depends on the class only and is independent of everything else. We assume that the decision maker can observe the true admission probability $\alpha_m$ for each class $m$ patient that stays unchanged throughout their stay in the ED until the disposition decision. For notational convenience, here we only present the time-homogeneous formulation, although it is possible to formulate the problem with non-homogeneous arrival rates and service rates based on the time of day.

A patient with an admission decision is transferred immediately to the hospital if there is at least one hospital bed that is available for use; otherwise, the admitted patient will board in the ED until a hospital bed becomes available. At each decision epoch, a decision maker observes the state of the ED and the hospital, and requests a number of beds from the hospital. Since we assume an infinite hospital bed capacity, these requested beds will start being prepared immediately. In every period, each requested hospital bed will be ready with a certain probability, independently of everything else. A hospital bed that is ready will be immediately occupied by a patient if there is at least one boarding patient in the ED; otherwise, it will remain idle until another ED patient is admitted.

Under the above assumptions, the state of the system at decision epoch $t$ can be redefined as $\mathbf{s}_t:=(n_{1,t},\ldots,n_{M,t},k_t,r_t)$, where $n_{i,t}$ is a non-negative integer denoting the number of class $i$ patients in the ED who are yet to receive a disposition decision. When positive, $k_t$ denotes the number of boarding patients in the ED at time $t$. If $k_t$ is negative, then $|k_t|$ denotes the number of hospital beds that are available for use by future ED-admits. If $k_t$ is zero, then there is neither any boarding patient nor any hospital bed waiting. Finally, $r_t$ is a non-negative integer denoting the number of requested hospital beds that are under preparation. Let $\mathcal{S}$ be the state space of all such $\mathbf{s}_t$. 

As in the general formulation of Section \ref{S4:PFBP}, we let the action taken by the decision maker at each decision epoch $t$ be denoted by $a_t \in \mathcal{A}$, which is the number of hospital beds requested. Corresponding to the cost structure of Section \ref{S4:PFBP}, we let the immediate cost function be $C(\mathbf{s}_t,a_t):=a_t^2 + C_{b,e}(\mathbf{s}_t)$, where the quadratic term helps penalize the variability in the number of beds requested, $C_{b,e}(\mathbf{s}_t):=c_b \max(0, k_t) + c_e \max(0,-k_t)$ represents the sum of the penalty for having boarding patients and the penalty for having unused hospital beds, $c_b \geq 0$ is the cost of having a single boarding patient per period, and $c_e \geq 0$ is the cost of having one unused prepared hospital bed per period. We regard both $c_b$ and $c_e$ as tuning parameters that determine the relative weights given to the three parts of the per-period cost.

Defining policy $\pi$ as before, the optimization problem can then be stated as
\begin{equation}
\inf_{\pi\in \Pi} \mathbb{E} \Big[\sum_{t=1}^{\infty} \rho^t C(\mathbf{s}_t,a_t) \Big],\label{eq:optimal}
\end{equation}
and the optimality equation for this infinite-horizon discounted MDP is given by
\begin{equation}
\label{eq:op1}
  \begin{aligned} 
V(\mathbf{s})
= \inf_{a \in \mathcal{A}} \Big\{C(\mathbf s, a) + \rho \sum_{\mathbf{ j}\in\mathcal{S}} p(\mathbf j|\mathbf s,a) V(\mathbf j) \Big\},
\end{aligned}  
\end{equation}
where $V(\mathbf s)$ is the optimal expected total discounted cost for initial state $\mathbf s\in\mathcal{S}$ and $p(\mathbf j| \mathbf s,a)$ denotes the probability that the next state will be $\mathbf j\in\mathcal{S}$ given current state $\mathbf s$ and action $a\in \mathcal{A}$. Proposition \ref{theo:1} establishes the existence of a solution to the optimality equation~\eqref{eq:op1} and shows that an optimal stationary policy can be derived based on this solution. (The proofs of all analytical results are deferred to the Online Supplement.)

\begin{proposition}
\label{theo:1}
The optimality equation~\eqref{eq:op1} possesses a unique solution. Furthermore, if there exists a policy $\pi^*$ that selects the action for each state $\mathbf s$ to attain the infimum in \eqref{eq:op1}, then $\pi^*$ is an optimal stationary policy associated with the optimal value function $V(\mathbf s)$.
\end{proposition}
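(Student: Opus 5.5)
The plan is to treat \eqref{eq:op1} as the fixed-point equation of a Bellman operator and use a \emph{weighted} sup-norm contraction argument, as in the standard theory of discounted MDPs with countable state spaces and unbounded costs (e.g., Puterman, Section 6.10). The state space $\mathcal S$ is countable and the action set $\mathcal A=\{0,\ldots,A\}$ is finite. The one-period cost $C(\mathbf s,a)=a^2+c_b\max(0,k)+c_e\max(0,-k)$ is unbounded in $\mathbf s$, so the ordinary sup-norm argument does not apply directly. I will instead take the weight
\[
w(\mathbf s):=1+|k|+r+\sum_{m=1}^M n_m ,
\]
and work in the Banach space $\mathcal V_w$ of functions with $\|v\|_w:=\sup_{\mathbf s}|v(\mathbf s)|/w(\mathbf s)<\infty$.

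\textbf{Step 1 (cost bound).} Show $\sup_{a}|C(\mathbf s,a)|\le \kappa\, w(\mathbf s)$ with $\kappa=A^2+\max(c_b,c_e)$. This is immediate.

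\textbf{Step 2 (drift bound).} Show that for every $\mathbf s$ and $a$,
\[
\sum_{\mathbf j}p(\mathbf j|\mathbf s,a)w(\mathbf j)\le w(\mathbf s)+\lambda+A ,
\]
where $\lambda$ is the finite mean number of arrivals per period. The argument follows the transitions:
\begin{itemize}
\item new arrivals add at most $\lambda$ in expectation to $\sum_m n_m$;
\item each patient leaving the $n$-component changes $k$ by at most one, and a discharge reduces the total;
\item each new request adds one to $r$;
\item each completed bed moves one unit from $r$ into $k$, which changes $|k|$ by at most one.
\end{itemize}
Hence $|k'|+r'+\sum_m n'_m\le |k|+r+\sum_m n_m+(\text{arrivals})+a$.

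Since the drift is additive rather than multiplicative, I will apply the iterated form of Puterman's Assumption~6.10.1. By induction, $\mathbb E_{\mathbf s}^\pi[w(\mathbf s_{t+J})]\le w(\mathbf s)+J(\lambda+A)$. Choose $J$ with $\rho^J(1+J(\lambda+A))<1$; this is possible because $\rho<1$. Because $w\ge 1$, we have $w(\mathbf s)+J(\lambda+A)\le (1+J(\lambda+A))w(\mathbf s)$, which gives the $J$-step contraction condition. As an alternative I would replace $w$ by $w+\beta$ for a large constant $\beta$, so that the one-step ratio $(w+\beta+\lambda+A)/(w+\beta)$ is at most $1+(\lambda+A)/\beta$. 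Choosing $\beta$ so that $\rho(1+(\lambda+A)/\beta)<1$ gives a one-step contraction directly, which is cleaner.

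\textbf{Step 3 (contraction).} Define $(Lv)(\mathbf s)=\min_{a\in\mathcal A}\{C(\mathbf s,a)+\rho\sum_{\mathbf j}p(\mathbf j|\mathbf s,a)v(\mathbf j)\}$. By Steps 1--2, $L$ maps $\mathcal V_w$ into itself and
\[
\|Lu-Lv\|_w\le \rho(1+(\lambda+A)/\beta)\|u-v\|_w ,
\]
so $L$ is a contraction. Banach's fixed point theorem then gives a unique solution of \eqref{eq:op1} in $\mathcal V_w$. Uniqueness must be understood within this class, and I will note that $V$ itself lies in $\mathcal V_w$. Indeed, for any policy, $\sum_t\rho^t\mathbb E[C]\le\kappa\sum_t\rho^t\big(w(\mathbf s)+t(\lambda+A)\big)$, which is $O(w(\mathbf s))$.

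\textbf{Step 4 (optimal policy).} The infimum is over a finite set, so it is attained and a conserving decision rule $\pi^*$ exists. The standard argument then shows that $\pi^*$ is optimal:
\begin{itemize}
\item the policy evaluation operator $L_{\pi^*}$ is also a $w$-contraction;
\item its fixed point is the value of $\pi^*$, and this fixed point coincides with $V$ since $L_{\pi^*}V=LV=V$;
\item $V\le v^\pi$ for every (possibly history-dependent) $\pi$, which follows by iterating $L$ and letting the tail term $\rho^N\mathbb E[w(\mathbf s_N)]\to 0$. This limit holds by the linear growth bound of Step 2.
\end{itemize}

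The main obstacle is Step 2, the careful verification that the expected weight grows at most additively under every action. This requires checking that the transitions of $k$ (admissions increase it, completed beds decrease it) never inflate $|k|$ by more than the number of events. It also requires the finite-mean arrival assumption to hold uniformly.
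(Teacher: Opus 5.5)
Your proposal is correct and follows essentially the same route as the paper. The paper also uses a weight linear in the state, $w(\mathbf s)=A^2+c_{\max}(\sum_m n_m+|k|+|r|)$, checks the cost bound (Puterman's Assumption 6.10.1), and derives the same additive drift bound $\sum_{\mathbf j}p(\mathbf j|\mathbf s,a)w(\mathbf j)\le w(\mathbf s)+c_{\max}(A+\mathbb E[N])$; it then cites Proposition 6.10.5(a) and Theorem 6.10.4 of Puterman instead of carrying out the contraction and conserving-rule arguments by hand. Your shifted weight $w+\beta$, which gives a one-step contraction, is a valid minor alternative to the $J$-step condition that Proposition 6.10.5 supplies. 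Your remark that uniqueness holds only within $\mathcal V_w$ matches what Theorem 6.10.4 actually delivers, which the paper's statement leaves implicit.
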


Although the above MDP formulation is a significantly simplified representation of the problem, it still suffers from the curse of dimensionality. Even in the case with only two classes of patients, the state space can be still large for a medium-sized ED with around 40,000 visits per year. The transition probability $p(\cdot|\cdot,\cdot)$ gets even harder to calculate, and the computational burden grows exponentially as the number of patient classes increases. Consequently, solving the MDP using standard methods such as value iteration is computationally intensive if not impossible, and therefore, we next propose three heuristic approaches that can provide practical and effective solutions.

\subsection{MDP-Based Heuristic Policy (MHP)}
\label{S5:MDP:SS2}

To address the difficulty of solving the MDP defined by \eqref{eq:op1}, we next adopt an approach that approximates the cost-to-go function, i.e., the second term inside the infimum in \eqref{eq:op1}. In particular, we introduce an auxiliary finite-horizon, continuous-state MDP with a specially structured immediate cost, for which the optimal solution can be obtained analytically. We then substitute the cost-to-go function for this auxiliary problem into the original MDP, which leads to a heuristic policy we call the MDP-based heuristic policy (MHP). This approach is inspired by \cite{shi2021timing}, who used this method to dynamically determine which patients to discharge from a hospital ward.

We first introduce an auxiliary finite-horizon MDP that spans $G+1$ periods starting at time $t$, i.e., it spans periods $t, t+1, \ldots, t+G$, for fixed $t\geq 0$ and a pre-defined positive integer $G$. When there are $g \in \{0,1,\ldots, G\}$ periods left, the state of this MDP is defined by a scalar $\phi_{t,g} \in \mathbb{R}$, which quantifies the imbalance between the demand (ED patients requiring admission) and the supply (hospital beds that have been requested). A positive/negative value for $\phi_{t,g}$ indicates that there is excess demand/supply, and the intensity of imbalance grows with its absolute value. (We will later in this section provide a precise mapping between the state spaces of this auxiliary MDP and the original MDP of \eqref{eq:op1}.) In this auxiliary MDP, actions are unbounded and can take any real value. We also assume that the state transitions are generated as follows: if action $a_{t,g}$ is taken when there are $g$ periods left, then we have $\phi_{t,g-1} = \phi_{t,g} + X_{t,g} - a_{t,g}$, where $X_{t,g}$ is a random noise that represents the collective randomness in the ED arrival, ED workup, and hospital bed preparation processes. We assume that $\{X_{t,g}\}_{g = 1}^G$ is an exogenous and independent sequence that does not depend on the state and action history, but it can be non-stationary. For this specific finite-horizon MDP, the immediate cost after taking action $a_{t,g}$ in state $\phi_{t,g}$ is given by $\tilde C (\phi_{t,g}, a_{t,g}) := a_{t,g}^2 + C\phi_{t,g}^2 + D\phi_{t,g}$, where $C>0$ and $D$ are some real numbers. This cost structure resembles the generalized cost structure described in our original MDP formulation, with the distinction that it replaces the penalty for having boarding patients and prepared inpatient beds waiting with a quadratic function of $\phi_{t,g}$. Here, $C$ and $D$ quantify the relative weights of costs associated with boarding and inpatient-bed idling with respect to the penalty associated with variability in bed requests (which is normalized to one). 

Let $\tilde V_{t,g} ^{\pi_t} (\phi_{t,g})$ be the cost-to-go function, i.e., the sum of all future costs discounted to the current period under policy $\pi_t$ for this finite-horizon MDP that starts at time $t$ if the system is in state $\phi_{t,g}$ when there are $g$ periods left. Here, $\pi_t$ represents a policy that prescribes what action to take in each state at each decision epoch $t$ through $t+G$ when the decision problem starts at time $t$. We can then state the optimization problem as follows. For $g=0,1,\ldots, G$, we have
$$\tilde V_{t,g}^{\pi_t} (\phi_{t,g}) = \sum_{d=1}^{g} \rho^{g-d} \mathbb E[  \tilde C(\phi_{t,d}, a^{\pi_t}_{t,d})] +  \rho^{g}  \left( C \mathbb{E} [\phi_{t,0}^2] + D \mathbb{E} [\phi_{t,0}]\right),$$
\begin{equation}\tilde V_t^*(\phi_{t,G}) = \inf_{\pi_t} \tilde V_{t,G}^{\pi_t} (\phi_{t,G}), \label{opscs}
\end{equation}
where $ a_{t,d}^{\pi_t}$ denotes the action under policy $\pi_t$ when the decision problem starts at time $t$ and there are $d$ periods remaining. Similar to Proposition 3 of \cite{shi2021timing}, we can prove the following proposition for this finite-horizon problem with a special cost structure.
\begin{proposition}
\label{prop1}
For $t\geq 0$ and $g=1,2,\ldots,G$, the optimal action and the optimal value function for the finite-horizon problem~\eqref{opscs} at state $\phi_{t,g}$ are respectively given by 
\[\begin{aligned}
&\tilde a_{t,g}^* (\phi_{t,g}) = u_{t,g} \phi_{t,g} + w_{t,g},\\
&\tilde V_{t,g}^*(\phi_{t,g})=\eta_{t,g}\phi_{t,g}^2 + \theta_{t,g} \phi_{t,g} + \kappa_{t,g}, 
\end{aligned}\]   
where $u_{t,g}$, $w_{t,g}$, $\eta_{t,g}$, $\theta_{t,g}$, and $\kappa_{t,g}$ are some constants independent of state $\phi_{t,g}$. 
\end{proposition}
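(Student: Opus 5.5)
We will prove the statement by backward induction on the number of remaining periods $g$. This is the standard linear–quadratic control argument, as in Proposition~3 of \cite{shi2021timing}.

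The base case is $g=0$. There the cost-to-go is the terminal cost $\tilde V_{t,0}^*(\phi_{t,0}) = C\phi_{t,0}^2 + D\phi_{t,0}$, which is quadratic with $\eta_{t,0}=C>0$, $\theta_{t,0}=D$ and $\kappa_{t,0}=0$.

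For the induction step, suppose that $\tilde V_{t,g-1}^*(\phi)=\eta_{t,g-1}\phi^2+\theta_{t,g-1}\phi+\kappa_{t,g-1}$ with $\eta_{t,g-1}>0$. Write the Bellman recursion for the finite-horizon problem~\eqref{opscs}:
\[
\tilde V_{t,g}^*(\phi)=\inf_{a\in\mathbb R}\Big\{a^2+C\phi^2+D\phi+\rho\,\mathbb E\big[\tilde V_{t,g-1}^*(\phi+X_{t,g}-a)\big]\Big\}.
\]
Validity of this dynamic programming recursion for the exogenous independent noise is standard, since $X_{t,g}$ does not depend on the history and Markov policies suffice. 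We need $X_{t,g}$ to have a finite second moment, which we assume implicitly; it is reasonable because $X_{t,g}$ aggregates arrivals with finite means and variances. Expanding the expectation with $\mu=\mathbb E[X_{t,g}]$ and $\sigma^2=\mathbb E[X_{t,g}^2]$ gives
\[
\rho\eta_{t,g-1}\big[(\phi-a)^2+2(\phi-a)\mu+\sigma^2\big]+\rho\theta_{t,g-1}(\phi-a+\mu)+\rho\kappa_{t,g-1}.
\]
The objective is therefore a quadratic in $a$ with leading coefficient $1+\rho\eta_{t,g-1}>0$. It is strictly convex, so the unique minimizer follows from the first-order condition:
\[
a^*=\frac{\rho\eta_{t,g-1}(\phi+\mu)+\rho\theta_{t,g-1}/2}{1+\rho\eta_{t,g-1}}.
\]
This is affine in $\phi$, and it identifies $u_{t,g}=\rho\eta_{t,g-1}/(1+\rho\eta_{t,g-1})$ and $w_{t,g}$ as the remaining constant.

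Substituting $a^*$ back gives a quadratic in $\phi$, which yields recursions for $\eta_{t,g}$, $\theta_{t,g}$ and $\kappa_{t,g}$. The only point needing care is preserving the induction hypothesis $\eta_{t,g}>0$, which guarantees convexity at the next step. Minimizing $a^2+\rho\eta(\phi-a)^2$ over $a$ gives $\frac{\rho\eta}{1+\rho\eta}\phi^2$, so
\[
\eta_{t,g}=C+\frac{\rho\eta_{t,g-1}}{1+\rho\eta_{t,g-1}}>0.
\]
Carrying out the substitution carefully gives $\theta_{t,g}$ and $\kappa_{t,g}$ as constants that depend on $\mu$, $\sigma^2$, $\rho$, $D$ and the previous coefficients, but not on $\phi$.

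Finally, because the minimizer is attained at every stage, the Markov policy that applies $\tilde a_{t,g}^*$ at each $g$ is optimal among all policies $\pi_t$. This completes the induction. The algebra is routine. The only real obstacle is the technical justification that the infimum over general history-dependent policies equals the dynamic programming value, for which we rely on the finite horizon and the independence of the noise.
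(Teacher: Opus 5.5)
Your proposal is correct and follows the same backward induction as the paper: the objective is treated as a quadratic in $a$, the first-order condition is solved, and the resulting recursions $u_{t,g}=\eta_{t,g-1}/(\eta_{t,g-1}+\rho^{-1})$ and $\eta_{t,g}=u_{t,g}+C$ are obtained from the terminal values $\eta_{t,0}=C$, $\theta_{t,0}=D$, $\kappa_{t,0}=0$. Your extra points simply make explicit what the paper uses tacitly: the check that $\eta_{t,g}>0$ (so $1+\rho\eta_{t,g-1}>0$ and the quadratic is strictly convex), the finite second moment of $X_{t,g}$, and the reduction from history-dependent to Markov policies.
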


Proposition \ref{prop1} shows that the optimal action for this special MDP is linear and the value function is quadratic in state.  We can now use this solution to approximate $a^*(\mathbf s)$, i.e., the optimal action to be taken in state $\mathbf s$, and the value function $V(\cdot)$ for the infinite-horizon discounted MDP with optimality equation \eqref{eq:op1}. For this approximation, we first need to map the state spaces of the two MDPs. Let $\phi_{t, G}$ be the difference between the expected number of patients needing admission and the number of requested hospital beds that can be utilized at time $t$, i.e., set $\phi_{t, G} :=\sum_{i=1} ^ M \alpha_i  n_{i,t} + k_{t} - r_{t}$.  For $t\geq 1$, we can then approximate the optimal action in state $\mathbf s_t$ for the original MDP \eqref{eq:op1} as follows:
\begin{align}
     &a_t^* (\mathbf s_t)       = \arg\min_{a\in \mathcal{A}} \Big\{ a^2\hspace{-1pt} +\hspace{-1pt} \sum_{\mathbf s_{t+1}} p(\mathbf s_{t+1} | \mathbf s_{t}, a) \Big(\rho C_{b,e}(\mathbf{s}_{t+1}) \nonumber\\
     & + \tilde \alpha_{t,G} \phi_{t+1,G}^2 + \tilde \beta_{t,G} \phi_{t+1,G}\Big) \Big\},     \label{eq:3}
\end{align}
where $\tilde \alpha_{t,G}$ and $\tilde \beta_{t,G}$ are two constants that do not depend on any actions and states. We provide the derivation of Equation \eqref{eq:3} in the Online Supplement.

To use the above approximation in developing a heuristic solution, instead of tuning $C$, $D$, and $G$, and calculating $\tilde \alpha_{t,G}$ and $\tilde \beta_{t,G}$ recursively, we ignore the  dependency on time parameters $t$ and $G$, and directly tune for two constants $\tilde \alpha$ and $\tilde \beta$ to replace $\tilde \alpha_{t,G}$ and $\tilde \beta_{t,G}$ in Equation~\eqref{eq:3}. Moreover, since calculating the transition probabilities  $p(\cdot|\cdot, \cdot)$ can be difficult for high dimensional problems, we use a Monte Carlo simulation approach, where we randomly sample future states to evaluate the summation in Equation~\eqref{eq:3}. In implementation, the Monte Carlo simulation can incorporate more realistic features, such as the nonstationary arrival process used in our numerical study in Section~\ref{S7:NS}.

\subsection{Newsvendor Heuristic Policy (NHP)}
\label{S5:MDP:SS3}

With the goal of developing an interpretable heuristic, at any given decision epoch, suppose that the problem is going to be over after a finite horizon of $\psi$ periods, no new patients will arrive during this time, and hospital beds can be request only at the current decision epoch. Obviously, none of these assumptions are true in reality but prior research has shown that such a myopic approach may lead to successful solutions for a variety of sequential decision-making problems. Especially in supply chain literature, approximating a multi-period problem with a single-period newsvendor problem is a commonly used approach that works well~\citep{porteus}. We next follow a similar approach to develop a heuristic policy, which we call the Newsvendor Heuristic Policy (NHP).  

As in Section~\ref{S5:MDP:SS1}, we denote the state observed before making the request decision at decision epoch $t$ as $\mathbf s_t=(n_{1,t},\ldots,n_{M,t}, k_t, r_t)$. For this newsvendor formulation,  we define the random ``demand'' as the number of patients in the ED who will have an admission decision during the next $\psi$ periods, adjusted by subtracting the number of previously requested hospital beds that will become available during the same timeframe. When demand is defined this way, it can take values between $-r_t$ and $n_t$, where positive values indicate that there will be more patients receiving an admission decision than hospital beds that will be ready until the end of the planning horizon, and negative values indicate the opposite. 

To develop an easy-to-implement procedure to model and then estimate demand as a function of state $\mathbf{s}_t$, we assume that the demand is a continuous random variable with cumulative distribution function $F_{\mathbf s_t}(\cdot)$. For analytical tractability, we assume that $F_{\mathbf s_t}(\cdot)$ is differentiable over its domain with probability density function $f_{\mathbf s_t}(\cdot)$. We initially assume that this demand distribution is known for any state $\mathbf s_t$ to derive an expression for the optimal number of hospital beds to request. Later in this section, we discuss how to select a suitable demand distribution for this heuristic solution. 

Consistent with our original formulation, we consider three types of costs for this myopic approach. First, a requesting cost is incurred immediately after a decision, which is a quadratic function of the action. After the action is taken and the demand materializes, we incur either an {\em underage} cost of $\tilde{c}_b$ for each boarding patient or an {\em overage} cost of $\tilde{c}_e$ for each idle hospital bed. Since our goal is to reduce boarding for ED patients and idleness for prepared hospital beds, we assume that $\tilde{c}_b \geq 0$, $\tilde{c}_e \geq 0$, and $\tilde{c}_b + \tilde{c}_e >0$. 

In this newsvendor formulation, the inventory level is defined as the number of hospital beds that are ready to be used by ED patients. Hence, in state $\mathbf s_t=(n_{1,t},\ldots,n_{M,t}, k_t, r_t)$, the inventory level right before the action is taken is $-k_t$, and if the decision is to request $y\geq 0$ beds, then the inventory level right after replenishment will be $y-k_t$, assuming that the newly requested beds will become available instantaneously. Let $\tilde{C}(\mathbf s_t,y)$ denote the total cost incurred when the decision is to request $y\geq 0$ beds when the state is $\mathbf s_t$ for this newsvendor formulation. Then, we can write the expected total cost incurred for any $\mathbf s_t\in\mathcal{S}$ and action $y \geq 0$ as follows:
\begin{align}
     \mathbb {E}\left[\tilde{C}(\mathbf s_t, y)\right]=y^2 &+  \tilde{c}_b \int_{y-k_t}^\infty  (x-y+k_t) f_{\mathbf s_t}(x)dx \nonumber \\
  &+ \tilde{c}_e \int_{-\infty}^{y-k_t}  (y-k_t-x) f_{\mathbf s_t}(x)dx.   \label{eq:5}
\end{align}

\begin{proposition}
\label{prop2}
The optimal value of $y$ over the set of all non-negative real numbers that minimizes \eqref{eq:5} in state $\mathbf s_t$ is given by ${y}^* := \max(\tilde y+k_t, 0)$, where $\tilde y$ satisfies
\begin{equation}
\label{equ_nv_op}
F_{\mathbf s_t}(\tilde{y}) + \frac{2\tilde{y}}{\tilde{c}_b+ \tilde{c}_e} = \frac{\tilde{c}_b -2k_t}{\tilde{c}_b + \tilde{c}_e}.
\end{equation}
\end{proposition}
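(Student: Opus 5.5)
The plan is to treat \eqref{eq:5} as a one-dimensional convex minimization in $y$. We differentiate it, show that the first-order condition has a unique root, and then project that root onto $[0,\infty)$.

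\textbf{Reduction.} Write $g(y):=\mathbb{E}[\tilde C(\mathbf s_t,y)]$ and substitute $z:=y-k_t$. The two integrals in \eqref{eq:5} become the standard newsvendor terms $\tilde c_b\,\mathbb{E}[(D-z)^+]$ and $\tilde c_e\,\mathbb{E}[(z-D)^+]$, where $D\sim F_{\mathbf s_t}$. By construction the demand is supported on the bounded interval $[-r_t,n_t]$, so both expectations are finite for every $z$. Both are also convex in $z$, since they are expectations of convex piecewise-linear functions. Because $y^2$ is strictly convex, $g$ is strictly convex on $\mathbb{R}$.

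\textbf{Derivative and first-order condition.} Differentiating under the integral sign (Leibniz rule; the boundary terms vanish because the integrands are zero at $x=y-k_t$) gives
\[
g'(y)=2y-\tilde c_b\bigl(1-F_{\mathbf s_t}(y-k_t)\bigr)+\tilde c_e F_{\mathbf s_t}(y-k_t)=2y-\tilde c_b+(\tilde c_b+\tilde c_e)F_{\mathbf s_t}(y-k_t).
\]
Differentiating once more gives $g''(y)=2+(\tilde c_b+\tilde c_e)f_{\mathbf s_t}(y-k_t)\ge 2>0$, which confirms strict convexity. Setting $g'(y)=0$, writing $\tilde y=y-k_t$ and dividing by $\tilde c_b+\tilde c_e>0$ yields exactly \eqref{equ_nv_op}.

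\textbf{Existence and uniqueness of $\tilde y$.} Define $h(z):=F_{\mathbf s_t}(z)+2z/(\tilde c_b+\tilde c_e)$. This function is continuous, since $F_{\mathbf s_t}$ is differentiable. It is strictly increasing, being a nondecreasing function plus a strictly increasing linear function. Since $F_{\mathbf s_t}$ takes values in $[0,1]$, $h(z)\to-\infty$ as $z\to-\infty$ and $h(z)\to+\infty$ as $z\to\infty$. By the intermediate value theorem there is a unique $\tilde y$ with $h(\tilde y)=(\tilde c_b-2k_t)/(\tilde c_b+\tilde c_e)$. Hence $\tilde y+k_t$ is the unique unconstrained minimizer of $g$ over $\mathbb{R}$.

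\textbf{Imposing $y\ge 0$.} Since $g$ is strictly convex, there are two cases. If $\tilde y+k_t\ge 0$, it is the constrained minimizer. Otherwise $g'(y)>0$ for all $y\ge 0$, so $g$ is increasing on $[0,\infty)$ and the minimizer is $y=0$. Together these give $y^*=\max(\tilde y+k_t,0)$.

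The main point needing care is justifying the interchange of differentiation and integration, for example at points where $f_{\mathbf s_t}$ might be discontinuous. I would handle this by not differentiating twice: instead, write $\mathbb{E}[(D-z)^+]=\int_z^\infty(1-F_{\mathbf s_t}(x))\,dx$ and $\mathbb{E}[(z-D)^+]=\int_{-\infty}^z F_{\mathbf s_t}(x)\,dx$, which are finite because the support is bounded. The fundamental theorem of calculus then gives $g'$ directly. Strict convexity follows from $g'$ being strictly increasing, so only monotonicity of $F_{\mathbf s_t}$ is needed, not existence of $g''$.
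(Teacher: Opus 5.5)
Your proposal is correct and follows the paper's route. Like the paper, you differentiate \eqref{eq:5}, use $g''(y)=2+(\tilde c_b+\tilde c_e)f_{\mathbf s_t}(y-k_t)>0$ for convexity, set $g'(y)=0$ with $\tilde y=y-k_t$ to obtain \eqref{equ_nv_op}, and then project onto $y\ge 0$. Your extra arguments fill in details the paper leaves implicit rather than change the approach: existence and uniqueness of $\tilde y$ from strict monotonicity of $F_{\mathbf s_t}(z)+2z/(\tilde c_b+\tilde c_e)$, the explicit two-case treatment of the constraint, and the tail-integral justification of the derivative.
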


\begin{corollary}
\label{cor.1}
The optimal value $y^*$ that minimizes \eqref{eq:5} is equal to zero for all states with $F_{\mathbf s_t}(-k_t)\geq \tilde{c}_b/(\tilde{c}_b + \tilde{c}_e).$ Furthermore, $y^*$ never exceeds $\tilde{c}_b/2$.
\end{corollary}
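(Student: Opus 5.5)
The plan is to work directly with the first-order condition behind Proposition~\ref{prop2} rather than with equation~\eqref{equ_nv_op}, since both claims follow from the sign of the derivative of the expected cost. First, I would differentiate \eqref{eq:5} with respect to $y$ using Leibniz's rule (the boundary terms vanish because the integrands are zero at $x=y-k_t$). Writing $h(y):=\mathbb{E}[\tilde C(\mathbf s_t,y)]$, this gives
\[
h'(y)=2y-\tilde c_b\bigl(1-F_{\mathbf s_t}(y-k_t)\bigr)+\tilde c_e F_{\mathbf s_t}(y-k_t)=2y-\tilde c_b+(\tilde c_b+\tilde c_e)F_{\mathbf s_t}(y-k_t).
\]
Because $F_{\mathbf s_t}$ is nondecreasing and $2y$ is strictly increasing, $h'$ is strictly increasing. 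Hence $h$ is strictly convex on $[0,\infty)$, and its minimizer over $y\ge 0$ is characterized as follows: $y^*=0$ if $h'(0)\ge 0$, and otherwise $y^*>0$ is the unique root of $h'(y^*)=0$. This is consistent with Proposition~\ref{prop2}. Substituting $\tilde y=y-k_t$ into $h'(y)=0$ gives $F_{\mathbf s_t}(\tilde y)+2(\tilde y+k_t)/(\tilde c_b+\tilde c_e)=\tilde c_b/(\tilde c_b+\tilde c_e)$, which is exactly \eqref{equ_nv_op}.

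For the first claim, I would note that $h'(0)=-\tilde c_b+(\tilde c_b+\tilde c_e)F_{\mathbf s_t}(-k_t)$. Since $\tilde c_b+\tilde c_e>0$, the inequality $h'(0)\ge 0$ holds exactly when $F_{\mathbf s_t}(-k_t)\ge \tilde c_b/(\tilde c_b+\tilde c_e)$. By the monotonicity of $h'$, we then have $h'(y)>0$ for all $y>0$, so $h$ is increasing on $[0,\infty)$ and $y^*=0$.

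For the second claim, I would split into two cases. If $y^*=0$, the bound holds trivially because $\tilde c_b\ge 0$. If $y^*>0$, then $h'(y^*)=0$, which rearranges to
\[
2y^*=\tilde c_b-(\tilde c_b+\tilde c_e)F_{\mathbf s_t}(y^*-k_t)\le \tilde c_b,
\]
since $F_{\mathbf s_t}\ge 0$ and $\tilde c_b+\tilde c_e>0$. This gives $y^*\le \tilde c_b/2$.

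The only delicate point is justifying the differentiation and the existence of a root when $h'(0)<0$. Differentiability is given by the assumption that $F_{\mathbf s_t}$ is differentiable. For existence of a root, $h'$ is continuous, and $h'(y)\ge 2y-\tilde c_b\to\infty$ as $y\to\infty$, so the intermediate value theorem supplies a root. Alternatively, I could simply invoke Proposition~\ref{prop2} for this step.
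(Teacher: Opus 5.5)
Your proof is correct and is essentially the paper's argument: $h'(y)$ equals $(\tilde c_b+\tilde c_e)$ times the left-hand side minus the right-hand side of \eqref{equ_nv_op} evaluated at $\tilde y=y-k_t$. So your sign test on $h'(0)$ is the paper's monotonicity comparison at $\tilde y=-k_t$, and your bound $2y^*\le\tilde c_b$ from $F_{\mathbf s_t}\ge 0$ is the paper's contradiction argument stated directly (your version additionally justifies existence and uniqueness of the root instead of taking it from Proposition~\ref{prop2}).
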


Proposition \ref{prop2} provides the optimal requesting quantity for the newsvendor formulation with expected total cost given by $\eqref{eq:5}$ and known demand distribution $F_{\mathbf s_t}$. Corollary \ref{cor.1} shows that no new beds should be requested if $-k_t$, the inventory level, is sufficiently large. The same corollary also shows that the optimal requesting quantity should not exceed $\tilde{c}_b/2$ regardless of the demand distribution. This shows that the quadratic requesting cost in \eqref{eq:5} works as intended to curb the request amount to control for variation in request sizes. It is intuitive that this upper bound grows with $\tilde{c}_b$ because if the cost of boarding a patient grows then there is reason to request more beds early on.  

\begin{remark}
\textnormal{
For discontinuous $F_{\mathbf s_t}$,   the optimal solution in Proposition \ref{prop2} is the smallest value of $\tilde{y}$ that makes the left-hand side of \eqref{equ_nv_op} greater than or equal to the right-hand side. For a linear requesting cost, i.e., when the first term in Equation~\eqref{eq:5} is replaced with $c_{\ell}y$ for $c_{\ell}\geq 0$,  Equation~\eqref{equ_nv_op} reduces to the standard newsvendor solution 
$F_{\mathbf s_t}(\tilde{y})=(\tilde{c}_b -c_{\ell})/ (\tilde{c}_b + \tilde{c}_e)$, see, e.g., \cite{porteus}.}
\end{remark}

To obtain a heuristic policy using Proposition \ref{prop2}, we need a mechanism to estimate $F_{\mathbf{s}_t}$. Note that any patient who receives an admission decision over the next $\psi$ periods increases the demand by one, whereas any requested hospital bed that becomes ready during this time decreases it by one. Even when admission decisions and bed preparations occur independently of one another and the past, it is not possible to obtain a simple closed-form expression for the demand distribution. We can however obtain its mean and variance in state $\mathbf s_t$ as follows:  
\begin{align}
        \textrm{Mean} &= \sum_{m=1}^M n_{m,t}p_{m} \alpha_m - r_tq, \label{eq:mean}\\
        \textrm{Variance} &= \sum_{m=1}^M n_{m,t}p_{m}\alpha_m (1-\alpha_mp_{m}) + r_t q(1-q),\label{eq:var}    
\end{align}
where $p_{m}$ is the probability that a class-$m$ patient will receive a disposition decision in the next $\psi$ periods and $q$ is the probability that a requested hospital bed will be ready in the next $\psi$ periods. Assuming further that the time to disposition for class $m$ is exponentially distributed with rate $\mu_m$ and that the preparation time for a requested bed follows an exponential distribution with rate $\lambda$, we set $p_m=1-e^{-\mu_{m}\psi}$ and $q=1-e^{-\lambda\psi}$. We then approximate the demand distribution with a convenient theoretical distribution after matching the mean and variance to \eqref{eq:mean} and \eqref{eq:var}, respectively.   

We consider two theoretical distributions to approximate the demand distribution. Specifically,  we provide a closed-form expression for the solution to Equation~\eqref{equ_nv_op} when the demand follows a uniform and a triangular distribution in Propositions \ref{prop3} and \ref{prop4}, respectively.

\begin{proposition}
\label{prop3}
If the demand follows a uniform distribution over $[\delta - \xi,\delta + \xi]$ with $\xi\geq 0$, then $y^*$ that minimizes \eqref{eq:5} in state $\mathbf s_t$ is given by 

\begin{equation*}
y^* = 
\left\{
\begin{array}{l}
0, \mbox{ if } k_t\leq - \delta+\frac{\xi(\tilde{c}_e-\tilde{c}_b)}{\tilde{c}_b+\tilde{c}_e},\\
\frac{\tilde{c}_b}{2}, \textrm{ if}  \ k_t> \frac{\tilde{c}_b}{2} -\delta+\xi,\\
\frac{\tilde{c}_b(\delta+\xi)+\tilde{c}_e(\delta-\xi)+(\tilde{c}_b+\tilde{c}_e)k_t}{\tilde{c}_b+\tilde{c}_e+4\xi},\mbox{ otherwise.}
\end{array}
\right.
\end{equation*}
\end{proposition}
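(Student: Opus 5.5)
We would derive the formula directly from Proposition~\ref{prop2}. Write $c:=\tilde c_b+\tilde c_e>0$ and let $z$ stand for $\tilde y$. Proposition~\ref{prop2} says that $y^*=\max(z+k_t,0)$, where $z$ solves
\[
h(z):=F_{\mathbf s_t}(z)+\frac{2z}{c}=\frac{\tilde c_b-2k_t}{c}.
\]
For the uniform law on $[\delta-\xi,\delta+\xi]$ with $\xi>0$, $F_{\mathbf s_t}$ is $0$ on $(-\infty,\delta-\xi]$, equals $(z-\delta+\xi)/(2\xi)$ on $[\delta-\xi,\delta+\xi]$, and is $1$ beyond. Hence $h$ is continuous, piecewise linear and strictly increasing with slope at least $2/c$. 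So the root $z$ exists and is unique, and the plan is to locate it among the three linear pieces.

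\emph{Step 1 (lower piece, $z\le\delta-\xi$).} Here $F=0$, so $z=\tilde c_b/2-k_t$ and $y^*=\max(\tilde c_b/2,0)=\tilde c_b/2$. This piece is the right one exactly when $\tilde c_b/2-k_t\le\delta-\xi$, i.e.\ $k_t\ge \tilde c_b/2-\delta+\xi$. This gives the second case of the statement; at equality it agrees with the middle formula, which we will check.

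\emph{Step 2 (middle piece).} Solving $(z-\delta+\xi)/(2\xi)+2z/c=(\tilde c_b-2k_t)/c$ gives $z(c+4\xi)=2\xi\tilde c_b-4\xi k_t+c(\delta-\xi)$. Then
\[
z+k_t=\frac{\tilde c_b(\delta+\xi)+\tilde c_e(\delta-\xi)+ck_t}{c+4\xi},
\]
which is the third expression, provided it is nonnegative.

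\emph{Step 3 (when $y^*=0$).} By Corollary~\ref{cor.1}, equivalently by monotonicity of $h$, $y^*=0$ iff $z\le -k_t$ iff $h(-k_t)\ge(\tilde c_b-2k_t)/c$ iff $F_{\mathbf s_t}(-k_t)\ge \tilde c_b/c$. For the uniform law, and using $\tilde c_b/c\le 1$, this reads $(-k_t-\delta+\xi)/(2\xi)\ge\tilde c_b/c$, i.e.\ $k_t\le-\delta+\xi(\tilde c_e-\tilde c_b)/c$. This is the first case. It also covers the whole upper piece: when $F=1$ the equation forces $2(z+k_t)=-\tilde c_e\le0$, so $y^*=0$ there.

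\emph{Step 4 (consistency).} We then check that the two thresholds are ordered, $-\delta+\xi(\tilde c_e-\tilde c_b)/c\le \tilde c_b/2-\delta+\xi$, which holds since $\xi(\tilde c_e-\tilde c_b)/c\le\xi$. We also check that the middle formula is continuous at both thresholds, equal to $0$ and to $\tilde c_b/2$ respectively. This makes the three cases exhaustive and consistent, and the strict versus non-strict inequalities in the statement immaterial.

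The main obstacle is this bookkeeping: matching the breakpoints of $h$ with the truncation at $0$ in $\max(z+k_t,0)$, and correctly absorbing the upper piece into the $y^*=0$ case. Finally, the degenerate case $\xi=0$ (a point mass at $\delta$) is handled with the Remark following Corollary~\ref{cor.1}, taking the smallest $z$ with $h(z)\ge(\tilde c_b-2k_t)/c$. Checking the three regimes $k_t\le-\delta$, $-\delta<k_t\le\tilde c_b/2-\delta$ and $k_t>\tilde c_b/2-\delta$ directly, we expect to recover $y^*=0$, $y^*=\delta+k_t$ and $y^*=\tilde c_b/2$, which is the stated formula at $\xi=0$.
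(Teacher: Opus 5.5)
Your proposal is correct and follows essentially the same route as the paper's proof. Like the paper, you use Corollary~\ref{cor.1} to find the region where $y^*=0$, then solve \eqref{equ_nv_op} on the linear middle piece of the uniform CDF and on the piece where $F_{\mathbf s_t}=0$ to get the other two cases; your extra bookkeeping (uniqueness of the root, absorbing the $F_{\mathbf s_t}=1$ piece into the $y^*=0$ case, continuity at the two thresholds, and the degenerate case $\xi=0$ via the Remark) fills in details that the paper's proof leaves implicit.
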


\begin{proposition}
\label{prop4}
If the demand follows a symmetric triangular distribution with mode $\delta$ and range $[\delta - \xi,\delta + \xi]$ with $\xi\geq 0$, then $y^*$ that minimizes \eqref{eq:5} in state $\mathbf s_t$ is given as follows. For $\tilde{c}_e\leq \tilde{c}_b$, 
\begin{equation*}
    y^* = 
\left\{
\begin{array}{l}
0, \textrm{ if } k_t\leq \xi\sqrt{\frac{2\tilde{c}_e}{\tilde{c}_b+\tilde{c}_e}} - \delta-\xi,\\
 \delta + \xi +k_t+ \frac{2 \xi^2-\xi \sqrt{ 4\xi^2 +2(\tilde{c}_e+\tilde{c}_b)(\tilde{c}_e + 2\xi  + \Delta_t)}}{\tilde{c}_b + \tilde{c}_e},\\  \mbox{ if}\ \xi\sqrt{\frac{2\tilde{c}_e}{\tilde{c}_b+\tilde{c}_e}} -\delta-\xi < k_t \leq \frac{\tilde{c}_b-\tilde{c}_e}{4} - \delta,\\
\delta - \xi +k_t -\frac{2 \xi^2-\xi \sqrt{4\xi^2 +2(\tilde{c}_e+\tilde{c}_b)(\tilde{c}_b + 2\xi  - \Delta_t)}}{\tilde{c}_b + \tilde{c}_e},\\ \mbox{ if}\ \frac{\tilde{c}_b-\tilde{c}_e}{4} - \delta < k_t \leq \frac{\tilde{c}_b}{2} - \delta+\xi,\\
\frac{\tilde{c}_b}{2}, \textrm{ if}  \ \frac{\tilde{c}_b}{2} -\delta+\xi < k_t,
\end{array}
\right.
\end{equation*}
where $\Delta_t=2(\delta+k_t)$, and for $\tilde{c}_e> \tilde{c}_b$, 
\begin{equation*}
y^* = 
\left\{
\begin{array}{l}
0, \textrm{ if } k_t\leq -\xi\sqrt{\frac{2\tilde{c}_b}{\tilde{c}_b+\tilde{c}_e}} - \delta+\xi,\\
\frac{\tilde{c}_b}{2}, \textrm{ if}  \ k_t>\frac{\tilde{c}_b}{2} - \delta+\xi,\\
\delta - \xi +k_t\\ \hspace{-2pt}-\frac{2 \xi^2-\xi \sqrt{4\xi^2 +2(\tilde{c}_e+\tilde{c}_b)(\tilde{c}_b + 2\xi  - \Delta_t)}}{\tilde{c}_b + \tilde{c}_e}, \mbox{ otherwise.}
\end{array}
\right.
\end{equation*}
\end{proposition}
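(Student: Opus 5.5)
The plan is to specialize Proposition~\ref{prop2} to the symmetric triangular demand and solve \eqref{equ_nv_op} piece by piece. Write $S:=\tilde c_b+\tilde c_e$ and define
\[
g(\tilde y):=F_{\mathbf s_t}(\tilde y)+\frac{2\tilde y}{S}.
\]
For the triangular law with mode $\delta$ and support $[\delta-\xi,\delta+\xi]$, the CDF is
\[
F_{\mathbf s_t}(x)=
\begin{cases}
0, & x<\delta-\xi,\\[2pt]
\dfrac{(x-\delta+\xi)^2}{2\xi^2}, & \delta-\xi\le x\le \delta,\\[6pt]
1-\dfrac{(\delta+\xi-x)^2}{2\xi^2}, & \delta\le x\le \delta+\xi,\\[6pt]
1, & x>\delta+\xi.
\end{cases}
\]
This $F_{\mathbf s_t}$ is continuous and nondecreasing, so $g$ is continuous and strictly increasing and runs from $-\infty$ to $\infty$. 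Hence \eqref{equ_nv_op} has a unique root $\tilde y$, and by Proposition~\ref{prop2} the answer is $y^*=\max(\tilde y+k_t,0)$.

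First, I locate the root among the four pieces of $F_{\mathbf s_t}$. Monotonicity of $g$ reduces each location to a comparison of $g$ at a breakpoint with the right-hand side $(\tilde c_b-2k_t)/S$.
\begin{itemize}
\item \emph{Root below $\delta-\xi$.} There $F_{\mathbf s_t}=0$, so $\tilde y=\tilde c_b/2-k_t$ and $y^*=\tilde c_b/2$. The condition $\tilde y<\delta-\xi$ is exactly $k_t>\tilde c_b/2-\delta+\xi$, which gives the last case.
\item \emph{Root above $\delta+\xi$.} There $F_{\mathbf s_t}=1$, so $\tilde y+k_t=-\tilde c_e/2\le 0$ and $y^*=0$.
\item \emph{Root in the left or right half.} The root lies in the right half $[\delta,\delta+\xi]$ iff $g(\delta)\le(\tilde c_b-2k_t)/S$. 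This reads $S/2+2\delta\le \tilde c_b-2k_t$, i.e.\ $k_t\le(\tilde c_b-\tilde c_e)/4-\delta$, which produces the breakpoint appearing in the statement. Otherwise the root lies in the left half.
\end{itemize}

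Next, on each half I solve the quadratic. On the left piece, multiplying by $2\xi^2 S$ gives
\[
S(\tilde y-\delta+\xi)^2+4\xi^2\tilde y-2\xi^2(\tilde c_b-2k_t)=0 .
\]
Substituting $z=\tilde y-\delta+\xi\ge 0$ and using $\Delta_t=2(\delta+k_t)$ turns this into
\[
Sz^2+4\xi^2 z-2\xi^2(\tilde c_b+2\xi-\Delta_t)=0 .
\]
Its roots are $z=\bigl(-2\xi^2\pm\xi\sqrt{4\xi^2+2S(\tilde c_b+2\xi-\Delta_t)}\bigr)/S$. I select the $+$ root, since $z\ge0$ is required, and then $y^*=\tilde y+k_t$ gives the displayed third-case formula. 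On the right piece, the analogous substitution $w=\delta+\xi-\tilde y\ge0$ yields
\[
Sw^2-4\xi^2 w-2\xi^2(\tilde c_e+2\xi+\Delta_t)=0 .
\]
Taking the root with $w\ge0$ and small enough that $\tilde y\ge\delta$ gives the second-case formula.

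The zero region comes from Corollary~\ref{cor.1}: $y^*=0$ iff $F_{\mathbf s_t}(-k_t)\ge \tilde c_b/S$.
\begin{itemize}
\item If $\tilde c_e\le\tilde c_b$, then $\tilde c_b/S\ge 1/2$, so $-k_t$ must lie in the right half or beyond. Then $1-(\delta+\xi+k_t)^2/(2\xi^2)\ge \tilde c_b/S$ becomes $k_t\le \xi\sqrt{2\tilde c_e/S}-\delta-\xi$.
\item If $\tilde c_e>\tilde c_b$, the left-half formula gives $k_t\le -\xi\sqrt{2\tilde c_b/S}-\delta+\xi$.
\end{itemize}

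The main obstacle is bookkeeping: checking that these thresholds are correctly ordered so that the cases partition the real line. When $\tilde c_e\le\tilde c_b$, I need the zero threshold to be at most $(\tilde c_b-\tilde c_e)/4-\delta$, and that in turn to be at most $\tilde c_b/2-\delta+\xi$. When $\tilde c_e>\tilde c_b$, I must show the right-half case is empty once $y^*\ge 0$ is imposed. The reason to expect this is that any root in the right half then satisfies $\tilde y+k_t=(\tilde c_b-S F_{\mathbf s_t}(\tilde y))/2\le(\tilde c_b-S/2)/2<0$, so it is cut off by the max. Each check reduces to an elementary inequality such as $\xi\sqrt{2\tilde c_e/S}\le\xi$. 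I would verify these using monotonicity of $g$ rather than direct algebra, and also confirm the branch choice of the square roots (for example the limit $\xi\to0$ should recover the deterministic solution).
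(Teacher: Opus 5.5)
Your route is the paper's: Corollary~\ref{cor.1} for the zero region, then locating the root of \eqref{equ_nv_op} on a piece of $F_{\mathbf s_t}$ by comparing $g$ at the breakpoints $\delta-\xi,\delta,\delta+\xi$ with the right-hand side, then solving the quadratic on that piece. Your version is in fact more careful than the paper's. Monotonicity of $g$ justifies what the paper only asserts (``$\tilde y$ must be in $(\delta,\delta+\xi]$''). Your bound $\tilde y+k_t\le(\tilde c_b-\tilde c_e)/4<0$ rules out a right-half case when $\tilde c_e>\tilde c_b$, which the paper never addresses. Your left-piece computation also correctly produces $\Delta_t$, whereas the paper's proof writes $2\Delta_t$, a typo.

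There is one algebra slip, and as written it breaks the second case. On the right piece, substitute $\tilde y=\delta+\xi-w$ into $1-w^2/(2\xi^2)+2\tilde y/S=(\tilde c_b-2k_t)/S$ and multiply by $-2\xi^2S$. This gives
\[
Sw^2+4\xi^2w-2\xi^2(\tilde c_e+2\xi+\Delta_t)=0,
\]
with $+4\xi^2w$, not $-4\xi^2w$. With your sign, the unique nonnegative root is $w=\bigl(2\xi^2+\xi\sqrt{4\xi^2+2S(\tilde c_e+2\xi+\Delta_t)}\bigr)/S$. This does not reproduce the stated formula, and it can violate $\tilde y\ge\delta$: for $\tilde c_b=3$, $\tilde c_e=1$, $\delta=k_t=0$, $\xi=1$ it gives $w\approx1.82>\xi$, so no root qualifies.

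With the correct sign, note that $k_t>\xi\sqrt{2\tilde c_e/S}-\delta-\xi$ gives $\tilde c_e+2\xi+\Delta_t>0$. The constant term is then negative, so there is exactly one nonnegative root, $w=\bigl(-2\xi^2+\xi\sqrt{4\xi^2+2S(\tilde c_e+2\xi+\Delta_t)}\bigr)/S$ (about $0.82$ in the example). Then $y^*=\delta+\xi-w+k_t$ is exactly the second case. The requirement $w\le\xi$ comes from $k_t\le(\tilde c_b-\tilde c_e)/4-\delta$, not from choosing between roots.
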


Demand in our formulation is driven by the aggregation of multiple random components, including patient admissions and bed availability, and is therefore likely to be better captured by a bell-shaped distribution. A normal approximation would be a natural choice, but its cumulative distribution function does not lead to closed-form expressions for $y^*$ as in Propositions~\ref{prop3} and~\ref{prop4}. We therefore tested only uniform and triangular approximations in our numerical experiments. Because the symmetric triangular approximation with matched mean and variance, given by \eqref{eq:mean} and \eqref{eq:var}, performed better, we report only those results in Section~\ref{S7:NS}.


To implement this heuristic with the symmetric triangular distribution with mode $\delta$ and range $[\delta-\xi,\delta+\xi]$, at decision epoch $t>0$, we first obtain the mean and variance of the demand using \eqref{eq:mean} and \eqref{eq:var} based on the current state $\mathbf{s_t}$, and set them equal to the mean and variance of the triangular distribution, i.e., $\delta$ and $\xi^2/6$, respectively. Then, plugging the resulting $\delta$ and $\xi$ into the expression for $y^*$ in Proposition 4 yields a request quantity, which we round to the nearest integer to determine the action to take at this decision epoch. We then proceed to the next decision epoch, observe the new state, and continue with this process. Note that the user will need to tune three parameters: $\tilde{c}_b$, $\tilde{c}_e$, and $\psi$.  The cost parameter $\tilde{c}_b$ should be set to a value greater than one because otherwise the request quantity will be zero for all states by Corollary \ref{cor.1}. Ultimately, the specific values of $\tilde{c}_b$ and $\tilde{c}_e$ will be determined by how much hospital management values keeping patients versus available hospital beds waiting as well as how important variation in request quantity is. If variation in request quantity is not as important, then $\tilde{c}_b$ and $\tilde{c}_e$ should be much higher than one. The relative value given to boarding and keeping hospital beds empty would determine the ratio of $\tilde{c}_b$ to $\tilde{c}_e$.


\subsection{Deep $Q$-Learning (DQN) Heuristic}
\label{S5:MDP:SS4}

We next use reinforcement learning (RL) to obtain an alternative heuristic solution. Unlike an MDP formulation, RL can learn policies directly through repeated interactions with a realistic simulated environment \citep{sutton2018reinforcement}. This feature is well suited to our setting, where ED and hospital operations data can be used to construct a high-fidelity simulation environment. We implement the RL approach using deep $Q$-learning (DQN), which extends $Q$-learning~\citep{watkins1989learning} by approximating the $Q$-value function with an artificial neural network instead of storing $Q$-values for all state-action pairs. This is important because our problem has a high-dimensional state space. The network takes the current system state as input, outputs estimated $Q$-values for each feasible action, and is updated iteratively to improve these estimates \citep{lapan2018deep}.



To improve learning stability and performance, we use three standard DQN techniques. First, we use a technique called experience replay, in which the agent’s past experiences are stored in a replay buffer and mini-batches are sampled from this buffer during training~\citep{lin1992self}. This technique decreases the correlation among training samples and improves stability. Second, we use a target network with the same structure as the primary network but separately updated weights to stabilize learning by maintaining a steady target value \citep{mnih2015human}.  
Third, we adopt the double DQN model of \cite{van2016deep}, in which the primary network is used to select the action in the next state and the target network is used to evaluate the selected action. This decoupling of action selection and action evaluation reduces overestimation of action values and can improve performance. The target $Q$-value for state $\mathbf s$ and action $a$ can then be written as
\[
Q^* (\mathbf s, a) \hspace{-1pt}=\hspace{-1pt} C(\mathbf s, a)\hspace{-1pt} + \hspace{-1pt} \rho Q(\mathbf s^\prime, \arg\min_{a^\prime\in \mathcal{A}} Q(\mathbf s^\prime, a^\prime| \boldsymbol{\theta})| \boldsymbol{\theta}^\prime),
\]
where $\boldsymbol{\theta}$ and $\boldsymbol{\theta}^\prime$ denote the weights of the primary and target networks, respectively. 
The primary network is trained by minimizing the mean-squared error between its predicted $Q$-value and this target value. The weights of the target network are updated after a constant number of steps, using $\boldsymbol{\theta}^\prime \leftarrow (1-\tau) \boldsymbol{\theta}^\prime  + \tau \boldsymbol{\theta}$, where $\tau \in (0,1)$ is selected by the user. (We set $\tau$ to 0.02 in all numerical experiments discussed in this paper.)

To balance the trade-off between exploration and exploitation in our RL approach, we use a decaying $\epsilon$-greedy policy~ \citep{sutton2018reinforcement}. Specifically, at each period $t$, the agent chooses the greedy action with probability $1-\epsilon_t$ and randomly selects one of the non-greedy actions with probability $\epsilon_t$. Since the learned $Q$-value estimates become more reliable as the agent gains experience, we let $\epsilon_t$ decay exponentially over time. In the simulation experiments in Section~\ref{S7:NS}, we set the decay rate to 0.012 for every 6,000 hours, with an initial exploration probability of one.

To apply this RL approach, we construct a simulator of ED and hospital operations as the agent's environment and train the deep $Q$-learning agent to minimize the total discounted cost as defined in Section~\ref{S5:MDP:SS1}. At each step, the agent observes the state $\mathbf s_t=(n_{1,t},\ldots,n_{M,t}, k_t, r_t)$, takes action $a_t$, and observes the cost incurred $C(\mathbf s_t, a_t)$ as defined in Section~\ref{S5:MDP:SS1} and the next state $\mathbf s_{t+1}$. Since in both reality and our simulation environment presented in Section~\ref{S7:NS}, the arrival process, the service rates, and the preparation rates of hospital beds are non-stationary, we include indicator variables that represent four-hour time slots within a day as  additional inputs to the neural network. After the deep $Q$-learning agent converges, we then obtain a policy by feeding each state $\mathbf s$ into the neural network as input and choosing the optimal action with the minimum $Q$-value.

\section{Simulation Study}
\label{S7:NS}

The high-stakes nature of healthcare, along with the time and resources required, makes evaluating proposed policies through real-world implementation impractical. We therefore conduct a comprehensive simulation study based on patient encounter data collected in calendar year 2019 from a large ED in North Carolina, U.S.A. For this purpose, we use a dual-simulator framework (shown in Figure~\ref{fig:systemrepresentation}) that balances computational tractability with model realism, capturing the key ED and hospital components needed to generate actionable insights for future implementation rather than attempting to construct an exact replica of the system.

\begin{figure}[!ht]
    \centering
    \includegraphics[scale=0.5]{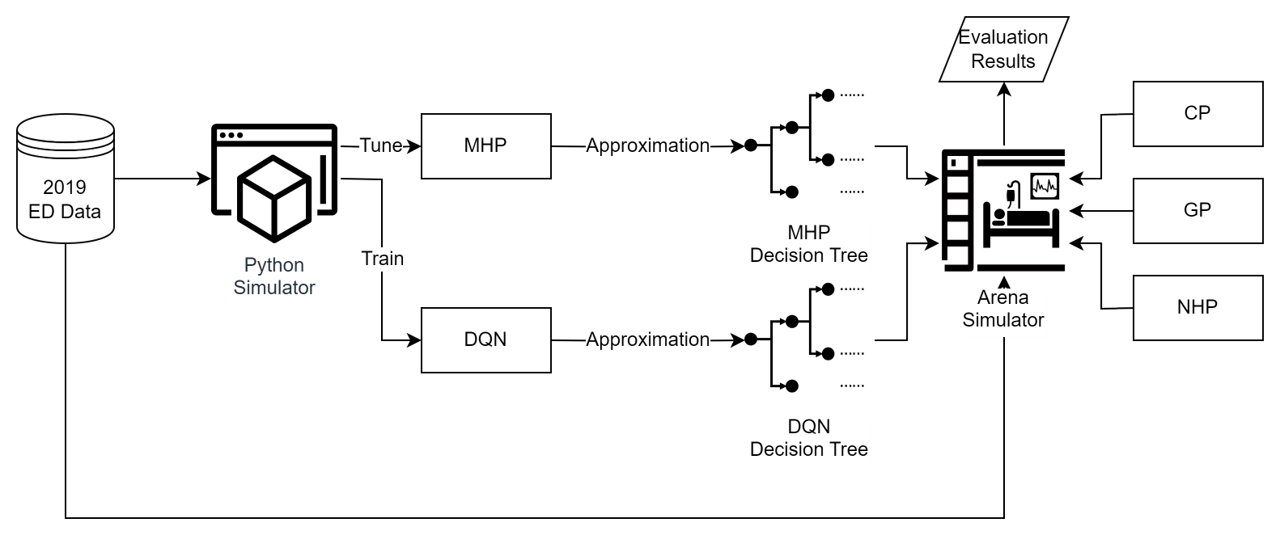}
    \caption{A graphical representation of the dual-simulator approach.}
    \label{fig:systemrepresentation}
\end{figure}

The first component of the dual-simulator is a lower-granularity (fast-execution) Python-based simulator, which is used for training the DQN algorithm and tuning MHP parameters. These tasks require repeated simulation runs and close integration with optimization and learning algorithms, and thus, computational efficiency is important. All modeling details and our input analysis based on ED data for this Phyton simulator are provided in Section~\ref{app:ps:se1} of the Online Supplement. The second component is a higher-granularity Arena \citep{Arena} discrete-event simulator, which we use as the evaluation testbed for comparing the proposed policies with benchmark policies. By capturing key features of ED-hospital operations, including non-homogeneous ED arrivals, multiple patient types, and service time distributions that vary by patient type, time of day, and day of week, it is much more realistic than the Python simulator.

\noindent \textbf{Arena simulator}: This simulator is a modified version of the one used in \cite{chen2022using}, which was based on the workflow of a large academic hospital’s ED, using data from calendar year 2012. We revised the model by redoing the input analysis using all patient encounters in calendar year 2019 for the same ED  and incorporating several operational changes implemented that year. This Arena simulator offers great detail, providing a realistic environment for policy evaluation. Specifically, the model defines twenty patient classes based on all combinations of five triage levels, two disposition categories (admission vs.\ discharge), and two age groups (adult vs.\ pediatric). Each class has a distinct, non-homogeneous Poisson arrival process with rates that vary by day of the week and time of the day. After triage, patients are further classified as having either a high or low probability of admission, based on their triage level and age group. (This classification enables implementation of the proposed policies in the simulation.) The simulated ED includes four care areas, each with dynamic bed capacity that changes throughout the day. After triage, patients wait for a bed in the appropriate care area based on their medical needs. 

Once roomed, each patient's remaining time in the ED is modeled in two stages. The first stage corresponds to the ED workup phase, during which the patient is examined and treated until a disposition decision is made. Workup time distributions vary by patient type and time of day. The ED disposition decision marks the end of the first stage and the beginning of the second. For discharged patients, the second stage corresponds to the discharge process, which includes all steps and procedures necessary to safely leave the medical facility. For admitted patients, it corresponds to the boarding process. At the top of each hour, a decision maker determines how many hospital beds to request, based on the policy used. Following the methodology in \cite{shi2016models}, we assume that the preparation time for a hospital bed depends on both hospital bed availability and the duration of the transfer preparation process. When a patient with an admission decision completes their ED service, they are immediately transferred if a previously requested and fully prepared bed is available. Otherwise, they board in the ED until all earlier boarding patients have been roomed and a new hospital bed becomes available.
To better reflect actual ED and hospital protocols, the Arena simulator includes the following adjustment: as soon as an admission decision is made, if the number of boarding patients waiting for early-requested beds exceeds the number of outstanding early bed requests, a regular bed request is initiated immediately without waiting for the next hourly decision point. (Throughout this paper, ``outstanding early bed requests" refers to hospital beds that have been requested in advance but are either still being prepared or are ready but not yet occupied.) Further details on the probability distributions used, the incorporation of early bed requests under various policies, and the validation of the Arena simulation model can be found in Sections~\ref{app:as:se1} through~\ref{app:as:se4} of the Online Supplement.

\noindent {\bf Approximation by decision trees:} Because the tuned MHP and trained DQN policies depend heavily on the system state and experimental conditions, we approximate them using decision trees, which are easier to implement in Arena and provide transparent decision logic that may facilitate practical acceptance. Specifically, after tuning MHP and training DQN with the Python simulator, we collect state-action pairs under each policy, with 1.2 million observations per heuristic and scenario. We then train cost-complexity-pruned decision trees  \citep{hastie2009} using the system state as the input and the heuristic-recommended action as the response. 
Each resulting tree partitions the state space into leaves, and all states within a leaf are assigned the majority-vote action from the original policy. To balance simplicity and fidelity, we restrict tree sizes to 150--300 nodes.

\noindent {\bf Performance measures:} The primary goal of this study is to reduce boarding times for admitted patients and, by freeing ED beds sooner, improve waiting times for all ED patients. We therefore use the {\em long-run average ED patient length of stay (LoS)} as the primary performance measure and also report the {\em long-run average boarding time for admitted patients}. To assess the operational burden on the hospital, we report the {\em long-run average idle time of hospital beds requested by the ED}, defined as the time from when a requested bed finishes preparation until it is occupied by an ED-admitted patient. Finally, to evaluate the smoothness of the bed-request process, we report the {\em coefficient of variation of total hourly bed requests} including both early and regular requests. Lower values indicate a more uniform request process. We do not report the average cost used in the mathematical formulation because the cost parameters serve only to calibrate the relative importance of operational metrics rather than represent actual penalties in practice.

\noindent {\bf Experimental conditions:} Our simulation study considers three scenarios. The Standard Operations Scenario (SOS) reflects the operating conditions observed in the study ED and hospital during calendar year 2019 (Section~\ref{S7:NS:SS1}). SOS-P uses the same experimental setup as SOS but assumes that only a portion of hospital beds are interchangeable  (Section~\ref{S7:NS:SS2}).  Finally, the Pandemic Response Scenario (PRS) models a six-week surge in ED arrivals, representing a hypothetical pandemic-like event in the study ED and hospital during 2019 (Section~\ref{S7:NS:SS3}).

\subsection{Standard Operations Scenario (SOS)}
\label{S7:NS:SS1}
 
In this simulated scenario, the ED and the hospital operate under the conditions of the actual system in calendar year 2019, as detailed in Sections~\ref{app:as:se1} through \ref{app:as:se4} of the Online Supplement. To obtain confidence intervals (CIs) on the steady-state performance measure for each proposed and benchmark policy, we used the batch means method for steady-state output analysis. In particular, we ran a simulation of length 1,000 years following a warm-up period of 366 days under each policy and experimental setting.

In this simulation study, we compare the proposed policies against the two benchmark policies introduced in Section \ref{S4:PFBP}, namely, the current practice (CP) and the greedy policy (GP). To implement GP in our experiments, we classify patients into two equal-proportioned types upon arrival based on an admission prediction tool adapted from \cite{mehrotra2017}: type I (a low admission probability of 0.0848) or as type II (a high admission probability of 0.577). We also consider a third benchmark policy, which we call the extreme policy (EP). Under this policy, the number of early bed requests placed each hour equals the surplus of current ED patients over any outstanding early hospital bed requests. This is clearly an impractical policy as it essentially assumes that all patients will be admitted, but it helps us explore the theoretical limits of performance of early-bed-request policies.

Table \ref{tab:simulation_results} provides  performances of these three benchmark policies. 
As expected, EP results in the smallest average LoS and boarding times but the highest average hospital bed idle time. On the other hand, CP has the longest LoS and boarding times without any bed idling time, whereas GP performs in between CP and EP, providing a compromise between bed idling and LoS/boarding times. The results on the performance in terms of coefficient of variation are also consistent with intuition: Under CP, the request size for hospital beds is one most of the time, but under GP and EP, a request for beds is given according to the expected number of patients to be admitted and the outstanding bed requests, which leads to more variable request sizes.

\begin{table}[ht]
\centering
\caption{Simulation results for three benchmark policies under SOS. Metrics are reported as mean $\pm$ $95\%$ CI half-width.}
\begin{small}
\begin{tabular}{||c||c|c|c|c||}
\hline
\textbf{Policy} & \textbf{ED LoS (min)} & \textbf{ Boarding Time (min)} & \textbf{CoV of Request Size} & \textbf{Bed Idle Time (min)} \\
\hline\hline
CP & 406.57 $\pm$ 3.27 & 248.26 $\pm$ 2.14 & 0.76 $\pm$ 0.0004 & 0  \\
GP & 349.00 $\pm$ 2.50 & 95.81 $\pm$ 2.47 & 0.89 $\pm$ 0.0050 & 34.56 $\pm$ 0.21 \\
EP & 342.17 $\pm$ 4.64 & 76.68 $\pm$ 4.98 & 1.29 $\pm$ 0.0007 & 356.73 $\pm$ 5.06 \\
\hline
\end{tabular}
\end{small}
\label{tab:simulation_results}
\end{table}

We next compare the performance of the proposed policies against these benchmarks under a variety of tuning parameters representing a spectrum of managerial preferences. In particular, we choose $(c_b, c_e)$ from the condition set $ \mathcal{C}_1:= \{(10,1),\allowbreak (4,1),\allowbreak (1,1), \allowbreak (1,4),\allowbreak (1,10)\}$ to train DQN with the objective of finding an approximate solution to the optimization problem stated by Equation \eqref{eq:optimal}. Similarly, for MHP, we tune $\tilde \alpha$ and $\tilde \beta$ to minimize the total discounted cost corresponding to these $(c_b, c_e)$ combinations. The tuning parameters $(\tilde{c}_b, \tilde{c}_e)$ used in NHP are correspondingly adjusted to ensure that the generated performance curves for average bed idling times versus average length of stay from different policies (see Figure~\ref{fig:R1}) fall within approximately the same range.  After experimenting with different values of $\psi$, which is the other tuning parameter for NHP, we set $\psi=8$, corresponding to looking eight hours into the future when estimating demand. This value was chosen because it yielded the smallest total discounted cost in most scenarios. (All values of tuning parameters in addition to detailed simulation results discussed in this section are provided in  Section \ref{sec.results-tables} of the Online Supplement).

\begin{figure}[!ht]
    \centering
    \includegraphics[width=1\linewidth]{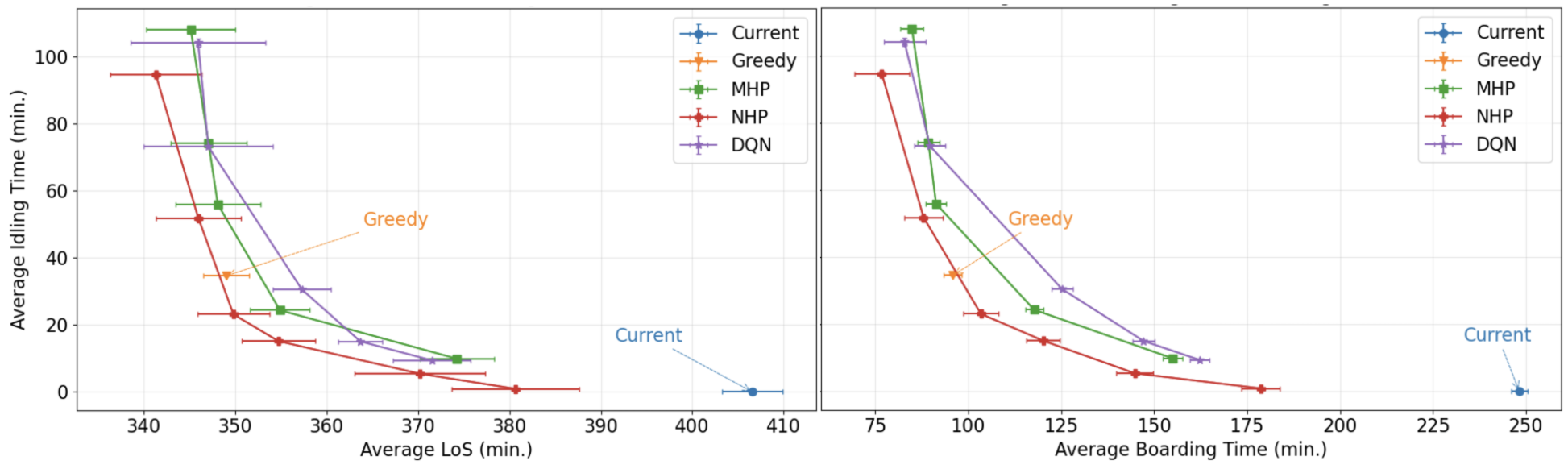}
    \caption{Long-run average ED LoS and boarding time versus long-run average hospital bed idling time under SOS. (CI half-widths on average bed idling times are too small to be visible.)}
    \label{fig:R1}
\end{figure}

Figures~\ref{fig:R1} and \ref{fig:R2} present 95\% CIs for performance measures of interest for all policies under consideration estimated using output data  from the Arena simulator. (We excluded EP from these plots as it resulted in an average bed idling time and a coefficient of variation that are far outside the range of these plots; see Table~\ref{tab:simulation_results}.) Each point on the DQN curve corresponds to one of the five combinations of $(c_b, c_e)\in \mathcal{C}_1$, where $c_b/c_e$ decreases from left to right in each plot. Similarly, each point on the NHP and MHP curves corresponds to similar operating conditions in $\mathcal{C}_1$. As expected, MHP, NHP, and DQN curves depict an ``elbow'' pattern in Figure \ref{fig:R1}, where the long-run average hospital bed idling time increases as the long-run average ED LoS and boarding times decrease. 

The first observation from Figure~\ref{fig:R1} is that, compared with the status quo represented by CP, the proposed framework of aggregate early bed requests significantly lowers the average boarding time from approximately 248 minutes to a range between 77 -- 179 minutes in exchange for a long-run average hospital bed idling time of 95 minutes to a minute. This percentage decrease of 30--70\% in average boarding times leads to significant improvements in the overall ED patient LoS, with reductions ranging from approximately 26 to 65 minutes, i.e., by 6--16\%. The proposed policies can also nearly match EP in terms of the average ED LoS (around 342 minutes), while maintaining a much shorter long-run average hospital bed idling time.  Another interesting observation from Figure~\ref{fig:R1} is that the GP, a simple policy that requests beds in the amount of  expected number of ED admissions, exhibits a reasonable performance, with a reduction of roughly 58 minutes in long-run average ED LoS, and a 152 minute reduction in the average boarding time, while keeping the long-run average hospital bed idling time around 35 minutes. However, as GP does not possess any ``cost'' parameters that can be adjusted, it is not possible to  tune the policy to achieve the level of balance an ED management would desire.

Figure~\ref{fig:R1} also provides insight into the relative performance of the proposed policies. MHP and DQN perform similarly in most cases, while NHP generally provides a better tradeoff between LoS/boarding times and bed idling, although the differences are not always statistically significant. The difference in bed idling times between NHP and the other two is larger/smaller in the range where the average ED LoS is shorter/larger.

\begin{figure}[!ht]
    \centering
    \includegraphics[width=1\linewidth]{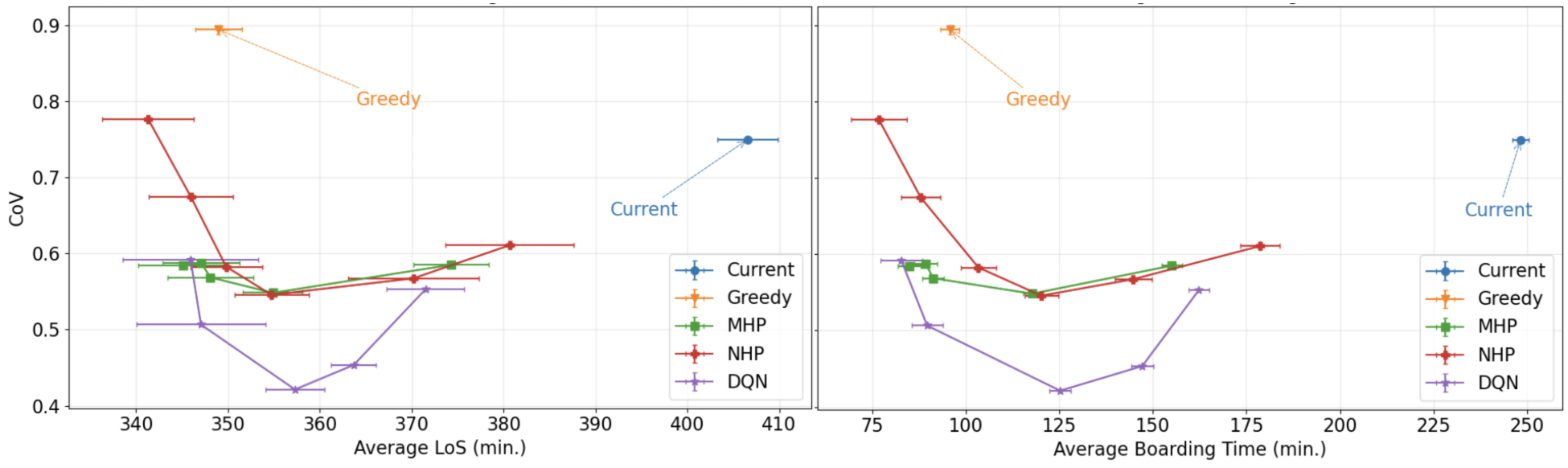}
    \caption{Long-run average ED LoS and boarding time versus long-run average coefficient of variation under SOS. (CI half-widths on average coefficient of variation are too small to be visible.)}
    \label{fig:R2}
\end{figure}

Figure~\ref{fig:R2} illustrates the relationship between the long-run average coefficient of variation of bed requests versus the long-run average ED LoS and boarding time under all policies considered. As anticipated, within a comparable range of average ED LoS, MHP, NHP, and DQN, all of which take into account the variation in the size of early bed requests, exhibit lower coefficients of variation when compared with the benchmark policies. This suggests that our proposed techniques are capable of reducing average ED LoS and boarding time while ensuring a smoother process for early bed requests. Among these methods, DQN outperforms others in terms of achieving lower coefficients of variation for similar performance in ED LoS and boarding times. The superior performance of DQN is likely attributable to its more sophisticated structure that allows it to anticipate and adapt to future demand more effectively. The adaptability to future demand is expected to be more important to achieve a superior performance when the performance measure is nonlinear such as the coefficient of variation. Finally, the U-shaped structure of the CoV plots for each proposed heuristic is likely due to the fact that parameters that correspond to the two ends of the spectrum for each proposed heuristic, are more variable than any of the middle-ground parameter settings. More specifically, each end point on the curves either corresponds to a policy that requests beds conservatively (rarely for patients who are not yet admitted) due to higher cost given to keeping hospitals bed idle or corresponds to an aggressive policy that requests beds early for many patients before their admission decisions due to higher cost given to boarding times. These two extreme settings are likely to lead to more variable request quantities than the middle ground.    

Under SOS, we also test conditions where the variation in request sizes is less of a concern compared to keeping patients boarded or hospital beds idle for extended periods of time.  In particular, we choose the tuning parameters $(c_b, c_e)$ for MHP and DQN  from  $\mathcal{C}_2:=\{(100,10),\allowbreak (40,10),\allowbreak (10,10), \allowbreak (10,40),\allowbreak (10,100)\}$, and adjust $(\tilde{c}_b, \tilde{c}_e)$ used in NHP accordingly for comparable LoS vs.\ bed idling times curves. In the interest of space, we provide these results in Section \ref{sec.results-tables} of the Online Supplement. Based on these results, we conclude that with appropriate parameter tuning for each proposed heuristic, comparable performance in terms of average LoS and bed idling can be achieved, while also attaining a smoother and more stable request process.

\subsection{Standard Operating Scenario with Partial Bed Requests (SOS-P)}\label{S7:NS:SS2}
In this section, we use the same experimental setup of Section \ref{S7:NS:SS1}, except that only $70\%$ of admitted patients are now eligible for interchangeable hospital beds, e.g., those in general medicine wards.  This modification relaxes the assumption, used in the mathematical formulation underlying the proposed policies, that all hospital beds are interchangeable. To simulate this setting, we reduced the number of beds requested by $30\%$ under each heuristic, rounding to the nearest integer. 

\begin{figure}[h]
    \centering
    \includegraphics[width=1\linewidth]{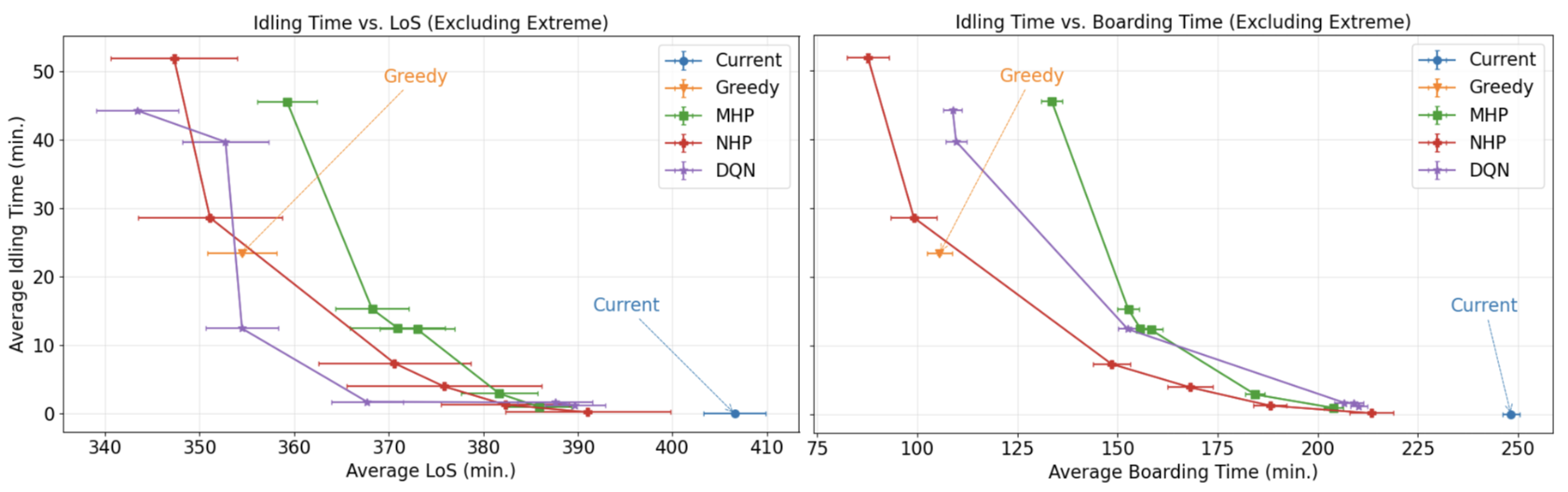}
    \includegraphics[width=1\linewidth]{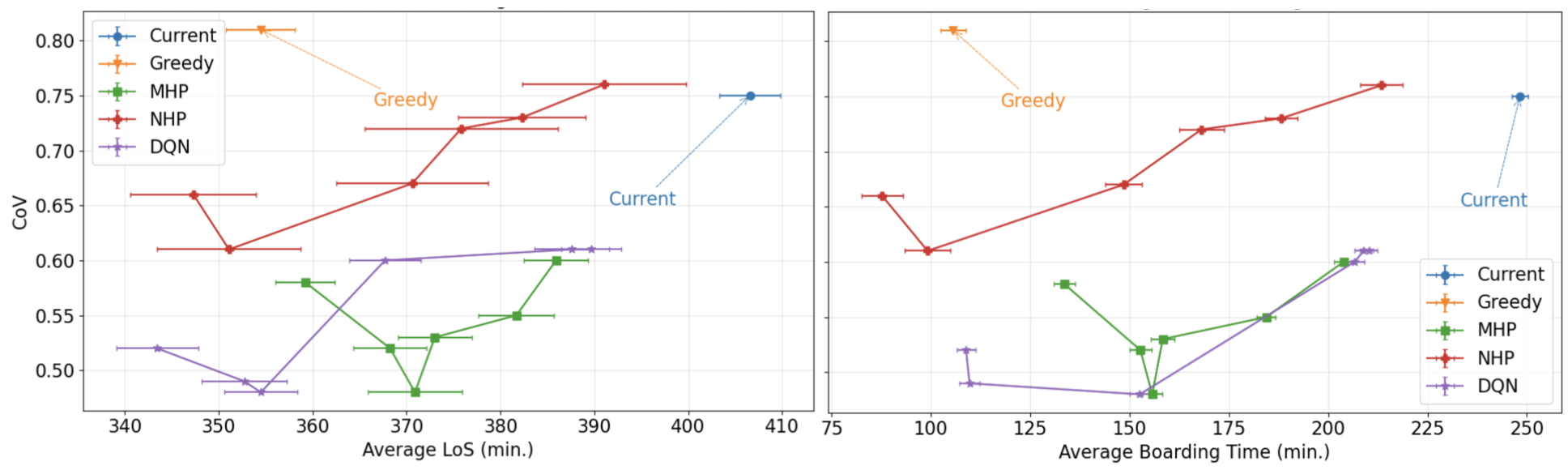}
    \caption{Long-run average ED LoS and boarding time versus long-run average bed idling time and coefficient of variation under SOS-P.}
    \label{fig:70pct}
\end{figure}

Figure~\ref{fig:70pct} presents $95\%$ CIs for performance measures of interest for all policies under consideration estimated using simulation output. (Values of tuning parameters in addition to detailed simulation results for all policies under this experimental setting are provided in Section \ref{sec.results-tables} of the Online Supplement.)  We observe that the proposed framework of early bed requests is still effective in achieving reductions in boarding times and overall LoS with reasonable increases in hospital bed idling times when $70\%$ of admitted patients are eligible for an interchangeable hospital bed. In particular, the average ED patient LoS reduces by approximately 15 to 55 minutes, i.e., by 4--14\%, with an increase of one to 50 minutes in average bed idling time. NHP and DQN can also nearly match EP in terms of the average ED LoS (around 342 minutes), while maintaining a much shorter long-run average hospital bed idling time. GP is still providing a reasonable compromise, with a reduction of roughly 50 minutes in the long-run average ED LoS for a long-run average hospital bed idling time of 24 minutes. Among the proposed heuristics, NHP continues to perform the best in terms of average boarding time but in terms of the overal LoS, the DQN and NHP perform similarly. Figure~\ref{fig:70pct} also shows that DQN still outperforms others in terms of achieving a lower coefficient of variation for similar performances in ED LoS and boarding times.

\subsection{Pandemic Response Scenario (PRS)}
\label{S7:NS:SS3}

While our analyses in Sections \ref{S7:NS:SS1} and \ref{S7:NS:SS2} have demonstrated that the policies we proposed work well in shortening the ED LoS while maintaining a reasonably low average idle time for hospital beds under regular operating conditions, it is equally crucial to evaluate their robustness under more extreme circumstances. Therefore, we next turn our attention to a scenario in which the ED is exposed to higher-than-usual arrival volumes that lead to periods of overcrowding. Such a scenario resembles a winter viral infection surge, like the recent RSV/Flu/Covid-19 surges that occurred every winter since the Covid-19 pandemic. 

The design of this pandemic response scenario (PRS) is primarily based on the FluSurge 2.0 tool developed by the Centers for Disease Control and Prevention (CDC) \cite{flusurge}, which was also used in \cite{chen2022using}. FluSurge 2.0 is a predictive tool designed to assist EDs and hospitals in forecasting resource needs during an influenza pandemic. Following this tool, we simulate a hypothetical flu surge scenario of length 18 weeks and with the following surge parameters: During the first six weeks of this period, the ED and hospital operate under standard conditions as assumed in SOS. Starting in the seventh week, a pandemic begins and it lasts for six weeks. We assume that the majority of infected patients will be classified into either ESI level 2 or 3. Therefore, for a given day $d \in \{1,2,\ldots, 21\}$ in the first three weeks of the outbreak, we assume that the arrival rates of these two classes of patients will be multiplied by a factor of $1.015^{d}$. As the pandemic progresses, in the remaining three weeks of the pandemic, the arrival rates of these two classes of patients will gradually return to the pre-pandemic levels. Specifically, for each day $d \in \{22,23, \ldots, 42\}$ in the last three weeks of the outbreak, we assume that the arrival rates of these two classes of patients will be multiplied by a factor of $1.015^{43-d}$. This approach mimics the common trajectory of a flu pandemic while preserving the inherent weekly pattern in the ED arrival process. Following the outbreak period, the simulation continues for an additional six weeks during which the arrivals to the ED stay under standard operating conditions.

To evaluate the robustness of the developed policies under extreme situations, we continue to employ the policies derived under the conditions of SOS and parameter set $\mathcal{C}_1$ to make inpatient bed requests. Nevertheless, we hypothesize that the hospital bed capacity will increase to 400, up from the baseline of 337 in SOS, five days after the pandemic starts and then return to the normal level five days after the pandemic ends. This assumption is rooted in real-life hospital operations, which often involve creating temporary care areas and canceling elective surgeries to create surge capacity during pandemic conditions. The five-day delay is implemented to mimic the lag in healthcare system response in such scenarios.  The ED capacity is assumed to remain constant during the 18 weeks of pandemic conditions. For this specific analysis, we use the replication-deletion method with 1,000 replications and a warm-up period of 366 days.

\begin{figure}
    \centering
    \includegraphics[width=1\linewidth]{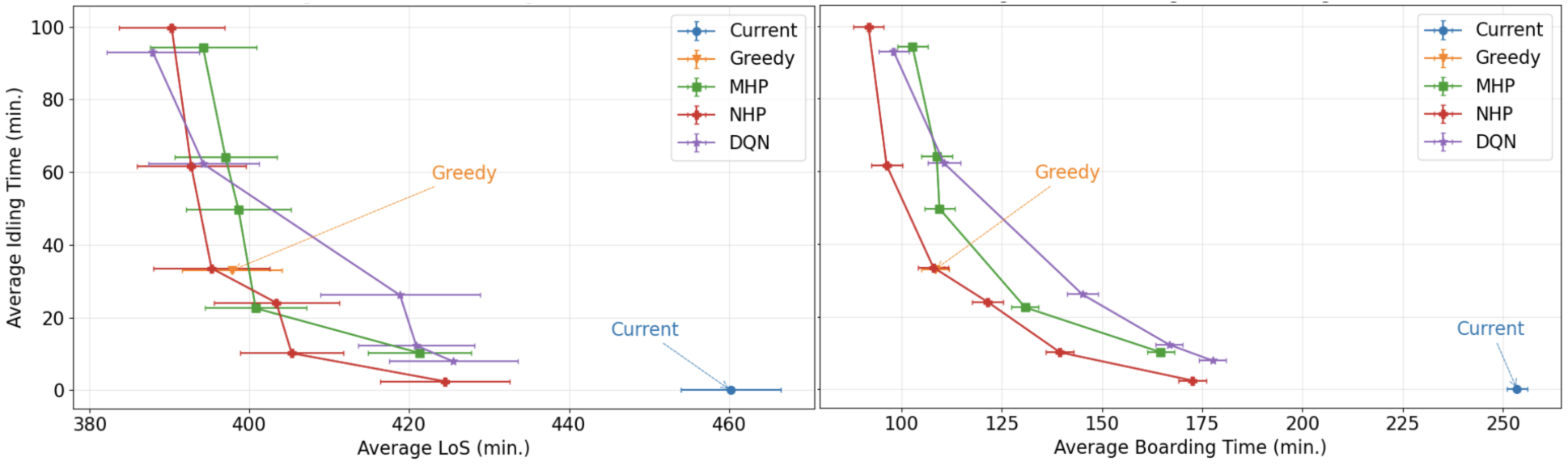}
    \caption{Average ED LoS and boarding time versus average hospital bed idling time for the pandemic scenario.}
    \label{fig:R4}
\end{figure}

Because this analysis focuses on policy responsiveness during a brief period of extreme conditions, request-size variability is no longer a priority. We therefore evaluate the policies using average boarding time, average ED LoS, and average hospital bed idling time, as shown in Figure~\ref{fig:R4}. The conclusions are broadly consistent with those under standard operating conditions. All proposed policies substantially outperform CP in reducing average ED LoS and boarding time, suggesting that they remain effective during brief periods of extreme conditions. GP remains competitive, while the adjustable policies (MHP, NHP, and DQN) exhibit the expected elbow-shaped frontier. NHP still provides the most efficient trade-off between average bed idling and boarding times, although its advantage is less clear for average ED LoS.

\section{Conclusion}
\label{S8:C}

In this paper, we propose a novel framework that uses admission probability and time-to-disposition predictions for ED patients to periodically generate aggregate early inpatient bed requests. Although our approach is not the first to employ admission probability predictions to reduce boarding times, {\em it is novel in that it takes a collective view of all patients in the ED and the state of the hospital, and uses this information to make periodic ED-level decisions to request inpatient beds}. As we demonstrate through a simulation study based on data from our partner hospital, this approach can reduce average boarding times for admitted patients by $30-70\%$ and average length of stay (LoS) for all ED patients by $6-15\%$, while imposing only a modest burden on the hospital in the form of low average idle time for prepared inpatient beds.  This study is also among the first to explicitly  consider the smoothness of the bed-requesting process by favoring policies that reduce  fluctuations in the number of requested  beds, which may help reduce delays in downstream hospital processes.

Based on this decision framework, we propose three policies that use admission probability and time-to-disposition predictions for ED patients. Among these, a newsvendor-type heuristic (NHP) generates the most efficient frontier in balancing ED operational performance (measured by shorter average LoS and boarding times) with hospital operational performance (measured  by lower average idle times for prepared inpatient beds). NHP is also expected to be more appealing to healthcare managers because of its interpretability and computational simplicity. Our two other heuristics, based on Markov decision processes and reinforcement learning (RL), require more complex tuning and are less easily interpretable. Nevertheless, if variability in request size is an  important managerial concern, our simulation study shows that the RL-based heuristic can be tuned to achieve a substantially smoother bed-requesting process than the other heuristics, while achieving similar average inpatient bed idle times and ED lengths of stay. These findings also contribute to the growing literature on RL in healthcare operations management \citep{Wu2025} and suggest that RL methods may be useful when adaptability is an important objective. 


Although we evaluated the performance of the proposed policies in a realistic discrete-event simulator developed from actual ED data, further research is needed through controlled studies in practice. While implementation would require nontrivial organizational and operational changes, frameworks such as ours may be increasingly feasible given the widespread adoption of predictive modeling tools and other analytical tools for policy evaluation and operational planning. 

Several extensions of the proposed framework merit future study, including relaxing our assumption that all inpatient beds requested by the ED are interchangeable. Our simulation results suggest that the framework can still have a sizable impact when applied to a large single admitting unit, such as general medicine, where beds are interchangeable. A natural extension is to generalize the framework to multiple such units, e.g.,  separate general medicine units for pediatric, adult, and geriatric patients. Another promising direction is to develop hybrid approaches that use aggregate bed requests for large general admitting units while relying on individualized early bed requests, as in \cite{chen2022using}, for highly specialized units. Finally, in our simulation study, we used a simple admission prediction tool by \cite{mehrotra2017} to categorize patients into two admission levels. It would be interesting to extend this numerical evaluation by using more sophisticated prediction tools with a more granular categorization of patients and higher prediction accuracy.

\bibliography{abert_bib}

\newpage
\section*{Online Supplement}
\renewcommand{\thesection}{OS.\arabic{section}}
\setcounter{section}{0}
\setcounter{page}{1}
\input{SSRN_appendix}

\FloatBarrier

\end{document}

%% file: SSRN_appendix.tex
\section{Proofs of Theoretical Results}

\noindent {\bf Proof of Proposition~\ref{theo:1}:}
This proof uses theorems and propositions from \cite{puterman2014markov}. According to Theorem 6.10.4 of \cite{puterman2014markov}, since the state space is countable for our problem, if Assumptions 6.10.1 and 6.10.2 of \cite{puterman2014markov} hold, then there exists a unique solution to the optimality equation, and if there exists a policy $\pi^*$ that chooses the action that minimizes the right-hand side of optimality equation~\eqref{eq:op1}, then $\pi^*$ is the optimal policy that is associated with optimal value function $V(\mathbf s)$.

To prove that Assumptions 6.10.1 and 6.10.2 hold, suppose the current state is denoted by $\mathbf s=(n_{1}, \ldots,\allowbreak  n_{M},\allowbreak k, r)$, and the next state is $\mathbf j = (n_{1}^\prime, \allowbreak \ldots,n_{M}^\prime, \allowbreak  k^\prime,\allowbreak  r^\prime)$. To show that Assumption 6.10.1 holds, we next find a positive real-valued function $w(\mathbf s)$ on $\mathcal{S}$ that satisfies $\inf_{\mathbf s \in \mathcal{S}} w(\mathbf s)>0$ so that $\sup_{a \in \mathcal{A}} |C(\mathbf s,a)|  \leq  w(\mathbf s)$. 
Define $c_{max} = \max(c_b, c_e)$ and $w(\mathbf s) = A^2 +  c_{max} \left( \sum_{m=1}^M n_{m} + |k| + |r| \right)$ for all $\mathbf s \in \mathcal{S}$. It can be then easily seen that Assumption 6.10.1 holds because
    $$\sup_{a \in \mathcal{A}} |C(\mathbf s,a)| = \sup_{a\in \mathcal{A}} \left\{ a^2 + c_b \max(0,k) + c_e \max(0,-k)  \right\}\leq  A^2 + c_{max} |k| \leq w(\mathbf s).$$
To show that Assumption 6.10.2 also holds, we use part (a) of Proposition 6.10.5 in \cite{puterman2014markov}. This proposition states that if there exists a constant $L>0$, for which 
$$\sum_{\mathbf j \in S} p(\mathbf j| \mathbf s, a) w(\mathbf j) \leq w(\mathbf s) +L, \mbox{ for } a \in \mathcal{A},\mathbf s \in \mathcal{S},$$
then Assumption 6.10.2 holds. We now let $N$ be the number of new arrivals during a period. By the Markovian assumption, $N$ is independent of everything else in the system. Define $p(\mathbf j, N|\mathbf s, a)$ as the probability that there are $N$ new arrivals and the next state is $\mathbf j$, given the current state is $\mathbf s$ and the action taken is $a$. We then have

\begin{align*}
    \sum_{ \mathbf j \in \mathcal{S}} p(\mathbf j|s,a) w(\mathbf j)  &\leq \sum_{\mathbf j \in \mathcal{S}, N\geq 0} p(\mathbf j, N | \mathbf s, a) \left[ A^2 +  c_{max} \left( \sum_{m=1}^M n_{m} + N + |k| + |r| + a \right)\right]\\
    &\leq \sum_{\mathbf j \in \mathcal{S}, N\geq 0} p(\mathbf j, N | \mathbf s, a) \left[ A^2 +  c_{max} \left( \sum_{m=1}^M n_{m} + |k| + |r|\right)  \right]\\
    &\qquad +  c_{max} \sum_{\mathbf j \in \mathcal{S}, N\geq 0} p(\mathbf j, N | \mathbf s, a) N +c_{max} A \qquad \quad (\textrm{because}\ a\leq A)\\
    &=c_{max} A + w(\mathbf s) + c_{max}\hspace{-4pt}\sum_{\mathbf j \in \mathcal{S}, N\geq 0}  N p(\mathbf j, N | \mathbf s, a)\\
    &=c_{max} A + w(\mathbf s) + c_{max}\mathbb{E}[N]\\
    &= w(\mathbf s) + L, 
\end{align*}
where $L=c_{max} (A + \mathbb{E} [N])$ is a finite, positive constant.

\noindent\textbf{Proof of Proposition~\ref{prop1}:}
The proof follows an induction argument, starting with the case when there is only one period left and then extending to the general case when there are $q\geq 1$ periods left. For notational convenience,  we let $S_{t,g}=\mathbb{E}[X_{t,g}]$ and $S_{t,g}^2=\mathbb{E}[X_{t,g}^2]$. We also 
define, for $t=0,1,\ldots$ and $g=1,2,\ldots,G$,
\[\begin{aligned}
&u_{t,g} = \frac{\eta_{t,g-1}}{\eta_{t,g-1} + \rho^{-1}},\\
   &w_{t,g} = \frac{2\eta_{t,g-1} \mathbb E[X_{t,g}] + \theta_{t,g-1}}{2(\eta_{t,g-1} + \rho^{-1})},\\
   &\eta_{t,g}= u_{t,g} + C,
   \theta_{t,g} = 2w_{t,g} + D,\\
   &\kappa_{t,g}=\rho \kappa_{t,g-1} +\left(\eta_{t,g-1} \mathbb E[X_{t,g}^2] + \theta_{t,g-1} \mathbb E[X_{t,g}]\right. \\ &\qquad \quad \left.+ \rho\eta^2_{t,g-1}\mathrm{Var}(X_{t,g}) - \rho \theta_{t,g-1}^2/4 \right) /\left(\eta_{t,g-1} + \rho^{-1}\right),\\
   &\eta_{t,0}=C, \theta_{t,0}=D, \mbox{ and } \kappa_{t,0}=0.\end{aligned}\]

When there is only one period left, the optimal value function $\tilde V_{t,1}^*(\phi_{t,1})$ is given by 
\[
\begin{aligned}
       &\inf_{a\in\mathbb{R}} \left\{a^2 +C\phi_{t,1}^2 + D\phi_{t,1} + \rho \mathbb{E}\left[C\phi_{t,0}^2 + D\phi_{t,0} \right] \right\}\\
        &= \inf_{a\in\mathbb{R}} \left\{ a^2 + C\phi_{t,1}^2 + D \phi_{t,1}+   \rho D (\phi_{t,1} + S_{t,1} -  a) \right. \\  
         & \qquad \quad \left.+ \rho C(\phi_{t,1}^2 + 2\phi_{t,1} S_{t,1}+  S_{t,1}^2-2 a (\phi_{t,1} + S_{t,1})+ a^2) \right\}\\
        &= \inf_{a\in\mathbb{R}} \left\{(\rho C+1)a^2 - \rho (2C\phi_{t,1} + 2C S_{t,1} + D) a  \right.\\
          & \qquad \quad + (\rho+1) C \phi_{t,1}^2 + 2\rho C \phi_{t,1} S_{t,1} + \rho C S_{t,1}^2\\ &\qquad \quad +\left. (\rho+1) D \phi_{t,1} + \rho D S_{t,1} \right \}.
\end{aligned}
\]
Since the function inside the above infimum is a quadratic function of $a\in\mathbb{R}$, the optimal action $\tilde a_{t,1}^*$ is given by
$$
        \tilde a_{t,1}^*=\frac{ \rho (2C\phi_{t,1} + 2 C S_{t,1} +D)} {2(\rho C+1)}  = \frac{C}{C+\rho^{-1}}\phi_{t,1} + \frac{2CS_{t,1} + D}{2(C+\rho^{-1})} =u_{t,1} \phi_{t,1} + w_{t,1}.
$$
Plugging the optimal action $\tilde a_{t,1}^*$ back into the value function, we have
\[
\begin{aligned}
    &\tilde V_{t,1}^*(\phi_{t,1})\hspace{-1pt} =\hspace{-1pt}[(\rho C+1)u_{t,1}^2\hspace{-1pt} -\hspace{-1pt}2 \rho C u_{t,1} +(\rho+1) C]\phi_{t,1}^2  \\
    &\quad \qquad \qquad  +[2(\rho C+1)u_{t,1} w_{t,1} - \rho (2C S_{t,1} + D) u_{t,1}\\ 
    &\quad \qquad \qquad  - 2 \rho C w_{t,1} + 2 \rho C S_{t,1} + (\rho+1)D]\phi_{t,1} + \rho D S_{t,1}\\
    &\quad \qquad \qquad + (\rho C+1)w_{t,1}^2 - \rho (2C S_{t,1} + D) w_{t,1} + \rho C S_{t,1}^2  \\
    &= \left( \frac{\eta_{t,0}}{\eta_{t,0} + \rho^{-1}} + C \right) \phi_{t,1}^2 + \left( \frac{2\eta_{t,0} S_{t,1} + \theta_{t,0}}{\eta_{t,0} + \rho^{-1}} + D \right) \phi_{t,1} \\
    &\quad \qquad \qquad +\frac{4\eta_{t,0} S_{t,1}^2 + 4\theta_{t,0} S_{t,1} + 4\eta_{t,0}^2\rho \mathrm{Var}(X_{t,1}) - \rho \theta_{t,0}^2 } {4(\eta_{t,0} + \rho^{-1})}\\ & \quad \qquad \qquad + \rho \kappa_{t,0}= \eta_{t,1} \phi_{t,1}^2 +\theta_{t,1} \phi_{t,1} + \kappa_{t,1}.
\end{aligned}  
\]
This proves the proposition when there is one period left. Suppose the result holds for the case when there are $g-1$ periods left. Then, the value function when there are $g$ periods left is given by
$$\tilde V_{t,g}^*(\phi_{t,g}) = \inf_{a\in\mathbb{R}} \left\{ a^2 +C\phi_{t,g}^2 + D\phi_{t,g} + \rho \mathbb{E} \left[ \tilde V_{t,g-1}^* (\phi_{t,g}+X_{t,g} - a) \right] \right\}$$    
\[
\begin{aligned}
    &=  \inf_{a\in\mathbb{R}}  \left\{ a^2  + \rho \eta_{t,g-1} \mathbb{E}\left[ (\phi_{t,g} + X_{t,g} - a)^2 \right] + \rho \kappa_{t,g-1}\right.\\
    &\qquad \left. + \rho \theta_{t,g-1} \mathbb{E}\left[ \phi_{t,g} + X_{t,g} - a \right]   + C\phi_{t,g}^2 + D\phi_{t,g}\right\} \\
    &=\hspace{-1pt} \inf_{a\in\mathbb{R}}\hspace{-1pt}\{ C\phi_{t,g}^2 \hspace{-1pt} + \hspace{-1pt}D\phi_{t,g}\hspace{-1pt} + \hspace{-1pt}\rho \eta_{t,g-1}\phi_{t,g}^2 +2\rho \eta_{t,g-1} \phi_{t,g} S_{t,g}\\
     &  \qquad +  (\rho \eta_{t,g-1} + 1)a^2 \hspace{-1pt}-\hspace{-1pt} \rho [2\eta_{t,g-1}(\phi_{t,g}\hspace{-1pt}+\hspace{-1pt}S_{t,g})+\theta_{t,g-1}]a\\
     &\qquad  + \rho \eta_{t,g-1} S_{t,g}^2 + \rho \theta_{t,g-1} \phi_{t,g} + \rho \theta_{t,g-1} S_{t,g} + \rho \kappa_{t,g-1} \}.
\end{aligned}  \]  
Then, the optimal action $\tilde a_{t,g}^*$ is given by
$$\tilde a_{t,g}^* = \frac{ \rho( 2\eta_{t,g-1} (\phi_{t,g} + S_{t,g}) +\theta _{t,g-1})}{2( \rho \eta_{t,g-1} + 1 )}= u_{t,g} \phi_{t,g} + w_{t,g}.$$
Plugging $\tilde a_{t,g}^*$ back into the value function yields $\tilde V^*_{t,g} (s_{t,g}) = \eta_{t,g} \phi_{t,g}^2 + \theta_{t,g} \phi_{t,g} + \kappa_{t,g}.$

\noindent\textbf{Proof of Equation~\eqref{eq:3}:}
For $t\geq 1$, we can approximate the optimal action to take in state $\mathbf s_t$ for the original MDP with optimality equation \eqref{eq:op1} as follows:
$$a_t^* (\mathbf s_t)  =  \arg\min_{a \in \mathcal{A}} \Big\{a^2 + C_{b,e}  (\mathbf{s}_t)       + \rho \sum_{\mathbf s_{t+1}} p(\mathbf s_{t+1} | \mathbf s_{t}, a) \Big(  \Big( \tilde a_{t+1,G}^*(\phi_{t+1,G})\Big)^2$$ 

$$+ C_{b,e}(\mathbf{s}_{t+1}) + \rho \mathbb{E}_{X_{t+1,G}} \left[\tilde V_{t+1,G-1}^* (\phi_{t+1,G-1}) \right] \Big)\Big\},$$  
which gives Equation \eqref{eq:3} by  reorganizing the right-hand side of the above equation by plugging in the optimal action for the special finite-horizon MDP given in Proposition~\ref{prop1}, removing $C_{b,e} (\mathbf s_t)$ as it does not depend on $a_t$, and using $\phi_{t,g-1} = \phi_{t,g} + X_{t,g} - a_{t,g}$. Then, we use the result that $\tilde V_{t+1,G-1}^*(\cdot)$ is a quadratic function and inner terms are a linear transformation of $\phi_{t+1,G}$. 

\noindent\textbf{Proof of Proposition~\ref{prop2}:}
The first-order and second-order partial derivatives of $\mathbb {E}\left[\tilde{C}(s_t, y)\right]$ with respect to $y$ are
$$
\frac{\partial \mathbb E [\tilde{C}(s_t,y)]}{\partial y} = 2y - \tilde{c}_b \int_{y-k_t}^\infty  f_{\mathbf s_t}(x) dx + \tilde{c}_e  \int_{-\infty}^{y-k_t}  f_{{\mathbf s_t}}(x) dx,
$$
$$
\frac{\partial^2 \mathbb E [\tilde{C}(s_t,y)]}{\partial y^2} = 2 + \tilde{c}_b f_{\mathbf s_t}(y-k_t) + \tilde{c}_e f_{\mathbf s_t}(y-k_t).
$$
Since $\tilde{c}_b\geq 0, \tilde{c}_e \geq 0$, and $f_{\mathbf s_t}(y) \geq 0$, we have $\frac{\partial^2 \mathbb E [\tilde{C}(s_t,y)]}{\partial y^2}  > 0$. To minimize the total expected cost, we set the first derivative to zero:
$$2y + \tilde{c}_b \int_{-\infty}^{y-k_t}  f_{\mathbf s_t} (x) dx + \tilde{c}_e \int_{-\infty}^{y-k_t}  f_{\mathbf s_t}(x) dx - \tilde{c}_b\int_{-\infty}^{\infty}  f_{\mathbf s_t}(x) dx =0$$
    
$$\Rightarrow F_{\mathbf s_t}( \tilde{y}) + \frac{2\tilde y}{\tilde{c}_b + \tilde{c}_e} = \frac{\tilde{c}_b - 2k_t} {\tilde{c}_b + \tilde{c}_e}, \mbox{ where }\tilde y=y-k_t.$$ Then, $y^*=\max(\tilde y+ k_t,0)$ minimizes $\mathbb E[\tilde{C}(s_t,y)]$ because $y \geq 0$ and $\mathbb E[\tilde{C}(s_t,y)]$ is convex in $y$. 

\noindent\textbf{Proof of Corollary~\ref{cor.1}:} To find states where optimal order quantity is zero, we need to identify states for which $\tilde{y}\leq -k_t$. By the monotonicity of $F_{\mathbf{s}_t}(\cdot)$, we have
$$
\tilde{y}\leq -k_t \Leftrightarrow F_{\mathbf{s}_t}(\tilde{y})+\frac{2\tilde{y}}{\tilde{c}_b+\tilde{c}_e}\leq F_{\mathbf{s}_t}(-k_t)-\frac{2k_t}{\tilde{c}_b+\tilde{c}_e}.
$$
Then, because $\tilde{y}$ satisfies \eqref{eq:5}, we have 
$$
F_{\mathbf{s}_t}(\tilde{y})+\frac{2\tilde{y}}{\tilde{c}_b+\tilde{c}_e}\leq F_{\mathbf{s}_t}(-k_t)-\frac{2k_t}{\tilde{c}_b+\tilde{c}_e}  \Leftrightarrow  \frac{\tilde{c}_b-2k_t}{\tilde{c}_b+\tilde{c}_e}\leq F_{\mathbf{s}_t}(-k_t)-\frac{2k_t}{\tilde{c}_b+\tilde{c}_e} \Leftrightarrow F_{\mathbf{s}_t}(-k_t)\geq \frac{\tilde{c}_b}{\tilde{c}_b+\tilde{c}_e}.
$$

We prove the upper bound on the optimal quantity by contradiction. If $y^*=\max\{\tilde{y}+k_t,0\}>\tilde{c}_b/2$, then  $$F_{\mathbf{s}_t}(\tilde{y})+\frac{2\tilde{y}}{\tilde{c}_b+\tilde{c}_e}>\frac{\tilde{c}_b-2k_t}{\tilde{c}_b+\tilde{c}_e}.$$ This means that $\tilde{y}$ does not satisfy \eqref{equ_nv_op}, and hence cannot be optimal. 

\noindent\textbf{Proof of Proposition~\ref{prop3}:}
The cumulative distribution function for a uniform distribution over $[\delta - \xi,\delta + \xi]$  is given by
\begin{equation*}
    F_{\mathbf s_t} (y) = \left\{
    \begin{array}{ll}
    0, &\ \mbox{if}\ y\leq \delta - \xi,\\
    \frac{y-\delta+\xi}{2\xi}, &\ \mbox{if}\ \delta-\xi < y \leq \delta+\xi,\\
    1, &\ \mbox{otherwise}.
    \end{array}
    \right.
\end{equation*}
We first use Corollary \ref{cor.1} to obtain that $y^*=0$ if and only if $k_t\leq -\delta+\xi(\tilde{c}_e-\tilde{c}_b)/(\tilde{c}_e+\tilde{c}_b)$ for this distribution.
For all other values of $k_t$, we then find the solution $\tilde{y}$ to Equation~\eqref{equ_nv_op} and let $y^*=\tilde{y}+k_t$.

When $-\delta+\xi(\tilde{c}_e-\tilde{c}_b)/(\tilde{c}_e+\tilde{c}_b)< k_t \leq (\tilde{c}_b/2)-(\delta-\xi)$, $\tilde y$ must be in $(\delta-\xi, \delta+\xi]$ for Equation~\eqref{equ_nv_op} to have a solution. In this case, solving the corresponding equation
\begin{equation*}
\frac{y-\delta+\xi}{2\xi} + \frac{2y}{\tilde{c}_b+ \tilde{c}_e} = \frac{\tilde{c}_b - 2k_t}{\tilde{c}_b + \tilde{c}_e}\label{eq:proof-case2-unif}
\end{equation*}
for $y$, we obtain 
\begin{equation*}
\tilde y=  \frac{\tilde{c}_b(\delta+\xi) + \tilde{c}_e(\delta-\xi)  - 4\xi k_t}{\tilde{c}_b + \tilde{c}_e+4\xi}.
\end{equation*}
When $k_t > (\tilde{c}_b/2)-(\delta-\xi)$, $\tilde y = \frac{\tilde{c}_b}{2} - k_t$ is the only solution to Equation~\eqref{equ_nv_op}.

\noindent\textbf{Proof of Proposition~\ref{prop4}:}
For a  triangular distribution with mode $\delta$ and range $[\delta - \xi,\delta + \xi]$, we have
\begin{equation*}
    F_{\mathbf s_t} (y) = \left\{
    \begin{array}{ll}
    0, &\ \mbox{if}\ y\leq \delta - \xi,\\
    \frac{(y-\delta+\xi)^2}{2\xi^2}, &\ \mbox{if}\ \delta-\xi < y \leq \delta,\\
    1- \frac{(\delta + \xi -y)^2}{2\xi^2}, &\ \mbox{if}\ \delta < y \leq \delta + \xi,\\
    1, &\ \mbox{otherwise}.
    \end{array}
    \right.
\end{equation*}
\noindent\textbf{Case I ($\tilde{c}_e\leq \tilde{c}_b$):} When $k_t\leq -\delta-\xi+\xi\sqrt{2\tilde{c}_e}/\sqrt{\tilde{c}_e+\tilde{c}_b}$, we have $k_t\leq -\delta$ because $\tilde{c}_e\leq \tilde{c}_b$. Then, we have $F_{\mathbf s_t}(-k_t)=1-(\delta+\xi+k_t)^2/(2\xi^2)$, which is greater than or equal to $\tilde{c}_b/(\tilde{c}_b+\tilde{c}_e)$, and hence by Corollary \ref{cor.1}, $y^*=0$. For all other values of $k_t$, we then find the solution $\tilde{y}$ to Equation~\eqref{equ_nv_op} and let $y^*=\tilde{y}+k_t$.

When $-\delta-\xi+\xi\sqrt{2\tilde{c}_e}/\sqrt{\tilde{c}_e+\tilde{c}_b}<k_t \leq (\tilde{c}_b - \tilde{c}_e)/4 -\delta$, $\tilde y$ must be in $(\delta, \delta+\xi]$  for Equation~\eqref{equ_nv_op} to have a solution. In this case, solving the corresponding equation
\begin{equation*}
        1- \frac{(\delta + \xi -y)^2}{2\xi^2} + \frac{2y}{\tilde{c}_b + \tilde{c}_e} = \frac{\tilde{c}_b -2k_t}{\tilde{c}_b + \tilde{c}_e}\label{eq:proof-case3}      
\end{equation*}
for $y$, we obtain
$$\tilde y= \delta + \xi + \frac{2 \xi^2-\xi \sqrt{4\xi^2 +2(\tilde{c}_e+\tilde{c}_b) (\tilde{c}_e+2\xi + 2\Delta_t )}}{\tilde{c}_b + \tilde{c}_e},$$
where the inside of the square root is positive for this range of $k_t$. Similarly, when $(\tilde{c}_b - \tilde{c}_e)/4 -\delta<k_t \leq \tilde{c}_b/2 -\delta+\xi$, $\tilde y$ must be in $(\delta-\xi, \delta]$ for Equation~\eqref{equ_nv_op} to have a solution. In this case, solving the corresponding equation
\begin{equation*}
\frac{(y-\delta+\xi)^2}{2\xi^2} + \frac{2y}{\tilde{c}_b+ \tilde{c}_e} = \frac{\tilde{c}_b - 2k_t}{\tilde{c}_b + \tilde{c}_e}\label{eq:proof-case2}
\end{equation*}
for $y$, we obtain 
$$\tilde y= \delta - \xi - \frac{2 \xi^2-\xi \sqrt{4\xi^2 +2(\tilde{c}_e+\tilde{c}_b) (\tilde{c}_b+2\xi  - 2\Delta_t)}}{\tilde{c}_b + \tilde{c}_e},$$
where the inside of the square root is positive because 
$k_t < \tilde{c}_b/2 - \delta+\xi$. When $k_t > \tilde{c}_b/2-\delta+\xi$, $\tilde y = \tilde{c}_b/2 - k_t$ is the only solution to Equation~\eqref{equ_nv_op}.

\noindent\textbf{Case II ($\tilde{c}_e> \tilde{c}_b$):} When $k_t\leq -\delta $, we have $F_{\mathbf s_t}(-k_t)=1-(\delta+\xi+k_t)^2/(2\xi^2)$, which is greater than $\tilde{c}_b/(\tilde{c}_b+\tilde{c}_e)$ because $k_t\leq-\delta <-\delta-\xi+\xi\sqrt{2\tilde{c}_e}/\sqrt{\tilde{c}_e+\tilde{c}_b}$ as $\tilde{c}_e> \tilde{c}_b$. When $-\delta<k_t\leq -\delta+\xi-\xi\sqrt{2\tilde{c}_b}/\sqrt{\tilde{c}_e+\tilde{c}_b}$, we have $F_{\mathbf s_t}(-k_t)=(-\delta+\xi-k_t)^2/(2\xi^2)$, which is greater than or equal to $\tilde{c}_b/(\tilde{c}_b+\tilde{c}_e)$. Hence, by Corollary \ref{cor.1}, $y^*=0$ when $k_t\leq -\delta+\xi-\xi\sqrt{2\tilde{c}_b}/\sqrt{\tilde{c}_e+\tilde{c}_b}$. For all other values of $k_t$, we then find the solution $\tilde{y}$ to Equation~\eqref{equ_nv_op} and let $y^*=\tilde{y}+k_t$. When $-\delta+\xi-\xi\sqrt{2\tilde{c}_b}/\sqrt{\tilde{c}_e+\tilde{c}_b}<k_t \leq \tilde{c}_b/2 -\delta+\xi$, $\tilde y$ must be in $(\delta-\xi, \delta]$ for Equation~\eqref{equ_nv_op} to have a solution. In this case, solving the corresponding equation, we obtain 
$$\tilde y= \delta - \xi - \frac{2 \xi^2-\xi \sqrt{4\xi^2 +2(\tilde{c}_e+\tilde{c}_b) (\tilde{c}_b+2\xi  - 2\Delta_t)}}{\tilde{c}_b + \tilde{c}_e}.
$$
When $k_t > \tilde{c}_b/2-\delta+\xi$, $\tilde y = \tilde{c}_b/2 - k_t$ is the only solution to Equation~\eqref{equ_nv_op}.

\section{Simulation Models}

We use two simulators developed from 2019 data from a large academic hospital in North Carolina, U.S. The Python simulator is used to train and tune the DQN and MHP algorithms (Section~\ref{app:ps:se1}). The Arena simulator, adapted from \cite{chen2022using}, is updated to reflect 2019 operational practices and input distributions estimated from the new data (Section~\ref{app:as:se1}). We validate the Arena simulator by comparing its outputs under standard 2019 inpatient bed-requesting practices with observed ED data (Section~\ref{app:as:se2}). Section~\ref{app:as:se3} describes how the proposed policies are integrated into the simulator, and Section~\ref{app:as:se4} summarizes the input analysis for the Arena simulator.

\subsection{Python Simulator}
\label{app:ps:se1}

The Python simulator provides a low-granularity, fast-executing environment for training and tuning the DQN and MHP heuristics. It abstracts away less critical operational details while preserving the key dynamics of ED arrivals, service, admission decisions, boarding, and hospital bed requests. The simulator incorporates an admission prediction tool (APT), adapted from \cite{mehrotra2017}, which classifies patients into two equal-sized groups: low admission probability (Type-I) and high admission probability (Type-II). Type-I and Type-II patients arrive according to nonhomogeneous Poisson processes with hourly varying rates. ED capacity is assumed to be infinite, and ED service times, representing combined triage, waiting, and workup durations, are exponentially distributed with rates that depend on patient type and arrival hour. These rates are calibrated using historical data from calendar year 2019.

After completing ED service, a patient is either discharged or admitted according to a Bernoulli disposition decision. The estimated admission probabilities are 0.0848 for Type-I and 0.5770 for Type-II patients. Discharged patients leave the system immediately, whereas admitted patients board until a previously requested hospital bed becomes available and all earlier boarders have been roomed. Every hour, the decision maker observes the number of patients in service and boarding, as well as the number of requested hospital beds, and determines how many additional beds to request. The simulator assumes infinite hospital bed capacity, with exponentially distributed bed preparation times whose rates depend on the request hour and are calibrated using historical boarding-time data. Table~\ref{app:combined} reports the hourly arrival rates, mean ED service times, and mean hospital bed preparation times used in the Python simulator.

\begin{table}[ht!]
\begin{center}
\fontsize{9}{8}\selectfont
\caption{Expected number of hourly arrivals and ED service times for Type-I and Type-II patients; and hospital bed preparation times used in the Python simulator.}
\label{app:combined}
\begin{tabular}{||c||c|c|c|c|c||} \hline
Hour of Day & \begin{tabular}[c]{@{}c@{}}Type-I\\ Arrival Rate\\ (patients per hour)\end{tabular} & \begin{tabular}[c]{@{}c@{}}Type-II\\ Arrival Rate\\ (patients per hour)\end{tabular} & \begin{tabular}[c]{@{}c@{}}Type-I\\ Service Time\\ (hours)\end{tabular} & \begin{tabular}[c]{@{}c@{}}Type-II\\ Service Time\\ (hours)\end{tabular} & \begin{tabular}[c]{@{}c@{}}Average Bed\\ Preparation Time\\ (hours)\end{tabular} \\ \hline\hline
0           & 2.40 & 2.09 & 3.57 & 4.65 & 5.22 \\
1           & 1.90 & 1.76 & 3.55 & 4.54 & 5.40 \\
2           & 1.56 & 1.42 & 3.25 & 4.65 & 5.66 \\
3           & 1.25 & 1.25 & 3.30 & 4.27 & 5.92 \\
4           & 1.25 & 1.19 & 3.04 & 3.85 & 5.45 \\
5           & 1.18 & 1.12 & 3.39 & 4.09 & 6.14 \\
6           & 1.44 & 1.17 & 3.49 & 4.50 & 6.96 \\
7           & 1.96 & 1.50 & 3.53 & 4.41 & 6.35 \\
8           & 3.11 & 2.35 & 3.45 & 4.52 & 5.46 \\
9           & 4.33 & 3.71 & 3.64 & 4.44 & 4.99 \\
10          & 4.87 & 4.76 & 3.87 & 4.63 & 5.06 \\
11          & 5.13 & 5.17 & 3.91 & 4.86 & 4.89 \\
12          & 4.63 & 5.73 & 3.99 & 4.97 & 4.65 \\
13          & 4.54 & 5.53 & 4.09 & 5.15 & 4.44 \\
14          & 4.06 & 5.50 & 4.19 & 5.08 & 4.47 \\
15          & 4.18 & 5.76 & 4.21 & 5.39 & 4.38 \\
16          & 4.44 & 5.34 & 4.28 & 5.40 & 4.38 \\
17          & 4.45 & 5.05 & 4.07 & 5.13 & 4.73 \\
18          & 4.67 & 4.85 & 4.07 & 5.07 & 5.09 \\
19          & 4.82 & 4.54 & 4.11 & 5.05 & 5.19 \\
20          & 5.31 & 3.99 & 3.91 & 4.81 & 5.00 \\
21          & 4.41 & 3.61 & 3.86 & 5.07 & 5.33 \\
22          & 3.79 & 2.99 & 3.89 & 4.88 & 5.36 \\
23          & 3.13 & 2.44 & 3.68 & 4.52 & 5.30 \\ \hline
\end{tabular}
\end{center}
\end{table}

\subsection{Arena Simulator}
\label{app:as:se1}

We use an Arena-based simulator, which is developed using the discrete-event simulation software Arena version 16.1 \cite{Arena}, as the testbed for evaluating the proposed  policies. Through a comprehensive exploratory analysis, we identified twenty distinct patient categories necessary for a realistic representation of the actual system. These categories are defined by combinations of five triage levels, two age groups (pediatric vs. adult), and two disposition outcomes (admit vs. discharge). Each patient category is characterized by unique arrival rates, time-to-disposition distributions, and admission probabilities, which are used to classify patients as either low-admit or high-admit types. While the simulator keeps track of and distinguishes all twenty categories, in alignment with real-life scenarios, the decision maker is only informed of a patient's disposition after their ED workup is complete. 

Upon arrival at the ED, a patient enters triage, where a nurse assesses their condition and assigns them to one of five triage categories based on acuity and resource needs. If an appropriate ED bed is available, the patient is roomed immediately. Otherwise, they wait until all higher-priority patients are accommodated and a suitable bed becomes available. Once roomed, the patient undergoes two stages of care. The first stage, referred to as the \textit{ED workup}, involves diagnostic tests, treatments, and clinical evaluation leading to a disposition decision (discharge or admission). This decision marks the start of the second stage: \textit{boarding} (for admissions) or \textit{discharge boarding} (for discharges). For discharged patients, discharge boarding is the period after medical clearance during which the patient remains in the ED while administrative and logistical tasks are completed. Only after discharge boarding ends does the patient leave and the ED bed become available for a waiting patient. For admitted patients, the second stage is boarding, during which they remain in the ED until a hospital bed is available. This stage also includes identifying care teams as well as preparing the hospital bed for the transfer of the patient, which we call the transfer preparation process (TPP). If a hospital bed is available at the time of the request, the boarding time equals the TPP duration. Otherwise, it includes both the wait time for a bed and the TPP duration. Once the hospital bed is available for use by the admitted patient, the patient is transferred, and their ED bed is freed. The patient’s hospital stay begins upon transfer. Since our primary interest lies in the stay duration in the ED rather than in-hospital processes, we model the entire hospital stay as a single stage of service. At the end of the hospital stay, the hospital bed is released and becomes available for the next ED-admitted patient.

\noindent \textbf{Arrival Processes:} Following the approach of  \cite{chen2022using}, we model the patient arrival processes using non-homogeneous Poisson processes. To ensure a robust fit, we adopt the methodology described in \cite{brown2005statistical}, dividing patients into 672 subgroups based on hour of the day, day of the week, and season, so that arrival rates within each subgroup are approximately constant. To address rounding errors in data recording, we apply a small uniform random noise to each data point. We then conduct a Kolmogorov-Smirnov (K-S) test to evaluate the null hypothesis that the arrival process within each subgroup is a Poisson process with a constant rate. Encouragingly, the null hypothesis is rejected in only 52 of the 672 subgroups, supporting the use of non-homogeneous Poisson processes for modeling arrival patterns. Our exploratory analysis also reveals that arrival patterns for different patient categories exhibit similar daily trends across certain weekdays. After extensive experimentation, we group arrival rates by 24-hour time slots, three weekday categories (Monday, Tuesday -– Friday, and Saturday -–Sunday), and 20 patient categories, resulting in a parsimonious yet flexible representation. Detailed arrival rate estimates are provided in Tables~\ref{app:ase:t0} through \ref{app:ase:pediatric-weekend-arrivals}.

\noindent \textbf{Triage, waiting area, and bed assignment:} Upon arrival at the ED, patients join a first-come-first-served triage queue staffed by  two triage nurses.   Triage duration (in minutes) follows a triangular distribution with a minimum of 5, maximum of 15, and a mode of 10 based on observed data. The triage process assigns each patient an Emergency Severity Index (ESI),  which in turn determines their priority in queue. (ESI is the most commonly used ED triage system in US and has five levels; level 1 being the most emergent and level 5 being the least urgent.) Patients are then assigned to an available bed; if none is available, they wait until all higher-priority patients have been accommodated.

The study ED has four care areas. Team A (14 beds, including 2 trauma beds for the highest-acuity patients) and Team B (16 beds) operate 24/7 as the primary adult treatment areas. Team D functions as a fast track for adult patients, operating from 7:00 am to 2:00 am, with capacity increasing during the day: 10 beds at 7 am, 15 beds at 11 am, and 20 beds at 3 pm. Team Ped serves pediatric patients, operating 24/7 with 9 beds. These capacities, provided by the ED manager, serve as reference values; actual ED capacity may vary through flexible measures such as opening additional areas or using hallway beds. As detailed in Section \ref{app:as:se2}, we adjust these bed capacities in the model for calibration. A patient’s priority level and suitable care areas are determined jointly by pediatric status and ESI levels. In our dataset, most patients fall into ESI level 3. To better represent real-world variation within this large group, we randomly split these patients into two subcategories of equal size: acute ESI 3 and non-acute ESI 3. Table~\ref{app:as1} summarizes these classifications.

\begin{table}[!ht]
\centering
\caption{Suitable ED Care Areas with respect to Age Group, Triage Level, and Priority Order.}
\label{app:as1}
\begin{tabular}{||c||c|c|c||}
\hline
Age Group & ESI           & Priority Order& \begin{tabular}[c]{@{}c@{}}Suitable Care Areas\\ (In order of preference as capacity allows)\end{tabular} \\ \hline \hline
Adult     & 1             & 1        & Team A Reserved, Team A, Team B                                                                      \\
Adult     & 2             & 2        & Team A, Team B                                                                                       \\
Adult     & 3 (Acute)     & 3        & Team A, Team B                                                                                       \\
Adult     & 3 (Non-acute) & 4        & Team A, Team B, Team D                                                                               \\
Adult     & 4 - 5         & 4        & 
Team D when operating; otherwise, Team A, Team B          \\
Pediatric & 1             & 1        & Team Ped, Team A, Team B                                                                             \\
Pediatric & 2             & 2        & Team Ped, Team A, Team B                                                                             \\
Pediatric & 3             & 3        & Team Ped, Team A, Team B                                                                             \\
Pediatric & 4 - 5         & 4        & Team Ped                                                                                             \\ \hline
\end{tabular}
\end{table}

\noindent \textbf{First-Stage ED Service:} Once assigned to an ED bed, the patient’s workup begins. In reality, this phase involves a complex sequence of evaluations, diagnostic tests, treatments, and monitoring, all constrained by resources and dependent on interactions with providers. Its duration is affected by staffing levels and resource availability. However, our data lack the granularity needed to estimate these factors reliably, and modeling them in full detail would be computationally costly. Since our focus is not on reducing workup time, we model the ED workup as a single activity. To reflect variation in resources and staffing throughout the day, this duration depends on both the patient’s age group and the time of day at which workup begins. Table~\ref{app:ase:workup-duration} presents the corresponding probability distributions.

\noindent \textbf{Second-Stage ED Service}: For discharged patients, the second stage is the discharge boarding process, during which the patient remains in the ED bed while staff complete all logistical tasks. Since our primary focus is on admitted patients, we model discharge boarding as a single activity. This duration’s distribution and parameters are estimated from data, varying by patient ESI level and the disposition time. Table~\ref{app:ase:second-stage-discharge} provides the corresponding estimates. For an admitted patient, the disposition decision initiates the boarding process, during which they remain in the ED bed until transfer to the hospital. Following standard procedures at most US EDs, a hospital bed request is sent immediately after the admission decision. Beds are assigned on a first-come-first-served basis. If a suitable hospital bed is available with no pending earlier requests, it is assigned immediately and the TPP begins. Otherwise, the request queues until prior requests are fulfilled and a bed becomes available, at which point TPP starts. We model the boarding process based on this logic and the approach outlined in \cite{shi2016models}, with boarding duration determined jointly by hospital bed availability and TPP length. Unlike \cite{shi2016models}, which separates delays into pre- and post-allocation periods, we treat TPP as a single delay due to limited data and modeling efficiency. This single delay effectively combines pre- and post-allocation times, with hospital bed capacity already incorporated into the simulation. Similar to \cite{shi2016models} and \cite{chen2022using}, TPP follows a truncated lognormal distribution with a mean of 3.3 hours, coefficient of variation of 0.6, and maximum of 12 hours.

\noindent \textbf{Hospital Stay:} Since our data  covers only ED encounters, we lack direct estimates for hospital length of stay (hLoS). Furthermore, it is not straightforward to set a capacity to hospital beds used by ED patients, as in reality, the hospital bed capacity may be adjusted in response to varying needs and there could be other sources of patients. To address these challenges, we follow the approach of \cite{chen2022using}, employing the two-time-scale approach from \cite{shi2016models} to represent hLoS in days (hLoS-day) and hours (hLoS-hour). This framework models hLoS by number of nights stayed and hours spent on the discharge day. Due to insufficient data for direct estimation, we adapt parameter estimates from \cite{shi2014patient} with minor modifications. For hLoS-day, we primarily use parameters from Table 11 of \cite{shi2014patient}, distinguishing between patients admitted before noon (ED-AM) and after noon (ED-PM) as in \cite{shi2016models}. According to our partner hospital’s operational practices, we consider that an ED-AM patient is a same-day patient (discharged on the same day as admission) with a probability of 0.1129, meaning that this patient will have hLoS-day of zero. For other ED-AM
patients, we assume their hLoS-day is equal to $\min(21, \lceil X \rceil)$, where $X$ follows a lognormal distribution with a
mean of 3.86 days and a standard deviation of 3.93 days. ED-PM patients, who cannot be same-day patients,
have an hLoS-day of $\min(21, \lceil Y \rceil)$, where $Y$ follows a lognormal distribution with mean of 4.51 days and
standard deviation of 4.2 days. For hLoS-hour, we modify and use the 
period-1 empirical distribution from Table 1 of \cite{shi2014patient}. Specifically, non–same-day patients follow this distribution directly. Same-day discharges, assumed to occur only in the afternoon, use empirical probabilities conditional on afternoon discharge hours. Table~\ref{app:as2} summarizes these empirical probabilities from Table 1 of \cite{shi2014patient}. 

To model hospital capacity available to ED-admitted patients, we assume a fixed number of beds are exclusively reserved for this group. We represent this reserved capacity as a multi-server, first-come-first-served queue with homogeneous servers dedicated solely to ED-admitted patients. Since our data do not allow reliable estimation of the reserved bed count, we treat this number as a tuning parameter, selecting the value that best aligns the simulated second-stage durations with observed data as part of the validation step of this simulation study.

\begin{table}[!ht]
\centering
\caption{Empirical Probability Distribution Used for hLoS-hour.}
\label{app:as2}
\begin{tabular}{||c|c|c||c|c|c||}
\hline
\multirow{2}{*}{hLoS-hour} & \textbf{Same-Day} & \textbf{Non Same-Day} & \multirow{2}{*}{hLoS-hour} & \textbf{Same-Day} & \textbf{Non Same-Day} \\
 & \textbf{Patient} & \textbf{Patient} & & \textbf{Patient} & \textbf{Patient} \\ \hline\hline
0--1   & 0.0000 & 0.0015 & 12--13 & 0.1119 & 0.0977 \\
1--2   & 0.0000 & 0.0015 & 13--14 & 0.2221 & 0.1939 \\
2--3   & 0.0000 & 0.0011 & 14--15 & 0.2948 & 0.2574 \\
3--4   & 0.0000 & 0.0009 & 15--16 & 0.1209 & 0.1056 \\
4--5   & 0.0000 & 0.0010 & 16--17 & 0.0696 & 0.0608 \\
5--6   & 0.0000 & 0.0011 & 17--18 & 0.0511 & 0.0446 \\
6--7   & 0.0000 & 0.0015 & 18--19 & 0.0421 & 0.0368 \\
7--8   & 0.0000 & 0.0007 & 19--20 & 0.0371 & 0.0324 \\
8--9   & 0.0000 & 0.0016 & 20--21 & 0.0292 & 0.0255 \\
9--10  & 0.0000 & 0.0132 & 21--22 & 0.0121 & 0.0106 \\
10--11 & 0.0000 & 0.0369 & 22--23 & 0.0054 & 0.0047 \\
11--12 & 0.0000 & 0.0655 & 23--24 & 0.0037 & 0.0032 \\
\hline
\end{tabular}
\end{table}

\subsection{Validation of the Arena Simulator} 
\label{app:as:se2}

Because the Arena simulator serves as the test bed for our framework, it must provide a realistic and reliable basis for evaluating potential implementation. Since the 2019 hospital data used for validation reflect current practice, in which beds are requested only after an admission decision, we validate the simulator by comparing its outputs without aggregate early bed requests to observed 2019 ED data. We use ED LoS as the primary validation metric, but also include first- and second-stage durations as secondary validation metrics.

The capacities of the four ED care areas and the inpatient unit are used as tuning parameters. The ED manager provided baseline values for the care areas (Section \ref{app:as:se1}), though in practice these can expand temporarily using hallways or stretchers. Hospital bed capacity, lacking a precise estimate, is adjusted as well to align simulated second-stage durations with observed data. After testing multiple combinations, the following configuration best matched reality: Team A capacity increased from 12 + 2 (reserved) to 16 + 2 (reserved), Team B from 16 to 19, Team D from 10 to 13 between 7 am -– 11 am, and Team Peds from 9 to 10; hospital bed capacity fixed at 337.

\begin{figure}[!ht]
    \centering
    \includegraphics[scale=0.40]{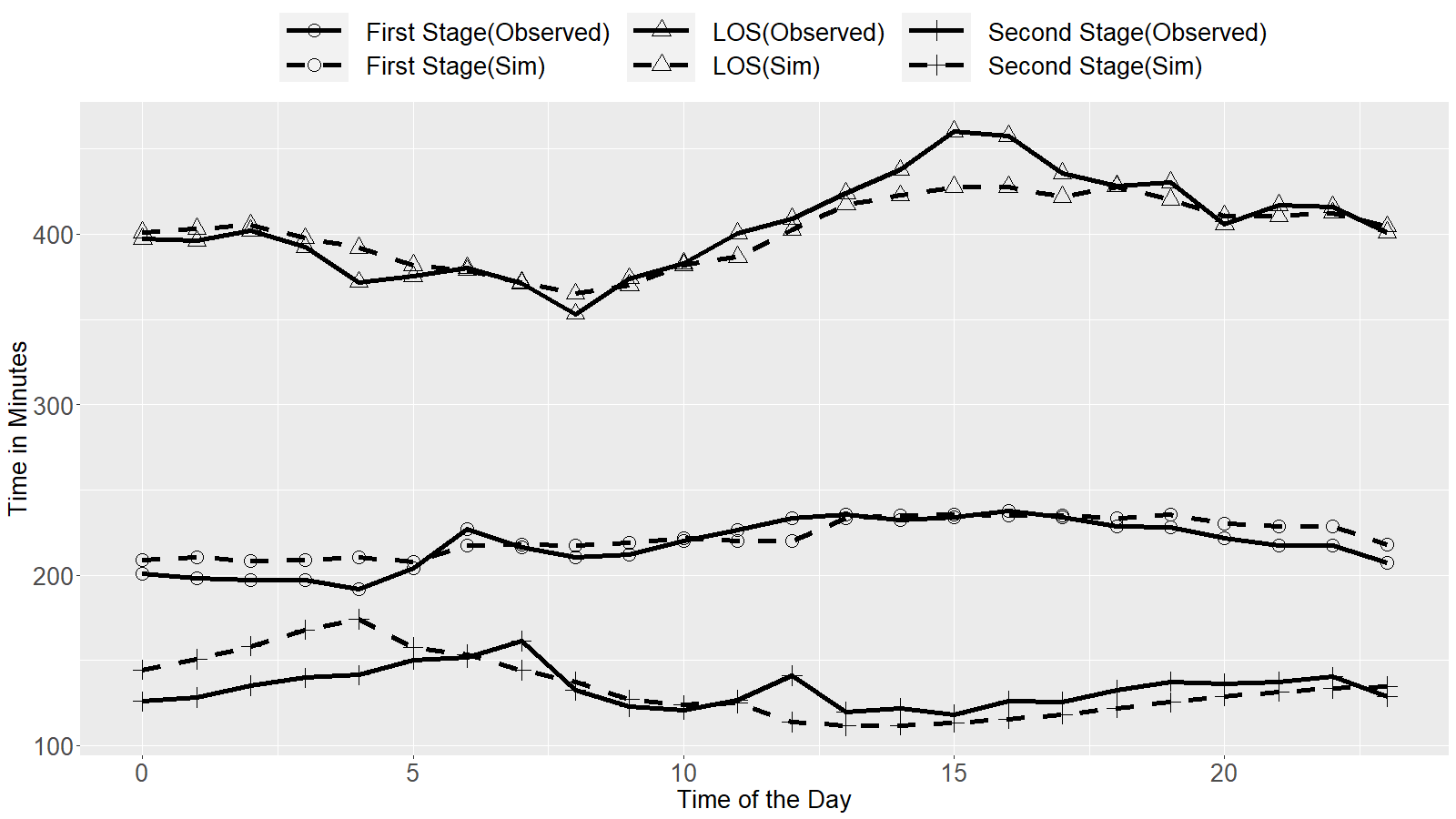}
    \caption{Observed and Simulated ED LoS, First-Stage, and Second-Stage Durations.}
    \label{fig:app:as:cm}
\end{figure}

Figure~\ref{fig:app:as:cm} compares average ED LoS, first-stage duration, and second-stage duration on an hourly basis using both observed and simulated data. The simulation runs for 1,000 years with a one-year warm-up. On the $x$-axis, each value denotes a one-hour slot (e.g., 0 is midnight to 1:00 am, and 1 represents 1:00 am to 2:00 am). ED LoS averages are computed over  patients arriving in that time slot; first-stage averages over patients roomed in that slot; and second-stage averages over patients receiving a disposition decision in that slot.
These plots show that the simulated metrics closely track those from observed data. Also, a Welch two-sample $t$-test fails to reject the null hypothesis that mean simulated ED LoS equals the observed mean at a significance level of 0.05. We therefore conclude that the Arena simulator reasonably captures reality, supporting its use for evaluating policies that are designed to improve average ED LoS.
 
\subsection{Implementing the Framework and Resulting Policies in ARENA}
\label{app:as:se3}

We here discuss how the proposed early bed request framework and proposed heuristics are implemented within the Arena simulator, which is built and validated under the current practice. We first assume that at the time a patient’s disposition decision is made, if the number of boarding patients awaiting early requested hospital beds equals the number of outstanding early bed requests, an individual hospital bed request is placed immediately for that patient instead of waiting to include it in the aggregate early bed requests at the start of the next hour. This individual bed request is specifically tied to the patient, distinguishing it from the aggregate early bed requests. Hence, a bed request is made for every patient at or before their disposition decision.

The Arena simulator models 20 patient categories, while the Python simulator includes only two types: Type-I and Type-II. To use the policies trained and tuned using the Python simulator, we first establish a mapping from the 20 patient categories in the Arena simulator to the two types in the Python simulator. We assume that the classification of a patient as Type-I in the Arena simulator follows a Bernoulli distribution, with probabilities depending on the patient’s age group (pediatric or adult) and ESI level, but independent of the final disposition decision. Probability estimates based on the 2019 ED data and the admission prediction tool used in this study are as follows: For pediatric patients, probabilities for Type-I patients are 0.026, 0.116, 0.765, 1.000, and 0.996 for ESI levels 1 through 5, respectively. For adult patients, the corresponding probabilities are 0.025, 0.060, 0.467, 0.983, and 0.921.
A patient’s classification as Type-I or Type-II is revealed upon completion of triage. When the patient’s ED workup ends, their admission status becomes known, and the counts of Type-I and Type-II patients are decreased accordingly. Every hour, the decision maker observes the number of Type-I and Type-II patients currently waiting or undergoing ED workup. Additionally, the number of patients boarding and awaiting early requested beds as well as the number of outstanding early bed requests are monitored. This information, combined with the active policy, guides the decision maker in determining the appropriate number of early bed requests to place with the hospital according to the heuristic policy used.

Early bed requests have the same priority as regular bed requests and are processed similarly. Specifically, once hospital beds have been assigned to all the bed requests preceding this request, if another hospital bed is available, the bed is assigned to this request and the TPP starts immediately. If no hospital bed is available at that time, the TPP can only start when another hospital bed becomes available. Upon completion of the TPP, the early requested hospital bed will be instantly occupied if there is a boarding patient waiting for an early requested hospital bed, otherwise, it joins a first-come-first-served queue and waits to be occupied.

A potential risk of early bed requests is that, when hospital capacity is extremely constrained, early bed requests could further overwhelm the bed-request queue and increase boarding times for patients who are ultimately admitted; see Section 5.4 in \cite{chen2022using}. To mitigate this risk, for every policy tested in our simulations under SOS, we pause early bed requests when the number of outstanding early bed requests is 55 or higher; regular bed requests, however, are still placed whenever needed. This cutoff of 55 was chosen as it  corresponds to the average daily number of hospital admissions in 2019. We raise the cap for the outstanding early bed requests from 55 to 65 under PRS.

\subsection{Input Distributions for the Arena Simulator}
\label{app:as:se4}

In this section, we provide a summary of the probability distributions and parameter estimates that are implemented in the Arena simulator. 
Tables~\ref{app:ase:t0} through \ref{app:ase:pediatric-weekend-arrivals} provide the arrival rates estimated from the 2019 arrival data and used for the non-homogeneous Poisson processes in our simulator. These rates depend on patients' age groups, ESI levels, and final disposition decisions. As noted earlier, since the arrival patterns are similar on certain weekdays, we estimate the arrival rates separately for Monday, Tuesday through Friday, and the weekend (Saturday and Sunday). Each one-hour slot is represented by an integer from 0 to 23, where 0 corresponds to 12:00am -- 12:59am and 23 to 11:00pm-- 11:59pm.

\begin{table}[!ht]
\renewcommand{\arraystretch}{1}
\setlength{\tabcolsep}{3pt}
\centering
\caption{Hourly Arrival Rates on Weekdays for Adult Admit Patients.}
\label{app:ase:t0}
{\fontsize{9}{8}\selectfont
\begin{tabular}{c|ccccc|ccccc}
\hline
& \multicolumn{5}{c|}{Monday} & \multicolumn{5}{c}{Tuesday--Friday} \\
Hour of Day & ESI 1 & ESI 2 & ESI 3 & ESI 4 & ESI 5 & ESI 1 & ESI 2 & ESI 3 & ESI 4 & ESI 5 \\ \hline
0  & 0.076923 & 0.538462 & 0.615385 & 0        & 0        & 0.033493 & 0.344498 & 0.578947 & 0.047847 & 0        \\
1  & 0.096154 & 0.403846 & 0.480769 & 0.038462 & 0        & 0.047847 & 0.315790 & 0.377990 & 0.028708 & 0.004785 \\
2  & 0.038462 & 0.326923 & 0.557692 & 0.038462 & 0        & 0.023923 & 0.210526 & 0.531101 & 0.019139 & 0        \\
3  & 0.019231 & 0.307692 & 0.346154 & 0.038462 & 0        & 0.033493 & 0.320574 & 0.406699 & 0.019139 & 0        \\
4  & 0.019231 & 0.346154 & 0.442308 & 0.019231 & 0        & 0.028708 & 0.200957 & 0.387560 & 0.004785 & 0        \\
5  & 0.057692 & 0.230769 & 0.442308 & 0.019231 & 0        & 0.023923 & 0.210526 & 0.344498 & 0.014354 & 0        \\
6  & 0        & 0.192308 & 0.480769 & 0        & 0        & 0.033493 & 0.215311 & 0.444976 & 0.009569 & 0.004785 \\
7  & 0.038462 & 0.442308 & 0.596154 & 0.019231 & 0        & 0.014354 & 0.291866 & 0.526316 & 0.023923 & 0        \\
8  & 0.076923 & 0.519231 & 0.730769 & 0.057692 & 0        & 0.052632 & 0.421053 & 0.923445 & 0.028708 & 0.004785 \\
9  & 0.115385 & 0.807692 & 1.807692 & 0.019231 & 0.019231 & 0.105263 & 0.779904 & 1.416268 & 0.019139 & 0        \\
10 & 0.038462 & 1.442308 & 1.615385 & 0.038462 & 0        & 0.095694 & 1.033493 & 1.885168 & 0.057416 & 0        \\
11 & 0.134615 & 1.442308 & 2.211539 & 0.019231 & 0        & 0.100478 & 1.248804 & 2.100479 & 0.033493 & 0        \\
12 & 0.134615 & 1.903846 & 2.134615 & 0.019231 & 0        & 0.119617 & 1.253589 & 2.009569 & 0.043062 & 0        \\
13 & 0.115385 & 1.538462 & 2.173077 & 0.096154 & 0.019231 & 0.133971 & 1.392345 & 2.181818 & 0.043062 & 0        \\
14 & 0.096154 & 1.519231 & 2.057692 & 0.019231 & 0        & 0.086124 & 1.377990 & 2.009569 & 0.009569 & 0        \\
15 & 0.211538 & 1.980769 & 1.846154 & 0.096154 & 0        & 0.100478 & 1.397129 & 1.995215 & 0.019139 & 0        \\
16 & 0.153846 & 1.750000 & 1.596154 & 0.057692 & 0        & 0.129187 & 1.459330 & 1.789474 & 0.023923 & 0        \\
17 & 0.153846 & 1.576923 & 1.692308 & 0.019231 & 0        & 0.110048 & 1.258373 & 1.660287 & 0.033493 & 0        \\
18 & 0.192308 & 1.442308 & 1.538462 & 0.038462 & 0        & 0.119617 & 1.770335 & 1.540670 & 0.033493 & 0        \\
19 & 0.057692 & 1.096154 & 1.769231 & 0.076923 & 0        & 0.110048 & 0.990431 & 1.521531 & 0.081340 & 0        \\
20 & 0.076923 & 1.269231 & 1.423077 & 0.038462 & 0        & 0.119617 & 0.928230 & 1.191388 & 0.062201 & 0        \\
21 & 0.134615 & 0.923077 & 1.019231 & 0.057692 & 0        & 0.076555 & 0.842105 & 1.100479 & 0.047847 & 0        \\
22 & 0.192308 & 0.865385 & 0.846154 & 0.057692 & 0        & 0.052632 & 0.751196 & 0.880383 & 0.028708 & 0        \\
23 & 0.038462 & 0.634615 & 0.788462 & 0.019231 & 0        & 0.062201 & 0.497608 & 0.755981 & 0.033493 & 0  \\ \hline     
\end{tabular}
}
\end{table}

\begin{table}[!ht]
\renewcommand{\arraystretch}{1}
\setlength{\tabcolsep}{3pt}
\centering
\caption{Hourly Arrival Rates on Weekdays for Adult Discharge Patients.}
\label{app:ase:adult-discharge-arrivals}
{\fontsize{9}{8}\selectfont
\begin{tabular}{c|ccccc|ccccc}
\hline
& \multicolumn{5}{c|}{Monday} & \multicolumn{5}{c}{Tuesday--Friday} \\
Hour of Day & ESI 1 & ESI 2 & ESI 3 & ESI 4 & ESI 5 & ESI 1 & ESI 2 & ESI 3 & ESI 4 & ESI 5 \\
\hline
0  & 0        & 0.153846 & 1.711539 & 0.384615 & 0.057692 & 0.009569 & 0.272727 & 1.330144 & 0.511962 & 0.057416 \\
1  & 0.076923 & 0.173077 & 1.134615 & 0.346154 & 0.019231 & 0.014354 & 0.239234 & 1.100479 & 0.488038 & 0.043062 \\
2  & 0        & 0.230769 & 1        & 0.269231 & 0.019231 & 0.023923 & 0.133971 & 0.808612 & 0.392345 & 0.052632 \\
3  & 0.019231 & 0.173077 & 0.75     & 0.230769 & 0.057692 & 0.019139 & 0.08134  & 0.736842 & 0.291866 & 0.038278 \\
4  & 0        & 0.192308 & 1        & 0.307692 & 0.019231 & 0        & 0.129187 & 0.717703 & 0.244019 & 0.038278 \\
5  & 0        & 0.057692 & 0.846154 & 0.269231 & 0        & 0.014354 & 0.07177  & 0.784689 & 0.296651 & 0.038278 \\
6  & 0.019231 & 0.057692 & 1.211539 & 0.230769 & 0.096154 & 0        & 0.076555 & 0.942584 & 0.368421 & 0.066986 \\
7  & 0.019231 & 0.076923 & 1.346154 & 0.557692 & 0        & 0        & 0.124402 & 1.440191 & 0.488038 & 0.028708 \\
8  & 0.019231 & 0.230769 & 2.75     & 0.865385 & 0.038462 & 0.014354 & 0.267943 & 2.177034 & 0.746412 & 0.057416 \\
9  & 0.057692 & 0.576923 & 3.461539 & 1.019231 & 0.057692 & 0.028708 & 0.392345 & 3.440191 & 0.889952 & 0.057416 \\
10 & 0.019231 & 0.673077 & 5.019231 & 1.288462 & 0.076923 & 0.038278 & 0.588517 & 3.837321 & 0.880383 & 0.071703 \\
11 & 0        & 0.673077 & 4.384615 & 1.423077 & 0.057692 & 0.043062 & 0.593301 & 3.861244 & 0.904306 & 0.066986 \\
12 & 0.038462 & 0.884615 & 3.75     & 1.153846 & 0.115385 & 0.062201 & 0.641148 & 3.641148 & 1.009569 & 0.047847 \\
13 & 0.038462 & 0.480769 & 4.057692 & 1.076923 & 0.134615 & 0.009569 & 0.061244 & 3.354067 & 0.803828 & 0.057416 \\
14 & 0.019231 & 0.75     & 3.096154 & 1.230769 & 0.057692 & 0.062201 & 0.660287 & 3.019139 & 0.751196 & 0.038278 \\
15 & 0.076923 & 0.826923 & 3.288462 & 0.75     & 0.038462 & 0.033493 & 0.688995 & 3.186603 & 0.84689  & 0.043062 \\
16 & 0.038462 & 0.980769 & 3.115385 & 1.076923 & 0.057692 & 0.052632 & 0.655502 & 3.282297 & 0.799043 & 0.062201 \\
17 & 0.019231 & 0.538462 & 3.269231 & 1.038462 & 0.076923 & 0.028708 & 0.660287 & 2.913876 & 0.77512  & 0.052632 \\
18 & 0.096154 & 0.865385 & 2.653846 & 0.961539 & 0.096154 & 0.033493 & 0.636364 & 3.019139 & 0.899522 & 0.033493 \\
19 & 0.096154 & 0.615385 & 3.596154 & 0.807692 & 0.057692 & 0.023923 & 0.607656 & 2.947368 & 0.856459 & 0.062201 \\
20 & 0.057692 & 0.423077 & 2.711539 & 1.038462 & 0.096154 & 0.066986 & 0.564593 & 2.84689  & 1.047847 & 0.08134  \\
21 & 0.038462 & 0.461538 & 2.788462 & 0.788462 & 0.038462 & 0.043062 & 0.4689   & 2.397129 & 0.91866  & 0.07177  \\
22 & 0.019231 & 0.442308 & 2.423077 & 0.711539 & 0.096154 & 0.033493 & 0.37799  & 2.129187 & 0.813397 & 0.076555 \\
23 & 0.019231 & 0.25     & 1.538462 & 0.769231 & 0.076923 & 0.023292 & 0.296651 & 1.669857 & 0.588517 & 0.052632 \\
\hline
\end{tabular}
}
\end{table}

\begin{table}[!ht]
\renewcommand{\arraystretch}{1}
\setlength{\tabcolsep}{3pt}
\centering
\caption{Hourly Arrival Rates on Saturdays and Sundays for Adult Patients.}
\label{app:ase:adult-weekend-arrivals}
{\fontsize{9}{8}\selectfont
\begin{tabular}{c|ccccc|ccccc}
\hline
 & \multicolumn{5}{c|}{Admit Patients} & \multicolumn{5}{c}{Discharge Patients} \\
Hour of Day & ESI 1 & ESI 2 & ESI 3 & ESI 4 & ESI 5 & ESI 1 & ESI 2 & ESI 3 & ESI 4 & ESI 5 \\
\hline
0  & 0.153846 & 0.442308 & 0.692308 & 0.028846 & 0 & 0.028846 & 0.317308 & 1.548077 & 0.701923 & 0.038462 \\
1  & 0.038462 & 0.471154 & 0.538462 & 0.009615 & 0 & 0.048077 & 0.336539 & 1.326923 & 0.644231 & 0.040877 \\
2  & 0.057692 & 0.365385 & 0.461539 & 0.028846 & 0 & 0.048077 & 0.288462 & 1.134615 & 0.528846 & 0.028846 \\
3  & 0.086538 & 0.365385 & 0.451923 & 0.019231 & 0 & 0.038462 & 0.173077 & 0.778846 & 0.461539 & 0.067308 \\
4  & 0.076923 & 0.278846 & 0.509615 & 0.009615 & 0 & 0.009615 & 0.153846 & 0.961539 & 0.365385 & 0.048077 \\
5  & 0.076923 & 0.288462 & 0.413462 & 0.009615 & 0 & 0.028846 & 0.115385 & 1.134615 & 0.278846 & 0.019231 \\
6  & 0.028846 & 0.269231 & 0.336585 & 0        & 0 & 0.009615 & 0.182692 & 1.038462 & 0.442308 & 0.028846 \\
7  & 0.028846 & 0.250000 & 0.490385 & 0.019231 & 0 & 0.019231 & 0.163462 & 1.509615 & 0.451923 & 0.038462 \\
8  & 0.057692 & 0.403846 & 0.951923 & 0.019231 & 0 & 0.009615 & 0.201923 & 1.875000 & 0.586539 & 0.038462 \\
9  & 0.048077 & 0.596154 & 1.115385 & 0.019231 & 0 & 0        & 0.298077 & 2.759615 & 0.663462 & 0.067308 \\
10 & 0.038462 & 0.692308 & 1.298077 & 0.009615 & 0 & 0.019231 & 0.336539 & 3.394231 & 0.846154 & 0.038462 \\
11 & 0.076923 & 0.605769 & 1.355769 & 0.028846 & 0 & 0.038462 & 0.490385 & 3.394231 & 0.942308 & 0.105769 \\
12 & 0.076923 & 0.846154 & 1.355769 & 0.028846 & 0 & 0.009615 & 0.461539 & 3.538462 & 1.067308 & 0.038462 \\
13 & 0.038462 & 0.836539 & 1.509615 & 0        & 0 & 0.038462 & 0.461539 & 3.355769 & 1.019231 & 0.038462 \\
14 & 0.048077 & 0.798077 & 1.490385 & 0.019231 & 0 & 0.009615 & 0.413462 & 3.326923 & 0.923077 & 0.038462 \\
15 & 0.105769 & 0.990385 & 1.548077 & 0.048077 & 0 & 0.067308 & 0.625000 & 2.817308 & 0.980769 & 0.067308 \\
16 & 0.105769 & 0.750000 & 1.365385 & 0.048077 & 0 & 0.009615 & 0.432692 & 2.980769 & 0.913462 & 0.076923 \\
17 & 0.067308 & 0.701923 & 1.355769 & 0.028846 & 0 & 0.096154 & 0.423077 & 2.980769 & 0.807692 & 0.038462 \\
18 & 0.105769 & 0.701923 & 1.201923 & 0.048077 & 0 & 0.009615 & 0.500000 & 2.596154 & 0.884615 & 0.067308 \\
19 & 0.125000 & 0.682692 & 1.096154 & 0.057692 & 0 & 0.048077 & 0.500000 & 2.596154 & 1.211539 & 0.086538 \\
20 & 0.115385 & 0.625000 & 1.182692 & 0.076923 & 0 & 0.028846 & 0.442308 & 2.923077 & 1.221154 & 0.163462 \\
21 & 0.057692 & 0.615385 & 0.875000 & 0.057692 & 0 & 0.028846 & 0.384615 & 2.471154 & 1.057692 & 0.067308 \\
22 & 0.125000 & 0.519231 & 0.884615 & 0.038462 & 0 & 0.048077 & 0.269231 & 1.903846 & 1.096154 & 0.067308 \\
23 & 0.038462 & 0.509615 & 0.951923 & 0.019231 & 0 & 0.038462 & 0.250000 & 1.836539 & 0.817308 & 0.057692 \\
\hline
\end{tabular}
}
\end{table}

\begin{table}[!ht]
\renewcommand{\arraystretch}{1}
\setlength{\tabcolsep}{3pt}
\centering
\caption{Hourly Arrival Rates on Weekdays for Pediatric Admit Patients.}
\label{app:ase:pediatric-admit-arrivals}
{\fontsize{9}{8}\selectfont
\begin{tabular}{c|ccccc|ccccc}
\hline
& \multicolumn{5}{c|}{Monday} & \multicolumn{5}{c}{Tuesday--Friday} \\
Hour of Day & ESI 1 & ESI 2 & ESI 3 & ESI 4 & ESI 5 & ESI 1 & ESI 2 & ESI 3 & ESI 4 & ESI 5 \\
\hline
0  & 0.019231 & 0.057692 & 0.153846 & 0.019231 & 0 & 0.009569 & 0.052632 & 0.129187 & 0.004785 & 0 \\
1  & 0        & 0.096154 & 0.115385 & 0.019231 & 0 & 0.009569 & 0.076555 & 0.138756 & 0.004785 & 0 \\
2  & 0        & 0.019231 & 0.115385 & 0        & 0 & 0        & 0.066986 & 0.081340 & 0.014354 & 0 \\
3  & 0        & 0.057692 & 0.076923 & 0        & 0 & 0        & 0.028708 & 0.052632 & 0        & 0 \\
4  & 0        & 0.038462 & 0.076923 & 0        & 0 & 0        & 0.033493 & 0.043062 & 0.004785 & 0 \\
5  & 0        & 0.038462 & 0        & 0        & 0 & 0.004785 & 0.028708 & 0.038278 & 0.004785 & 0 \\
6  & 0.019231 & 0.057692 & 0.019231 & 0        & 0 & 0        & 0.019139 & 0.023923 & 0.004785 & 0 \\
7  & 0        & 0.019231 & 0.038462 & 0        & 0 & 0.004785 & 0.043062 & 0.023923 & 0        & 0 \\
8  & 0        & 0.019231 & 0.057692 & 0        & 0 & 0.004785 & 0.057416 & 0.066986 & 0        & 0 \\
9  & 0        & 0.096154 & 0.134615 & 0.019231 & 0 & 0.004785 & 0.095694 & 0.090909 & 0.004785 & 0 \\
10 & 0        & 0.211538 & 0.173077 & 0.019231 & 0 & 0.004785 & 0.143541 & 0.148325 & 0.019139 & 0 \\
11 & 0        & 0.115385 & 0.134615 & 0.038462 & 0 & 0.009569 & 0.129187 & 0.167464 & 0.033493 & 0 \\
12 & 0.057692 & 0.230769 & 0.230769 & 0.019231 & 0 & 0.004785 & 0.186603 & 0.205742 & 0.004785 & 0 \\
13 & 0        & 0.365385 & 0.230769 & 0        & 0 & 0        & 0.181818 & 0.234450 & 0.009569 & 0 \\
14 & 0.019231 & 0.211538 & 0.153846 & 0.038462 & 0 & 0.004785 & 0.215311 & 0.186603 & 0        & 0 \\
15 & 0.019231 & 0.192308 & 0.230769 & 0.076923 & 0 & 0.023923 & 0.200957 & 0.186603 & 0.019139 & 0 \\
16 & 0        & 0.326923 & 0.346154 & 0        & 0 & 0.009569 & 0.296651 & 0.229665 & 0.028708 & 0 \\
17 & 0.019231 & 0.230769 & 0.288462 & 0.019231 & 0 & 0.014354 & 0.191388 & 0.291866 & 0.009569 & 0 \\
18 & 0.019231 & 0.346154 & 0.153846 & 0.019231 & 0 & 0.028708 & 0.282297 & 0.162679 & 0.033493 & 0 \\
19 & 0.019231 & 0.230769 & 0.269231 & 0.019231 & 0 & 0.043062 & 0.229665 & 0.215311 & 0.014354 & 0 \\
20 & 0.019231 & 0.211538 & 0.211538 & 0.019231 & 0 & 0.019139 & 0.162679 & 0.224880 & 0.028708 & 0 \\
21 & 0        & 0.173077 & 0.096154 & 0.019231 & 0 & 0.028708 & 0.110048 & 0.234450 & 0.028708 & 0 \\
22 & 0.019231 & 0.134615 & 0.153846 & 0        & 0 & 0.019139 & 0.148325 & 0.148325 & 0.019139 & 0 \\
23 & 0        & 0.096154 & 0.134615 & 0        & 0 & 0        & 0.071770 & 0.133971 & 0        & 0 \\
\hline
\end{tabular}
}
\end{table}

\begin{table}[!ht]
\renewcommand{\arraystretch}{1}
\setlength{\tabcolsep}{3pt}
\centering
\caption{Hourly Arrival Rates on Weekdays for Pediatric Discharge Patients.}
\label{app:ase:pediatric-discharge-arrivals}
{\fontsize{9}{8}\selectfont
\begin{tabular}{c|ccccc|ccccc}
\hline
& \multicolumn{5}{c|}{Monday} & \multicolumn{5}{c}{Tuesday--Friday} \\
Hour of Day & ESI 1 & ESI 2 & ESI 3 & ESI 4 & ESI 5 & ESI 1 & ESI 2 & ESI 3 & ESI 4 & ESI 5 \\
\hline
0  & 0        & 0.057692 & 0.423077 & 0.326923 & 0.057692 & 0        & 0.086124 & 0.349282 & 0.291866 & 0.009569 \\
1  & 0        & 0.019231 & 0.403846 & 0.230769 & 0        & 0        & 0.043062 & 0.167464 & 0.253589 & 0.014354 \\
2  & 0        & 0.038462 & 0.269231 & 0.115385 & 0        & 0        & 0.019139 & 0.167464 & 0.129187 & 0.014354 \\
3  & 0        & 0.019231 & 0.173077 & 0.173077 & 0.019231 & 0        & 0        & 0.167464 & 0.114833 & 0.009569 \\
4  & 0        & 0.038462 & 0.250000 & 0.134615 & 0        & 0        & 0.019139 & 0.143541 & 0.105263 & 0        \\
5  & 0        & 0        & 0.057692 & 0.115385 & 0        & 0        & 0.028708 & 0.114833 & 0.105263 & 0        \\
6  & 0.019231 & 0.019231 & 0.153846 & 0.096154 & 0        & 0.004785 & 0.014354 & 0.119617 & 0.167464 & 0        \\
7  & 0        & 0.038462 & 0.153846 & 0.134615 & 0        & 0        & 0.019139 & 0.191388 & 0.157895 & 0.033493 \\
8  & 0        & 0.096154 & 0.461538 & 0.211538 & 0.019231 & 0.004785 & 0.057416 & 0.354067 & 0.277512 & 0.028708 \\
9  & 0        & 0.192308 & 0.596154 & 0.442308 & 0.096154 & 0        & 0.095694 & 0.483254 & 0.354067 & 0.052632 \\
10 & 0        & 0.153846 & 0.634615 & 0.307692 & 0        & 0        & 0.119617 & 0.617225 & 0.358852 & 0.043062 \\
11 & 0        & 0.365385 & 0.826923 & 0.634615 & 0.038462 & 0.004785 & 0.138756 & 0.669857 & 0.387560 & 0.028708 \\
12 & 0        & 0.173077 & 0.942308 & 0.365385 & 0.038462 & 0        & 0.186603 & 0.669857 & 0.325359 & 0.038278 \\
13 & 0        & 0.307692 & 0.730769 & 0.384615 & 0.038462 & 0        & 0.181818 & 0.665072 & 0.272727 & 0.047847 \\
14 & 0        & 0.230769 & 0.769231 & 0.307692 & 0.076923 & 0        & 0.191388 & 0.641148 & 0.334928 & 0.038278 \\
15 & 0        & 0.192308 & 0.807692 & 0.403846 & 0.038462 & 0.004785 & 0.272727 & 0.751196 & 0.258373 & 0.019139 \\
16 & 0        & 0.192308 & 0.634615 & 0.423077 & 0.057692 & 0.004785 & 0.200957 & 0.712919 & 0.344498 & 0.028708 \\
17 & 0        & 0.403846 & 0.980769 & 0.500000 & 0.096154 & 0        & 0.334928 & 0.732057 & 0.478469 & 0.047847 \\
18 & 0        & 0.269231 & 1.096154 & 0.711538 & 0.153846 & 0        & 0.315789 & 0.918660 & 0.550239 & 0.057416 \\
19 & 0        & 0.192308 & 0.884615 & 0.634615 & 0        & 0.004785 & 0.239234 & 0.832536 & 0.693780 & 0.047847 \\
20 & 0        & 0.153846 & 1.076923 & 1.057692 & 0.019231 & 0.004785 & 0.215311 & 0.861244 & 0.684211 & 0.028708 \\
21 & 0        & 0.307692 & 0.942308 & 0.846154 & 0.038462 & 0.004785 & 0.177033 & 0.755981 & 0.665072 & 0.038278 \\
22 & 0        & 0.173077 & 0.653846 & 0.519231 & 0.038462 & 0.004785 & 0.095694 & 0.645933 & 0.497608 & 0.028708 \\
23 & 0        & 0.173077 & 0.500000 & 0.442308 & 0        & 0.004785 & 0.100478 & 0.516746 & 0.468900 & 0.043062 \\
\hline
\end{tabular}
}
\end{table}

\begin{table}[!ht]
\renewcommand{\arraystretch}{1}
\setlength{\tabcolsep}{3pt}
\centering
\caption{Hourly Arrival Rates on Saturdays and Sundays for Pediatric Patients.}
\label{app:ase:pediatric-weekend-arrivals}
{\fontsize{9}{8}\selectfont
\begin{tabular}{c|ccccc|ccccc}
\hline
 & \multicolumn{5}{c|}{Admit Patients} & \multicolumn{5}{c}{Discharge Patients} \\
Hour of Day & ESI 1 & ESI 2 & ESI 3 & ESI 4 & ESI 5 & ESI 1 & ESI 2 & ESI 3 & ESI 4 & ESI 5 \\
\hline
0  & 0.009615 & 0.096154 & 0.038462 & 0.009615 & 0        & 0        & 0.115385 & 0.528846 & 0.326923 & 0        \\
1  & 0.019231 & 0.009615 & 0.067308 & 0.009615 & 0        & 0.009615 & 0.067308 & 0.250000 & 0.259615 & 0.019231 \\
2  & 0        & 0.067308 & 0.057692 & 0        & 0        & 0        & 0.038462 & 0.211539 & 0.211538 & 0        \\
3  & 0.009615 & 0.028846 & 0.048077 & 0        & 0.009615 & 0        & 0.019231 & 0.173077 & 0.115385 & 0.009615 \\
4  & 0.009615 & 0.028846 & 0.086538 & 0.019231 & 0        & 0.009615 & 0.019231 & 0.182692 & 0.067308 & 0        \\
5  & 0        & 0.019204 & 0.009615 & 0.009615 & 0        & 0        & 0.038462 & 0.134615 & 0.086538 & 0.019231 \\
6  & 0        & 0.028842 & 0.028846 & 0        & 0        & 0        & 0.057692 & 0.115385 & 0.173077 & 0        \\
7  & 0.009615 & 0.067308 & 0.038462 & 0.019231 & 0        & 0        & 0.057692 & 0.221154 & 0.115385 & 0        \\
8  & 0.009615 & 0.028846 & 0.067308 & 0.009615 & 0        & 0        & 0.057692 & 0.403846 & 0.153846 & 0.009615 \\
9  & 0.009615 & 0.057692 & 0.096154 & 0.028846 & 0        & 0        & 0.057692 & 0.403846 & 0.442308 & 0.028846 \\
10 & 0        & 0.076923 & 0.163462 & 0.009615 & 0        & 0        & 0.076923 & 0.615385 & 0.259615 & 0.028846 \\
11 & 0        & 0.086538 & 0.163462 & 0.019231 & 0        & 0.009615 & 0.086538 & 0.769231 & 0.490385 & 0.028846 \\
12 & 0.019231 & 0.153846 & 0.115385 & 0.038462 & 0        & 0.019231 & 0.125000 & 0.759615 & 0.442308 & 0.057692 \\
13 & 0.019231 & 0.221154 & 0.086538 & 0        & 0        & 0.019231 & 0.221154 & 0.625000 & 0.365385 & 0.038462 \\
14 & 0.009615 & 0.125000 & 0.182692 & 0.019231 & 0        & 0        & 0.173077 & 0.730769 & 0.500000 & 0        \\
15 & 0.009615 & 0.105769 & 0.182692 & 0.009615 & 0        & 0.019231 & 0.182692 & 0.682692 & 0.519231 & 0.057692 \\
16 & 0.019231 & 0.115385 & 0.221154 & 0        & 0        & 0        & 0.173077 & 0.750000 & 0.519231 & 0.057692 \\
17 & 0        & 0.173077 & 0.173077 & 0.009615 & 0        & 0        & 0.182692 & 0.961539 & 0.461538 & 0.057692 \\
18 & 0        & 0.173077 & 0.163462 & 0.009615 & 0        & 0        & 0.221154 & 1.000000 & 0.557692 & 0.038462 \\
19 & 0.019231 & 0.153846 & 0.173077 & 0.009615 & 0        & 0.009615 & 0.115385 & 0.711539 & 0.798077 & 0.067308 \\
20 & 0.038462 & 0.182692 & 0.211538 & 0.009615 & 0        & 0.009615 & 0.240385 & 0.951923 & 0.701923 & 0.048077 \\
21 & 0.019231 & 0.163462 & 0.221154 & 0        & 0.009615 & 0        & 0.201923 & 0.750000 & 0.634615 & 0.019231 \\
22 & 0.019231 & 0.115385 & 0.115385 & 0.028846 & 0        & 0        & 0.096154 & 0.548077 & 0.615385 & 0.038462 \\
23 & 0.019231 & 0.067308 & 0.192308 & 0.009615 & 0        & 0        & 0.134615 & 0.519231 & 0.576923 & 0.028846 \\
\hline
\end{tabular}
}
\end{table}

For probability distributions of first- and second-stage durations, we again report the $p$-value from the Kolmogorov-Smirnov (K-S) test, where a large $p$-value indicates a good fit. For convenience, we define the following notation to refer to distributions that provided a good fit in our input analysis of the 2019 ED data. \textbf{EXP($\beta$)}: Exponential distribution with mean $\beta$. \textbf{GAMMA($\beta,\alpha$)}: Gamma distribution with shape parameter $\alpha$ and scale parameter $\beta$. \textbf{MIXEDLOGN($p,\mu_1,\sigma_1,1-p,\mu_2,\sigma_2$)}: A mixed distribution that is lognormal with location parameter $\mu_1$ and scale parameter $\sigma_1$ with probability $p$ and lognormal with location parameter $\mu_2$ and scale parameter $\sigma_2$ with probability $1-p$.  \textbf{WEIB($\beta,\alpha$)}: Weibull distribution with shape parameter $\alpha$ and scale parameter $\beta$.

For the first-stage (ED workup) duration, we assume it depends both on the patient's ESI level and age group. To account for variations in resource availability and operational conditions, we also allow the distribution to depend on the start time of the first stage (i.e., rooming time).  For similar time periods or categories with limited data, we fit a single distribution; otherwise, we fit separate distributions when data is sufficient and patterns differ. Table~\ref{app:ase:workup-duration} summarizes the distributions used. For the second stage, admitted patients’ durations are jointly determined by hospital bed availability and TPP, as discussed earlier. For discharged patients, we approximate stage durations using fitted probability distributions, assuming they depend on age group and ESI level. For similar time periods or categories with limited data, we fit a single distribution; when sufficient data is available and patterns differ, we fit separate distributions by time of day and patient category. Table~\ref{app:ase:second-stage-discharge} provides the distributions used.

\begin{table}[!ht]
\centering
\caption{Distribution of First-Stage (ED Workup) Durations.}
\label{app:ase:workup-duration}
{
\begin{tabular}{cccc}
\hline
\multicolumn{4}{c}{\textbf{Adult Patients}} \\
\hline
\begin{tabular}[c]{@{}c@{}}ESI \\ Level\end{tabular} & Time of Day & Probability Distribution & $p$-value \\
\hline
1 & 12:00 am - 11:59 pm & 10+GAMM(157,1.02) & 0.14 \\\hline
2 & 12:00 am - 5:59 am  & MIXEDLOGN(0.878,5.023,0.893,0.122,6.631,0.254) & $>$0.15 \\
2 & 6:00 am - 0:59 pm   & MIXEDLOGN(0.354,5.113,0.986,0.646,5.341,0.653) & $>$0.15 \\
2 & 1:00 pm - 6:59 pm   & MIXEDLOGN(0.014,3.032,0.328,0.986,5.439,0.873) & $<$0.01 \\
2 & 7:00 pm - 11:59 pm  & MIXEDLOGN(0.048,4.222,0.999,0.952,5.482,0.940) & $<$0.01 \\\hline
3 & 6:00 am - 10:59 pm  & MIXEDLOGN(0.675,5.467,0.489,0.325,4.865,0.854) & 0.07 \\
3 & 11:00 pm - 5:59 am  & MIXEDLOGN(0.094,3.794,0.666,0.906,5.234,0.636) & $>$0.15 \\\hline
4 & 2:00 am - 5:59 am   & MIXEDLOGN(0.036,2.933,0.295,0.964,4.734,0.809) & $>$0.15 \\
4 & 6:00 am - 4:59 pm   & MIXEDLOGN(0.100,3.149,0.423,0.900,4.860,0.701) & $>$0.15 \\
4 & 5:00 pm - 1:59 am   & MIXEDLOGN(0.812,4.577,0.920,0.188,5.136,0.422) & $>$0.15 \\\hline
5 & 8:00 am - 10:59 pm  & 10+EXP(96.2)                                   & $>$0.15 \\
5 & 11:00 pm - 7:59 am  & 10+WEIB(133,0.783)                             & $>$0.15 \\
\hline
\multicolumn{4}{c}{\textbf{Pediatric Patients}} \\
\hline
\begin{tabular}[c]{@{}c@{}}ESI \\ Level\end{tabular} & Time of Day & Probability Distribution & $p$-value \\
\hline
1 & 12:00 am - 11:59 pm & 10+WEIB(104,0.972)                             & $>$0.15 \\\hline
2 & 3:00 am - 12:59 pm  & MIXEDLOGN(0.555,4.875,0.933,0.445,5.318,0.536) & $>$0.15 \\
2 & 1:00 pm - 2:59 am   & MIXEDLOGN(0.312,5.180,0.384,0.688,4.994,0.964) & $>$0.15 \\\hline
3 & 5:00 am - 10:59 pm  & MIXEDLOGN(0.514,5.268,0.487,0.486,4.731,0.776) & $>$0.15 \\
3 & 11:00 pm - 4:59 am  & MIXEDLOGN(0.639,4.922,0.568,0.361,4.750,0.926) & $>$0.15 \\\hline
4 & 12:00 am - 11:59 pm & MIXEDLOGN(0.896,4.482,0.734,0.104,4.828,0.325) & $>$0.15 \\\hline
5 & 12:00 am - 11:59 pm & 10+GAMM(41.7,1.62)                             & $>$0.15 \\
\hline
\end{tabular}
}
\end{table}

\begin{table}[!ht]
\centering
\caption{Distribution of Second-Stage Durations.}
\label{app:ase:second-stage-discharge}
{
\begin{tabular}{cccc}
\hline
\multicolumn{4}{c}{\textbf{Adult Discharge Patients}} \\
\hline
\begin{tabular}[c]{@{}c@{}}ESI \\ Level\end{tabular} & Time of Day & Probability Distribution & $p$-value \\
\hline
1--2 & 3:00 am - 8:59 am   & MIXEDLOGN(0.139,5.976,0.494,0.861,3.420,0.855) & $>$0.15 \\
1--2 & 9:00 am - 2:59 am   & MIXEDLOGN(0.614,3.199,0.608,0.386,4.182,1.439) & $>$0.15 \\\hline
3    & 5:00 am - 10:59 am  & MIXEDLOGN(0.359,3.881,1.405,0.641,3.061,0.622) & $>$0.15 \\
3    & 11:00 am - 4:59 am  & MIXEDLOGN(0.759,3.129,0.651,0.241,3.778,1.565) & $>$0.15 \\\hline
4--5 & 12:00 am - 11:59 am & MIXEDLOGN(0.692,3.020,0.660,0.308,3.500,1.593) & $>$0.15 \\
4--5 & 12:00 pm - 11:59 pm & MIXEDLOGN(0.751,2.985,0.663,0.249,3.048,1.559) & $>$0.15 \\
\hline
\multicolumn{4}{c}{\textbf{Pediatric Discharge Patients}} \\
\hline
\begin{tabular}[c]{@{}c@{}}ESI \\ Level\end{tabular} & Time of Day & Probability Distribution & $p$-value \\
\hline
1--2 & 12:00 am - 10:59 am & MIXEDLOGN(0.757,3.038,0.622,0.243,3.619,1.596) & $>$0.15 \\
1--2 & 11:00 am - 11:59 pm & MIXEDLOGN(0.033,5.618,0.668,0.967,3.076,0.790) & $>$0.15 \\\hline
3    & 12:00 am - 11:59 pm & MIXEDLOGN(0.150,2.898,1.690,0.850,2.956,0.706) & $>$0.15 \\\hline
4--5 & 12:00 am - 11:59 pm & MIXEDLOGN(0.086,1.924,1.490,0.914,2.865,0.705) & $>$0.15 \\
\hline
\end{tabular}
}
\end{table}

\begin{figure}[!ht]
    \centering
    \includegraphics[scale=0.6]{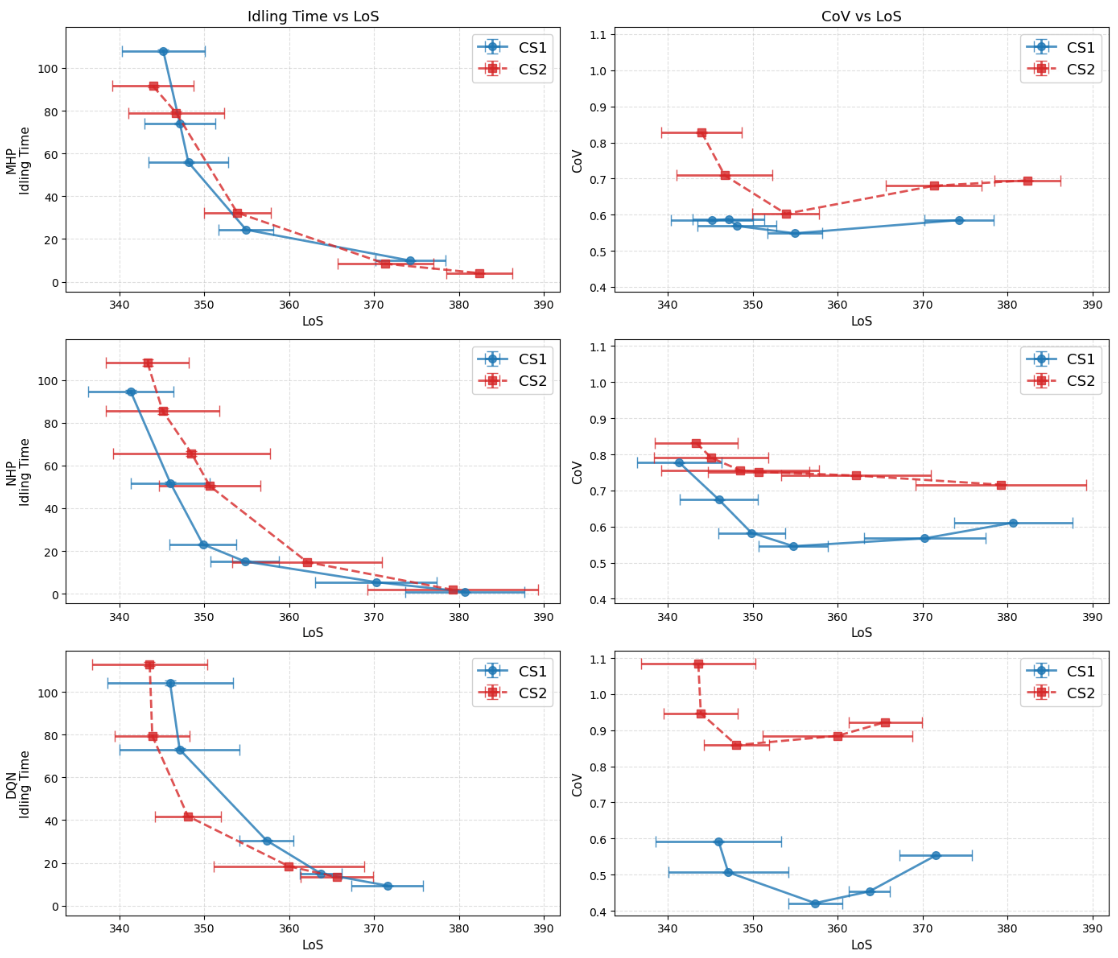}
    \caption{Performance comparison of MHP, NHP, and DQN under $\mathcal{C}_1$ (CS1) versus $\mathcal{C}_2$  (CS2) for SOS.}
    \label{fig:R3}
\end{figure}

\subsection{Simulation Results}\label{sec.results-tables}

We first present the results under SOS and $\mathcal{C}_2$ discussed in Section \ref{S7:NS:SS1}.  Figure~\ref{fig:R3} presents a comparison of performances of MHP, NHP, and DQN under $\mathcal{C}_1$ and $\mathcal{C}_2$ within SOS. (CP, GP, and EP are not affected by the cost parameters, and hence, are omitted.)    Under both $\mathcal{C}_1$ and $\mathcal{C}_2$, the proposed policies yield substantial reductions in average ED LoS, and the trade-off frontier between bed idling time and LoS appears largely unaffected by the shift from $\mathcal{C}_1$ to $\mathcal{C}_2$. However, $\mathcal{C}_2$ leads to markedly higher coefficients of variation across all three policies. 

 We next provide more detailed information about the simulation results presented in Figures \ref{fig:R1} through \ref{fig:R4} as well as plots under additional experimental conditions. In the following tables, we provide numerical values of all confidence intervals that appear in the referenced plots. We also provide the tuning parameters corresponding to each heuristic policy. All heuristics depicted in Table~\ref{tab:sosp} are tuned using cost functions as in Table~\ref{tab:sos-cs1}, except for the first row for the MHP heuristic. For this case, we used a modified cost function $\tilde{C}(\phi_{t, g}, a_{t, g}) = a_{t, g}^2 + C \phi_{t, g}^2 \mathbb{I}_{\{\phi_{t, g} \geq 0\}} + D \phi_{t, g}^2 \mathbb{I}_{\{\phi_{t, g} < 0\}}$, where different weights are given to different directions of imbalance. (A positive/negative value of $\phi_{t, g}$ corresponds to the presence of excess boarding patients/idle beds.) This adjustment was made as the original cost function $\tilde{C}(\phi_{t, g}, a_{t, g}) = a_{t, g}^2 + C \phi_{t, g}^2 + D \phi_{t, g}$, did not extend to cover the whole range of average idling times observed by other heuristics. The rest of the analysis proceeded as in the other cases.

\begin{table}[H]
\centering
{\fontsize{9}{8}\selectfont
\caption{95\% Confidence Intervals on Performance Metrics under SOS and $\mathcal{C}_1$.}
\label{tab:sos-cs1}
\begin{tabular}{||l|c||c|c|c|c|c||}
\hline
\textbf{Policy} & \textbf{Tuning Parameters} & \textbf{LoS} & \textbf{LoS (Admit)} & \textbf{Boarding Time} & \textbf{Idling Time} & \textbf{CoV} \\ \hline\hline
Current & --& 406.57 $\pm$ 3.27 & 549.79 $\pm$ 4.89 & 248.26 $\pm$ 2.14 & 0.00 $\pm$ 0.00 & 0.76 $\pm$ 0.00 \\
Extreme & --& 342.17 $\pm$ 4.64 & 369.52 $\pm$ 7.18 & 76.68 $\pm$ 4.98 & 356.73 $\pm$ 5.06 & 1.29 $\pm$ 0.00 \\
Greedy & --& 349.00 $\pm$ 2.50 & 389.05 $\pm$ 4.24 & 95.81 $\pm$ 2.47 & 34.56 $\pm$ 0.21 & 0.89 $\pm$ 0.01 \\ \hline
MHP & $\tilde \alpha=1.4, \tilde \beta=18$ & 345.15 $\pm$ 4.86 & 380.10 $\pm$ 5.82 & 84.85 $\pm$ 3.06 & 108.06 $\pm$ 0.71 & 0.58 $\pm$ 0.00 \\
MHP &$\alpha=1.6, \beta=12$ & 347.11 $\pm$ 4.19 & 382.62 $\pm$ 5.56 & 89.29 $\pm$ 2.97 & 74.01 $\pm$ 0.47 & 0.59 $\pm$ 0.00 \\
MHP & $\tilde \alpha=0.5, \tilde \beta=5$ & 348.11 $\pm$ 4.65 & 382.72 $\pm$ 5.47 & 91.34 $\pm$ 2.81 & 55.90 $\pm$ 0.35 & 0.57 $\pm$ 0.00 \\
MHP & $\tilde \alpha=0.5, \tilde \beta=2$ & 354.91 $\pm$ 3.25 & 410.10 $\pm$ 4.28 & 117.74 $\pm$ 2.44 & 24.35 $\pm$ 0.14 & 0.55 $\pm$ 0.00 \\
MHP & $\tilde \alpha=0.5, \tilde \beta=0$ & 374.27 $\pm$ 4.10 & 452.95 $\pm$ 5.09 & 155.02 $\pm$ 2.72 & 9.88 $\pm$ 0.10 & 0.59 $\pm$ 0.00 \\ \hline
NHP & $\psi=8,\tilde{c}_b=256,\tilde{c}_e=1$ & 341.33 $\pm$ 5.00 & 368.95 $\pm$ 9.83 & 76.76 $\pm$ 7.40 & 94.52 $\pm$ 0.74 & 0.78 $\pm$ 0.00 \\
NHP & $\psi=8,\tilde{c}_b=32,\tilde{c}_e=1$ & 345.99 $\pm$ 4.62 & 381.04 $\pm$ 8.76 & 87.99 $\pm$ 5.25 & 51.81 $\pm$ 0.73 & 0.67 $\pm$ 0.00 \\
NHP & $\psi=8,\tilde{c}_b=12,\tilde{c}_e=1$ & 349.83 $\pm$ 3.93 & 395.52 $\pm$ 8.04 & 103.40 $\pm$ 4.75 & 23.12 $\pm$ 0.29 & 0.58 $\pm$ 0.00 \\
NHP & $\psi=8,\tilde{c}_b=10,\tilde{c}_e=2$ & 354.76 $\pm$ 4.05 & 411.70 $\pm$ 7.32 & 120.19 $\pm$ 4.48 & 15.13 $\pm$ 0.21 & 0.55 $\pm$ 0.00 \\
NHP & $\psi=8,\tilde{c}_b=10,\tilde{c}_e=16$ & 370.21 $\pm$ 7.15 & 442.10 $\pm$ 10.21 & 144.81 $\pm$ 4.99 & 5.43 $\pm$ 0.08 & 0.57 $\pm$ 0.00 \\
NHP & $\psi=8,\tilde{c}_b=10,\tilde{c}_e=128$ & 380.66 $\pm$ 6.98 & 475.84 $\pm$ 10.28 & 178.68 $\pm$ 5.21 & 0.93 $\pm$ 0.02 & 0.61 $\pm$ 0.00 \\ \hline
DQN & $c_b=10, c_e=1$ & 345.96 $\pm$ 7.40 & 377.79 $\pm$ 11.20 & 82.84 $\pm$ 5.60 & 104.06 $\pm$ 1.14 & 0.59 $\pm$ 0.00 \\
DQN & $c_b=4, c_e=1$ & 347.09 $\pm$ 7.04 & 382.16 $\pm$ 4.62 & 89.72 $\pm$ 4.15 & 73.05 $\pm$ 0.50 & 0.51 $\pm$ 0.00 \\
DQN & $c_b=1, c_e=1$ & 357.31 $\pm$ 3.17 & 417.54 $\pm$ 4.67 & 125.24 $\pm$ 2.82 & 30.52 $\pm$ 0.22 & 0.42 $\pm$ 0.00 \\
DQN & $c_b=1, c_e=4$ & 363.70 $\pm$ 2.43 & 439.00 $\pm$ 4.90 & 147.17 $\pm$ 2.96 & 15.05 $\pm$ 0.09 & 0.45 $\pm$ 0.00 \\
DQN & $c_b=1, c_e=10$ & 371.52 $\pm$ 4.24 & 456.54 $\pm$ 4.84 & 162.29 $\pm$ 2.70 & 9.40 $\pm$ 0.04 & 0.55 $\pm$ 0.00 \\ \hline
\end{tabular}
}
\end{table}

\begin{table}[H]
\centering
\fontsize{9}{8}\selectfont
\caption{95\% Confidence Intervals on Performance Metrics under SOS and $\mathcal{C}_2$.}
\begin{tabular}{||l|c||c|c|c|c|c||}
\hline
\textbf{Policy} & \textbf{Tuning Parameters} & \textbf{LoS} & \textbf{LoS (Admit)} & \textbf{Boarding Time} & \textbf{Idling Time} & \textbf{CoV} \\ \hline\hline
MHP & $\tilde \alpha=1.4, \tilde \beta=18$ & 343.95 $\pm$ 4.77 & 372.71 $\pm$ 6.05 & 79.36 $\pm$ 3.22 & 91.52 $\pm$ 0.59 & 0.83 $\pm$ 0.00 \\
MHP & $\tilde \alpha=1.6, \tilde \beta=12$ & 346.67 $\pm$ 5.64 & 377.69 $\pm$ 6.80 & 83.18 $\pm$ 3.79 & 79.05 $\pm$ 0.62 & 0.71 $\pm$ 0.00 \\
MHP & $\tilde \alpha=0.5, \tilde \beta=5$ & 353.87 $\pm$ 3.91 & 401.08 $\pm$ 5.09 & 106.93 $\pm$ 3.51 & 32.32 $\pm$ 0.34 & 0.60 $\pm$ 0.00 \\
MHP & $\tilde \alpha=0.5, \tilde \beta=2$ & 371.31 $\pm$ 5.64 & 445.97 $\pm$ 6.10 & 148.77 $\pm$ 3.25 & 8.44 $\pm$ 0.06 & 0.68 $\pm$ 0.00 \\
MHP & $\tilde \alpha=0.5, \tilde \beta=0$ & 382.37 $\pm$ 3.89 & 476.15 $\pm$ 5.02 & 177.66 $\pm$ 2.56 & 4.06 $\pm$ 0.04 & 0.69 $\pm$ 0.00 \\ \hline
NHP & $\psi=8,\tilde{c}_b=2560,\tilde{c}_e=10$ & 343.30 $\pm$ 4.86 & 370.71 $\pm$ 11.81 & 77.10 $\pm$ 6.83 & 108.04 $\pm$ 1.54 & 0.83 $\pm$ 0.00 \\
NHP & $\psi=8,\tilde{c}_b=320,\tilde{c}_e=10$ & 345.09 $\pm$ 6.71 & 377.45 $\pm$ 11.58 & 80.64 $\pm$ 7.38 & 85.42 $\pm$ 1.39 & 0.79 $\pm$ 0.00 \\
NHP & $\psi=8,\tilde{c}_b=120,\tilde{c}_e=10$ & 348.49 $\pm$ 9.27 & 377.50 $\pm$ 11.54 & 85.10 $\pm$ 7.92 & 65.60 $\pm$ 1.14 & 0.76 $\pm$ 0.00 \\
NHP & $\psi=8,\tilde{c}_b=100,\tilde{c}_e=20$ & 350.65 $\pm$ 5.96 & 382.37 $\pm$ 10.71 & 84.39 $\pm$ 5.73 & 50.34 $\pm$ 0.67 & 0.75 $\pm$ 0.00 \\
NHP & $\psi=8,\tilde{c}_b=100,\tilde{c}_e=160$ & 362.13 $\pm$ 8.82 & 412.52 $\pm$ 10.25 & 113.47 $\pm$ 6.28 & 14.76 $\pm$ 0.30 & 0.74 $\pm$ 0.00 \\
NHP & $\psi=8,\tilde{c}_b=100,\tilde{c}_e=1280$ & 379.22 $\pm$ 10.03 & 463.50 $\pm$ 9.93 & 163.93 $\pm$ 5.33 & 1.84 $\pm$ 0.03 & 0.72 $\pm$ 0.00 \\ \hline
DQN & $c_b=100, c_e=10$ & 343.56 $\pm$ 6.74 & 371.44 $\pm$ 7.46 & 77.89 $\pm$ 4.02 & 112.97 $\pm$ 0.86 & 1.08 $\pm$ 0.00 \\
DQN & $c_b=40, c_e=10$ & 343.86 $\pm$ 4.38 & 374.16 $\pm$ 4.82 & 81.16 $\pm$ 2.70 & 79.44 $\pm$ 0.47 & 0.95 $\pm$ 0.00 \\
DQN & $c_b=10, c_e=10$ & 348.06 $\pm$ 3.88 & 389.70 $\pm$ 4.34 & 97.52 $\pm$ 2.38 & 41.84 $\pm$ 0.20 & 0.86 $\pm$ 0.00 \\
DQN & $c_b=10, c_e=40$ & 359.96 $\pm$ 8.85 & 415.23 $\pm$ 10.57 & 119.28 $\pm$ 5.36 & 18.40 $\pm$ 0.18 & 0.88 $\pm$ 0.00 \\
DQN & $c_b=10, c_e=100$ & 365.58 $\pm$ 4.28 & 434.27 $\pm$ 5.28 & 138.94 $\pm$ 4.69 & 13.39 $\pm$ 0.07 & 0.92 $\pm$ 0.00 \\ \hline
\end{tabular}
\end{table}

\begin{table}[H]
\centering
\fontsize{9}{8}\selectfont
\caption{95\% Confidence Intervals on Performance Metrics under SOS-P and $\mathcal{C}_1$.}
\label{tab:sosp}
\begin{tabular}{||l|c||c|c|c|c|c||}
\hline
\textbf{Policy} & \textbf{Tuning Parameters} & \textbf{LoS} & \textbf{LoS (Admit)} & \textbf{Boarding Time} & \textbf{Idling Time} & \textbf{CoV} \\ \hline\hline
Current & --& 406.57 $\pm$ 3.27 & 549.79 $\pm$ 4.89 & 248.26 $\pm$ 2.14 & 0.00 $\pm$ 0.00 & 0.76 $\pm$ 0.00 \\
Extreme & -- & $341.63 \pm 4.32$ & $366.61 \pm 6.25$ & $73.58 \pm 3.25$ & $338.55	\pm 2.48$ & $1.47 \pm 0.00$\\
Greedy &-- & 354.47 $\pm$ 3.65 & 400.42 $\pm$ 5.36 & 105.55 $\pm$ 3.15 & 23.44 $\pm$ 0.19 & 0.89 $\pm$ 0.01 \\ \hline
MHP & $\tilde \alpha=0.75, \tilde \beta=18$ & 359.24 $\pm$ 3.16 & 425.12 $\pm$ 4.66 & 133.56 $\pm$ 2.71 & 45.62 $\pm$ 0.42 & 0.58 $\pm$ 0.00 \\
MHP & $\tilde \alpha=1.4, \tilde \beta=18$ & 368.26 $\pm$ 3.88 & 446.70 $\pm$ 5.25 & 152.73 $\pm$ 2.77 & 15.29 $\pm$ 0.14 & 0.52 $\pm$ 0.00 \\
MHP & $\tilde \alpha=1.6, \tilde \beta=12$ & 370.91 $\pm$ 5.03 & 450.67 $\pm$ 5.32 & 155.70 $\pm$ 2.43 & 12.57 $\pm$ 0.09 & 0.48 $\pm$ 0.00 \\
MHP & $\tilde \alpha=0.5, \tilde \beta=5$ & 373.02 $\pm$ 3.95 & 454.37 $\pm$ 5.24 & 158.35 $\pm$ 2.98 & 12.35 $\pm$ 0.12 & 0.53 $\pm$ 0.00 \\
MHP & $\tilde \alpha=0.5, \tilde \beta=2$ & 381.70 $\pm$ 4.01 & 480.89 $\pm$ 4.52 & 184.38 $\pm$ 2.41 & 2.95 $\pm$ 0.03 & 0.55 $\pm$ 0.00 \\
MHP & $\tilde \alpha=0.5, \tilde \beta=0$ & 385.93 $\pm$ 3.41 & 498.74 $\pm$ 4.35 & 203.87 $\pm$ 2.42 & 1.01 $\pm$ 0.01 & 0.60 $\pm$ 0.00 \\\hline
NHP & $\psi=8,\tilde{c}_b=96,\tilde{c}_e=1$ & 347.31 $\pm$ 6.71 & 381.64 $\pm$ 9.76 & 87.68 $\pm$ 5.23 & 51.91 $\pm$ 0.64 & 0.66 $\pm$ 0.00 \\
NHP & $\psi=8,\tilde{c}_b=32,\tilde{c}_e=1$ & 351.11 $\pm$ 7.63 & 393.32 $\pm$ 10.30 & 99.16 $\pm$ 5.77 & 28.65 $\pm$ 0.40 & 0.61 $\pm$ 0.00 \\
NHP & $\psi=8,\tilde{c}_b=12,\tilde{c}_e=1$ & 370.62 $\pm$ 8.05 & 445.51 $\pm$ 9.07 & 148.51 $\pm$ 4.58 & 7.38 $\pm$ 0.12 & 0.67 $\pm$ 0.00 \\
NHP & $\psi=8,\tilde{c}_b=10,\tilde{c}_e=2$ & 375.85 $\pm$ 10.29 & 464.16 $\pm$ 11.86 & 168.10 $\pm$ 5.64 & 3.99 $\pm$ 0.07 & 0.72 $\pm$ 0.00 \\
NHP & $\psi=8,\tilde{c}_b=10,\tilde{c}_e=16$ & 382.32 $\pm$ 6.78 & 484.13 $\pm$ 8.49 & 188.07 $\pm$ 4.11 & 1.33 $\pm$ 0.03 & 0.73 $\pm$ 0.00 \\
NHP & $\psi=8,\tilde{c}_b=10,\tilde{c}_e=128$ & 391.05 $\pm$ 8.74 & 510.33 $\pm$ 9.79 & 213.43 $\pm$ 5.39 & 0.23 $\pm$ 0.01 & 0.60 $\pm$ 0.00 \\ \hline
DQN & $c_b=20, c_e=1$ & 343.45 $\pm$ 4.36 & 401.63 $\pm$ 5.14 & 108.84 $\pm$ 2.38 & 44.29 $\pm$ 0.28 & 0.43 $\pm$ 0.00 \\
DQN & $c_b=10, c_e=1$ & 352.75 $\pm$ 4.52 & 403.69 $\pm$ 5.50 & 109.72 $\pm$ 2.63 & 39.73 $\pm$ 0.24 & 0.49 $\pm$ 0.00 \\
DQN & $c_b=4, c_e=1$ & 354.50 $\pm$ 3.87 & 446.16 $\pm$ 4.59 & 152.55 $\pm$ 2.44 & 12.46 $\pm$ 0.10 & 0.48 $\pm$ 0.00 \\
DQN & $c_b=1, c_e=1$ & 367.72 $\pm$ 3.79 & 502.28 $\pm$ 4.28 & 206.63 $\pm$ 2.48 & 1.75 $\pm$ 0.13 & 0.61 $\pm$ 0.00 \\
DQN & $c_b=1, c_e=4$ & 387.61 $\pm$ 3.95 & 505.39 $\pm$ 4.54 & 208.93 $\pm$ 2.37 & 1.65 $\pm$ 0.02 & 0.61 $\pm$ 0.00 \\
DQN & $c_b=1, c_e=10$ & 389.68 $\pm$ 3.19 & 513.47 $\pm$ 5.03 & 210.28 $\pm$ 2.12 & 1.25 $\pm$ 0.02 & 0.61 $\pm$ 0.00 \\\hline
\end{tabular}
\end{table}

\begin{table}[H]
\centering
\caption{95\% Confidence Intervals on Performance Metrics under PRS.}
\begin{tabular}{||l|c||c|c|c|c||}
\hline
\textbf{Policy}& \textbf{Tuning Parameters} & \textbf{LoS} & \textbf{LoS (Admit)} & \textbf{Boarding Time} & \textbf{Idling Time} \\ \hline\hline
Current & -- & 460.20 $\pm$ 6.25 & 596.00 $\pm$ 6.50 & 253.50 $\pm$ 2.49 & 0.00 $\pm$ 0.00 \\
Extreme & --& 388.70 $\pm$ 6.92 & 416.59 $\pm$ 8.02 & 88.24 $\pm$ 3.92 & 341.01 $\pm$ 4.11 \\
Greedy & --& 397.80 $\pm$ 6.25 & 438.37 $\pm$ 7.13 & 108.34 $\pm$ 3.44 & 32.95 $\pm$ 0.35 \\ \hline
MHP & $\tilde \alpha=1.4, \tilde \beta=18$ & 394.20 $\pm$ 6.65 & 435.04 $\pm$ 7.75 & 102.75 $\pm$ 3.81 & 94.24 $\pm$ 0.94 \\
MHP & $\tilde \alpha=1.6, \tilde \beta=12$ & 397.00 $\pm$ 6.40 & 435.86 $\pm$ 7.55 & 108.78 $\pm$ 3.82 & 64.04 $\pm$ 0.62 \\
MHP & $\tilde \alpha=0.5, \tilde \beta=5$ & 398.63 $\pm$ 6.61 & 438.56 $\pm$ 7.63 & 109.48 $\pm$ 3.74 & 49.68 $\pm$ 0.50 \\
MHP & $\tilde \alpha=0.5, \tilde \beta=2$ & 400.78 $\pm$ 6.32 & 457.79 $\pm$ 7.21 & 130.85 $\pm$ 3.38 & 22.54 $\pm$ 0.22 \\
MHP & $\tilde \alpha=0.5, \tilde \beta=0$ & 421.27 $\pm$ 6.44 & 499.00 $\pm$ 7.29 & 164.59 $\pm$ 3.39 & 10.23 $\pm$ 0.16 \\ \hline
NHP & $\psi=8,\tilde{c}_b=256,\tilde{c}_e=1$ & 390.27 $\pm$ 6.58 & 421.79 $\pm$ 7.76 & 91.78 $\pm$ 3.85 & 99.61 $\pm$ 1.05 \\
NHP & $\psi=8,\tilde{c}_b=32,\tilde{c}_e=1$ & 392.72 $\pm$ 6.81 & 425.87 $\pm$ 7.97 & 96.31 $\pm$ 3.92 & 61.59 $\pm$ 0.67 \\
NHP & $\psi=8,\tilde{c}_b=12,\tilde{c}_e=1$ & 395.26 $\pm$ 7.27 & 436.30 $\pm$ 8.24 & 107.96 $\pm$ 3.80 & 33.52 $\pm$ 0.35 \\
NHP & $\psi=8,\tilde{c}_b=10,\tilde{c}_e=2$ & 403.37 $\pm$ 7.85 & 452.53 $\pm$ 8.71 & 121.48 $\pm$ 3.92 & 24.05 $\pm$ 0.26 \\
NHP & $\psi=8,\tilde{c}_b=10,\tilde{c}_e=16$ & 405.28 $\pm$ 6.45 & 467.88 $\pm$ 7.42 & 139.53 $\pm$ 3.48 & 10.17 $\pm$ 0.11 \\
NHP & $\psi=8,\tilde{c}_b=10,\tilde{c}_e=128$ & 424.47 $\pm$ 8.08 & 507.29 $\pm$ 8.51 & 172.55 $\pm$ 3.46 & 2.43 $\pm$ 0.04 \\ \hline
DQN & $c_b=10, c_e=1$ & 387.92 $\pm$ 5.81 & 423.29 $\pm$ 7.05 & 98.05 $\pm$ 3.74 & 92.93 $\pm$ 0.83 \\
DQN & $c_b=4, c_e=1$  & 394.25 $\pm$ 6.91 & 437.21 $\pm$ 8.14 & 110.61 $\pm$ 4.06 & 62.21 $\pm$ 0.56 \\
DQN & $c_b=1, c_e=1$ & 418.85 $\pm$ 10.00 & 482.44 $\pm$ 10.42 & 145.12 $\pm$ 3.84 & 26.12 $\pm$ 0.31 \\
DQN & $c_b=1, c_e=4$ & 420.88 $\pm$ 7.27 & 500.01 $\pm$ 7.85 & 166.78 $\pm$ 3.38 & 12.27 $\pm$ 0.13 \\
DQN & $c_b=1, c_e=10$ & 425.50 $\pm$ 8.04 & 512.07 $\pm$ 8.35 & 177.62 $\pm$ 3.37 & 7.89 $\pm$ 0.10 \\ \hline
\end{tabular}
\end{table}

%% file: abert_bib.bib
@article{american2011definition,
  title={Definition of boarded patient},
  author={{American College of Emergency Physicians (ACEP)}},
  journal={Annals of Emergency Medicine},
  volume={57},
  number={5},
  pages={548},
  year={2011}
}

@article{viccellio2013patients,
  title={Patients overwhelmingly prefer inpatient boarding to emergency department boarding},
  author={Viccellio, Peter and Zito, Joseph A. and Sayage, Valerie and Chohan, Jasmine and Garra, Gregory and Santora, Carolyn and Singer, Adam J.},
  journal={The Journal of Emergency Medicine},
  volume={45},
  number={6},
  pages={942--946},
  year={2013},
  publisher={Elsevier}
}

@article{singer2011association,
  title={The association between length of emergency department boarding and mortality},
  author={Singer, Adam J. and Thode Jr., Henry C. and Viccellio, Peter and Pines, Jesse M.},
  journal={Academic Emergency Medicine},
  volume={18},
  number={12},
  pages={1324--1329},
  year={2011},
  publisher={Wiley Online Library}
}

@article{chalfin2007impact,
  title={Impact of delayed transfer of critically ill patients from the emergency department to the intensive care unit},
  author={Chalfin, Donald B. and Trzeciak, Stephen and Likourezos, Antonios and Baumann, Brigitte M. and Dellinger, R. Phillip and Delay-Ed Study Group},
  journal={Critical Care Medicine},
  volume={35},
  number={6},
  pages={1477--1483},
  year={2007},
  publisher={LWW}
}

@article{white2013boarding,
  title={Boarding inpatients in the emergency department increases discharged patient length of stay},
  author={White, Benjamin A. and Biddinger, Paul D. and Chang, Yuchiao and Grabowski, Beth and Carignan, Sarah and Brown, David F. M.},
  journal={The Journal of Emergency Medicine},
  volume={44},
  number={1},
  pages={230--235},
  year={2013},
  publisher={Elsevier}
}

@article{bair2010impact,
  title={The impact of inpatient boarding on ED efficiency: a discrete-event simulation study},
  author={Bair, Aaron E. and Song, Wheyming T. and Chen, Yi-Chun and Morris, Beth A.},
  journal={Journal of Medical Systems},
  volume={34},
  number={5},
  pages={919--929},
  year={2010},
  publisher={Springer}
}

@article{chen2022using,
  title={Using Hospital Admission Predictions at Triage for Improving Patient Length of Stay in Emergency Departments},
  author={Chen, Wanyi and Argon, Nilay T. and Bohrmann, Tommy and Linthicum, Benjamin and Lopiano, Kenneth and Mehrotra, Abhishek and Travers, Debbie and Ziya, Serhan},
  journal={Operations Research},
  volume={71},
  number={5},
  pages={1733--1755},
  year={2023},
  publisher={INFORMS},
  doi={10.1287/opre.2022.2405}
}

@article{zhang2017prediction,
  title={Prediction of emergency department hospital admission based on natural language processing and neural networks},
  author={Zhang, Xingyu and Kim, Joyce and Patzer, Rachel E. and Pitts, Stephen R. and Patzer, Aaron and Schrager, Justin D.},
  journal={Methods of Information in Medicine},
  volume={56},
  number={05},
  pages={377--389},
  year={2017},
  publisher={Schattauer GmbH}
}

@article{roquette2020prediction,
  title={Prediction of admission in pediatric emergency department with deep neural networks and triage textual data},
  author={Roquette, Bruno P. and Nagano, Hitoshi and Marujo, Ernesto C. and Maiorano, Alexandre C.},
  journal={Neural Networks},
  volume={126},
  pages={170--177},
  year={2020},
  publisher={Elsevier}
}

@inproceedings{azari2015imbalanced,
  title={Imbalanced learning to predict long stay emergency department patients},
  author={Azari, Ali and Janeja, Vandana P and Levin, Scott},
  booktitle={{2015 IEEE International Conference on Bioinformatics and Biomedicine (BIBM)}},
  pages={807--814},
  year={2015},
  organization={IEEE}
}

@article{rahman2020using,
  title={Using data mining to predict emergency department length of stay greater than 4 hours: Derivation and single-site validation of a decision tree algorithm},
  author={Rahman, Md Anisur and Honan, Bridget and Glanville, Thomas and Hough, Peter and Walker, Katie},
  journal={Emergency Medicine Australasia},
  volume={32},
  number={3},
  pages={416--421},
  year={2020},
  publisher={Wiley Online Library}
}

@article{armony2015patient,
  title={On patient flow in hospitals: A data-based queueing-science perspective},
  author={Armony, Mor and Israelit, Shlomo and Mandelbaum, Avishai and Marmor, Yariv N. and Tseytlin, Yulia and Yom-Tov, Galit B.},
  journal={Stochastic Systems},
  volume={5},
  number={1},
  pages={146--194},
  year={2015},
  publisher={INFORMS}
}

@article{qiu2015cost,
  title={A cost-sensitive inpatient bed reservation approach to reduce emergency department boarding times},
  author={Qiu, Shanshan and Chinnam, Ratna Babu and Murat, Alper and Batarse, Bassam and Neemuchwala, Hakimuddin and Jordan, Will},
  journal={Health Care Management Science},
  volume={18},
  number={1},
  pages={67--85},
  year={2015},
  publisher={Springer}
}

@article{shi2016models,
  title={Models and insights for hospital inpatient operations: Time-dependent {ED} boarding time},
  author={Shi, Pengyi and Chou, Mabel C. and Dai, Jim G. and Ding, Ding and Sim, Joe},
  journal={Management Science},
  volume={62},
  number={1},
  pages={1--28},
  year={2016},
  publisher={INFORMS}
}

@book{sutton2018reinforcement,
	title={Reinforcement learning: An introduction},
	author={Sutton, Richard S. and Barto, Andrew G.},
	year={2018},
	publisher={MIT press}
}

@article{hoot2008systematic,
	title={Systematic review of emergency department crowding: causes, effects, and solutions},
	author={Hoot, Nathan R. and Aronsky, Dominik},
	journal={Annals of Emergency Medicine},
	volume={52},
	number={2},
	pages={126--136},
	year={2008},
	publisher={Elsevier}
}

@article{shi2021timing,
  title={Timing it right: Balancing inpatient congestion vs. readmission risk at discharge},
  author={Shi, Pengyi and Helm, Jonathan E. and Deglise-Hawkinson, Jivan and Pan, Julian},
  journal={Operations Research},
  volume={69},
  number={6},
  pages={1842--1865},
  year={2021},
  publisher={INFORMS}
}

@book{watkins1989learning,
  title={Learning from delayed rewards},
  author={Watkins, Christopher J. C. H.},
  year={1989},
  note={PhD Thesis, King's College Oxford},
  publisher={King's College, Cambridge United Kingdom}
}

@inproceedings{van2016deep,
  title={Deep reinforcement learning with double {Q}-learning},
  author={Van Hasselt, Hado and Guez, Arthur and Silver, David},
  booktitle = {Proceedings of the Thirtieth AAAI Conference on Artificial Intelligence},
 pages = {2094–2100},
  year={2016}
}

@techreport{gilboy2012emergency,
  title={{Emergency {S}everity {I}ndex ({ESI}): A Triage Tool for Emergency Department Care, Version 4. Implementation Handbook 2012 Edition}},
  author={Gilboy, Nicki and Tanabe, Paula and Travers, Debbie and Rosenau, Alexander M.},
  institution={Agency for Healthcare Research and Quality},
  year={2012}
}

@article{mnih2015human,
author = {Mnih, Volodymyr and Kavukcuoglu, Koray and Silver, David and Rusu, Andrei and Veness, Joel and Bellemare, Marc and Graves, Alex and Riedmiller, Martin and Fidjeland, Andreas and Ostrovski, Georg and Petersen, Stig and Beattie, Charles and Sadik, Amir and Antonoglou, Ioannis and King, Helen and Kumaran, Dharshan and Wierstra, Daan and Legg, Shane and Hassabis, Demis},
year = {2015},
month = {02},
pages = {529-533},
title = {Human-level control through deep reinforcement learning},
volume = {518},
number={7540},
journal = {Nature}
}

@article{peck2012predicting,
  title={Predicting emergency department inpatient admissions to improve same-day patient flow},
  author={Peck, Jordan S. and Benneyan, James C. and Nightingale, Deborah J. and Gaehde, Stephan A.},
  journal={Academic Emergency Medicine},
  volume={19},
  number={9},
  pages={E1045--E1054},
  year={2012},
  publisher={Wiley Online Library}
}

@article{somanchi2022predict,
  title={To Predict or Not to Predict: The Case of the Emergency Department},
  author={Somanchi, Sriram and Adjerid, Idris and Gross, Ralph},
  journal={Production and Operations Management},
  volume={31},
  number={2},
  pages={799--818},
  year={2022},
  publisher={Wiley Online Library}
}

@article{sun2011predicting,
  title={Predicting hospital admissions at emergency department triage using routine administrative data},
  author={Sun, Yan and Heng, Bee Hoon and Tay, Seow Yian and Seow, Eillyne},
  journal={Academic Emergency Medicine},
  volume={18},
  number={8},
  pages={844--850},
  year={2011},
  publisher={Wiley Online Library}
}

@article{lee2020prediction,
  title={Prediction of emergency department patient disposition decision for proactive resource allocation for admission},
  author={Lee, Seung-Yup and Chinnam, Ratna Babu and Dalkiran, Evrim and Krupp, Seth and Nauss, Michael},
  journal={Health Care Management Science},
  volume={23},
  pages={339--359},
  year={2020},
  publisher={Springer}
}

@article{lee2021proactive,
  title={Proactive coordination of inpatient bed management to reduce emergency department patient boarding},
  author={Lee, Seung-Yup and Chinnam, Ratna Babu and Dalkiran, Evrim and Krupp, Seth and Nauss, Michael},
  journal={International Journal of Production Economics},
  volume={231},
  pages={107842},
  year={2021},
  publisher={Elsevier}
}

@article{peck2014characterizing,
  title={Characterizing the value of predictive analytics in facilitating hospital patient flow},
  author={Peck, Jordan S. and Benneyan, James C. and Nightingale, Deborah J. and Gaehde, Stephan A.},
  journal={IIE Transactions on Healthcare Systems Engineering},
  volume={4},
  number={3},
  pages={135--143},
  year={2014},
  publisher={Taylor \& Francis}
}

@book{puterman2014markov,
  title={Markov Decision Processes: Discrete Stochastic Dynamic Programming},
  author={Puterman, Martin L},
  year={2014},
  publisher={John Wiley \& Sons}
}

@article{wiler2011review,
  title={Review of modeling approaches for emergency department patient flow and crowding research},
  author={Wiler, Jennifer L. and Griffey, Richard T. and Olsen, Tava},
  journal={Academic Emergency Medicine},
  volume={18},
  number={12},
  pages={1371--1379},
  year={2011},
  publisher={Wiley Online Library}
}

@article{saghafian2015operations,
  title={Operations research/management contributions to emergency department patient flow optimization: Review and research prospects},
  author={Saghafian, Soroush and Austin, Garrett and Traub, Stephen J.},
  journal={IIE Transactions on Healthcare Systems Engineering},
  volume={5},
  number={2},
  pages={101--123},
  year={2015},
  publisher={Taylor \& Francis}
}

@article{dai2021recent,
  title={Recent modeling and analytical advances in hospital inpatient flow management},
  author={Dai, Jim G. and Shi, Pengyi},
  journal={Production and Operations Management},
  volume={30},
  number={6},
  pages={1838--1862},
  year={2021},
  publisher={Wiley Online Library}
}

@article{kamali2019use,
  title={When to use provider triage in emergency departments},
  author={Kamali, Michael F. and Tezcan, Tolga and Yildiz, Ozlem},
  journal={Management Science},
  volume={65},
  number={3},
  pages={1003--1019},
  year={2019},
  publisher={INFORMS}
}

@article{saghafian2012patient,
  title={Patient streaming as a mechanism for improving responsiveness in emergency departments},
  author={Saghafian, Soroush and Hopp, Wallace J. and Van Oyen, Mark P. and Desmond, Jeffrey S. and Kronick, Steven L.},
  journal={Operations Research},
  volume={60},
  number={5},
  pages={1080--1097},
  year={2012},
  publisher={INFORMS}
}

@article{saghafian2014complexity,
  title={Complexity-augmented triage: A tool for improving patient safety and operational efficiency},
  author={Saghafian, Soroush and Hopp, Wallace J. and Van Oyen, Mark P. and Desmond, Jeffrey S. and Kronick, Steven L.},
  journal={Manufacturing \& Service Operations Management},
  volume={16},
  number={3},
  pages={329--345},
  year={2014},
  publisher={INFORMS}
}

@article{feizi2023a,
  title={Vertical Patient Streaming in Emergency Departments},
  author={Feizi, Arshya and Orfanoudaki, Agni and Saghafian, Soroush and Hudgson, Nicole},
  journal={SSRN},
  year={2023}
}

@article{he2019data,
  title={Data-driven patient scheduling in emergency departments: A hybrid robust-stochastic approach},
  author={He, Shuangchi and Sim, Melvyn and Zhang, Meilin},
  journal={Management Science},
  volume={65},
  number={9},
  pages={4123--4140},
  year={2019},
  publisher={INFORMS}
}

@article{huang2015control,
  title={Control of patient flow in emergency departments, or multiclass queues with deadlines and feedback},
  author={Huang, Junfei and Carmeli, Boaz and Mandelbaum, Avishai},
  journal={Operations Research},
  volume={63},
  number={4},
  pages={892--908},
  year={2015},
  publisher={INFORMS}
}

@article{li2021next,
  title={Who is next: Patient prioritization under emergency department blocking},
  author={Li, Wenhao and Sun, Zhankun and Hong, Jeff},
  journal={Operations Research},
  volume={71},
  number={3},
  pages={821--842},
  year={2023},
  publisher={INFORMS}
}

@article{niewoehner2022physician,
  title={Physician discretion and patient pick-up: How familiarity encourages multitasking in the emergency department},
  author={Niewoehner III, Robert J. and Diwas, KC and Staats, Bradley},
  journal={Operations Research},
  year={2023},
  volume={71},
  number={3},
  pages={958--978},
  publisher={INFORMS}
}

@article{andersen2019staff,
  title={Staff optimization for time-dependent acute patient flow},
  author={Andersen, Anders Reenberg and Nielsen, Bo Friis and Reinhardt, Line Blander and Stidsen, Thomas Riis},
  journal={European Journal of Operational Research},
  volume={272},
  number={1},
  pages={94--105},
  year={2019},
  publisher={Elsevier}
}

@article{zaerpour2022scheduling,
  title={Scheduling of Physicians with Time-Varying Productivity Levels in Emergency Departments},
  author={Zaerpour, Farzad and Bijvank, Marco and Ouyang, Huiyin and Sun, Zhankun},
  journal={Production and Operations Management},
  volume={31},
  number={2},
  pages={645--667},
  year={2022},
  publisher={Wiley Online Library}
}

@article{xu2016using,
  title={Using future information to reduce waiting times in the emergency department via diversion},
  author={Xu, Kuang and Chan, Carri W.},
  journal={Manufacturing \& Service Operations Management},
  volume={18},
  number={3},
  pages={314--331},
  year={2016},
  publisher={INFORMS}
}

@article{chan2021dynamic,
  title={Dynamic server assignment in multiclass queues with shifts, with applications to nurse staffing in emergency departments},
  author={Chan, Carri W. and Huang, Michael and Sarhangian, Vahid},
  journal={Operations Research},
  volume={69},
  number={6},
  pages={1936--1959},
  year={2021},
  publisher={INFORMS}
}

@article{ang2016accurate,
  title={Accurate emergency department wait time prediction},
  author={Ang, Erjie and Kwasnick, Sara and Bayati, Mohsen and Plambeck, Erica L. and Aratow, Michael},
  journal={Manufacturing \& Service Operations Management},
  volume={18},
  number={1},
  pages={141--156},
  year={2016},
  publisher={INFORMS}
}

@article{dai2019inpatient,
  title={Inpatient overflow: An approximate dynamic programming approach},
  author={Dai, Jim G. and Shi, Pengyi},
  journal={Manufacturing \& Service Operations Management},
  volume={21},
  number={4},
  pages={894--911},
  year={2019},
  publisher={INFORMS}
}

@article{feizi2023b,
  title={To batch or not to batch? {I}mpact of admission batching on emergency department boarding time and physician productivity},
  author={Feizi, Arshya and Carson, Anita and Jaeker, Jillian Berry and Baker, William Evan},
  journal={Operations Research},
  volume={71},
  number={3},
  pages={939--957},
  year={2023},
  publisher={INFORMS}
}

@article{chan2017queues,
  title={Queues with time-varying arrivals and inspections with applications to hospital discharge policies},
  author={Chan, Carri W. and Dong, Jing and Green, Linda V.},
  journal={Operations Research},
  volume={65},
  number={2},
  pages={469--495},
  year={2017},
  publisher={INFORMS}
}

@article{armony2018critical,
  title={Critical care capacity management: Understanding the role of a step down unit},
  author={Armony, Mor and Chan, Carri W. and Zhu, Bo},
  journal={Production and Operations Management},
  volume={27},
  number={5},
  pages={859--883},
  year={2018},
  publisher={Wiley Online Library}
}

@article{hu2021prediction,
  title={Prediction-Driven Surge Planning with Application to Emergency Department Nurse Staffing},
  author={Hu, Yue and Chan, Carri W. and Dong, Jing},
  journal={Management Science},
  volume={71},
  number={3},
  pages={2079--2126},
  year={2024},
}

@article{gonzalez2019proactive,
  title={A proactive transfer policy for critical patient flow management},
  author={Gonz{\'a}lez, Jaime and Ferrer, Juan-Carlos and Cataldo, Alejandro and Rojas, Luis},
  journal={Health Care Management Science},
  volume={22},
  number={2},
  pages={287--303},
  year={2019},
  publisher={Springer}
}

@article{dong2020structural,
  title={Capacity Management in Networks: A Structural Estimation Approach for Hospital Inpatient Wards},
  author={Dong, Jing and Shi, Pengyi and Zheng, Fanyin and Jin, Xin},
  journal={Manufacturing \& Service Operations Management},
  year={2026},
  doi={10.1287/msom.2023.0700}
}

@article{zhalechian2020personalized,
  title={ Data-Driven Hospital Admission Control: A Learning Approach},
  author={Zhalechian, Mohammad and Keyvanshokooh, Esmaeil and Shi, Cong and Van Oyen, Mark P.},
  journal={Operations Research},
  volume={71},
  number={6},
  pages={2111--2129},
  year={2023}
}

@article{batt2017early,
  title={Early task initiation and other load-adaptive mechanisms in the emergency department},
  author={Batt, Robert J. and Terwiesch, Christian},
  journal={Management Science},
  volume={63},
  number={11},
  pages={3531--3551},
  year={2017},
  publisher={INFORMS}
}

@article{kc2017benefits,
  title={Benefits of surgical smoothing and spare capacity: an econometric analysis of patient flow},
  author={Kc, Diwas Singh and Terwiesch, Christian},
  journal={Production and Operations Management},
  volume={26},
  number={9},
  pages={1663--1684},
  year={2017},
  publisher={Wiley Online Library}
}

@article{ding2019patient,
  title={Patient prioritization in emergency department triage systems: An empirical study of the {C}anadian triage and acuity scale ({CTAS})},
  author={Ding, Yichuan and Park, Eric and Nagarajan, Mahesh and Grafstein, Eric},
  journal={Manufacturing \& Service Operations Management},
  volume={21},
  number={4},
  pages={723--741},
  year={2019},
  publisher={INFORMS}
}

@article{soltani2022does,
  title={Does what happens in the {ED} stay in the {ED}? {The} effects of emergency department physician workload on post-{ED} care use},
  author={Soltani, Mohamad and Batt, Robert J. and Bavafa, Hessam and Patterson, Brian W.},
  journal={Manufacturing \& Service Operations Management},
  volume={24},
  number={6},
  pages={3079--3098},
  year={2022},
  publisher={INFORMS}
}

@article{batt2019effects,
  title={The effects of discrete work shifts on a nonterminating service system},
  author={Batt, Robert J. and Kc, Diwas S. and Staats, Bradley R. and Patterson, Brian W.},
  journal={Production and Operations Management},
  volume={28},
  number={6},
  pages={1528--1544},
  year={2019},
  publisher={Wiley Online Library}
}

@article{song2015diseconomies,
  title={The diseconomies of queue pooling: An empirical investigation of emergency department length of stay},
  author={Song, Hummy and Tucker, Anita L. and Murrell, Karen L.},
  journal={Management Science},
  volume={61},
  number={12},
  pages={3032--3053},
  year={2015},
  publisher={INFORMS}
}

@article{lin1992self,
  title={Self-improving reactive agents based on reinforcement learning, planning and teaching},
  author={Lin, Long-Ji},
  journal={Machine Learning},
  volume={8},
  pages={293--321},
  year={1992},
  publisher={Springer}
}

@misc{flusurge,
  author = {{CDC}},
  title = {{FluSurge 2.0}},
  year = {2016},
  note = {Accessed: 2023-6-23}
}

@book{porteus,
  title={Foundations of Stochastic Inventory Theory},
  author={Porteus, Evan L.},
  year={2002},
  publisher={Stanford Business Books}
}

@article{reid2016effectiveness,
  title={The effectiveness and variation of acute medical units: a systematic review},
  author={Reid, Lindsay E. M. and Dinesen, Lotte C. and Jones, Michael C. and Morrison, Zoe J. and Weir, Christopher J. and Lone, Nazir I.},
  journal={International Journal for Quality in Health Care},
  volume={28},
  number={4},
  pages={433--446},
  year={2016},
  publisher={International Society for Quality in Health Care and Oxford University Press}
}

@article{moloney2005impact,
  title={Impact of an acute medical admission unit on length of hospital stay, and emergency department ‘wait times’},
  author={Moloney, E. D. and Smith, D. and Bennett, K. and O'riordan, D. and Silke, B.},
  journal={QJM},
  volume={98},
  number={4},
  pages={283--289},
  year={2005},
  publisher={Oxford University Press}
}

@article{mehrotra2017,
    title={Starting with a clear endpoint: Development of a tool to predict admissions at triage.},
    author={Travers, Debbie and Mehrotra, Abhi and Chen, Wanyi and Lopiano, Kenneth and Bohrmann, Thomas and Argon, Nilay T. and Ziya, Serhan and Strickler, Jeffrey and Ring, Jackie and Linthicum, Benjamin},
    year={2017},
    journal={Academic Emergency Medicine},
    volume={volume 24 of 1},
    pages={S13-S14},
    note={Society for Academic Emergency Medicine}
}

@article{brown2005statistical,
  title={Statistical analysis of a telephone call center: A queueing-science perspective},
  author={Brown, Lawrence and Gans, Noah and Mandelbaum, Avishai and Sakov, Anat and Shen, Haipeng and Zeltyn, Sergey and Zhao, Linda},
  journal={Journal of the American Statistical Association},
  volume={100},
  number={469},
  pages={36--50},
  year={2005},
  publisher={Taylor \& Francis}
}

@article{shi2014patient,
  title={Patient flow from emergency department to inpatient wards: Empirical observations from a {Singaporean} hospital},
  author={Shi, Pengyi and Dai, Jim G. and Ding, Ding and Ang, Soo Keng and Chou, Mabel and Jin, Xin and Sim, Joe},
  year={2014},
note={http://dx.doi.org/10.2139/ssrn.2517050},
   journal={SSRN}
}

@book{lapan2018deep,
  title={{Deep reinforcement learning hands-on: Apply modern {RL} methods, with deep {Q}-networks, value iteration, policy gradients, {TRPO}, {AlphaGo Zero} and more}},
  author={Lapan, Maxim},
  year={2018},
  publisher={Packt Publishing Ltd.}
}

@misc{Arena,
  title = {{Arena simulation software}},
  author = {{Rockwell Automation}},
  year = {2023},
  url={https://www.rockwellautomation.com/en-us/products/software/arena-simulation.html},
  note = {Accessed: 2026-06-22}
}

@book{hastie2009,
title={The Elements of Statistical Learning: Data Mining, Inference, and Prediction},
author={Hastie, Trevor and Tibshirani, Robert and Friedman, Jerome},
year={2009},  
edition={second},
publisher={New York: Springer}
}

@article{Canellas2024,
title = {Measurement of Cost of Boarding in the Emergency Department Using Time-Driven Activity-Based Costing},
journal = {Annals of Emergency Medicine},
volume = {84},
number = {4},
pages = {376-385},
year = {2024},
author = {Maureen M. Canellas and Marcella Jewell and Jennifer L. Edwards and Danielle Olivier and Adalia H. Jun-O’Connell and Martin A. Reznek}
}

@article{pearce2023,
  title={Emergency department crowding: an overview of reviews describing measures, causes, and harms},
  author={Pearce, Sarah and Marchand, Thomas and Shannon, Thomas and Lang, Eddy},
  journal={Internal and Emergency Medicine},
  volume={18},
  number={5},
  pages={1137--1158},
  year={2023}  
}

@article{sartini2022,
  title={Overcrowding in Emergency Department: Causes, Consequences, and Solutions—A Narrative Review},
  author={Sartini, Marco and Carbone, Antonio and Demartini, Alessandro and Giribone, Laura and Oliva, Maria and Spagnolo, Anna M. and Cremonesi, Paola and Canale, Francesco and Cristina, Maria L.},
  journal={Healthcare},
  volume={10},
  number={9},
  pages={1625},
  year={2022}
}

@article{kishore2023, 
title={Early prediction of hospital admission of emergency department patients},
  author={Kishore, Karthik and Braitberg, George and Holmes, Natasha E.},
  journal={Emergency Medicine Australasia},
  volume={35},
  number={4},
  pages={572 -- 588},
  year={2023}
}

@article{kadri2023,
title={Towards accurate prediction of patient length of stay at emergency department: a GAN-driven deep learning framework},
  author={Kadri, Farid and Dairi, Abdelkader and Harrou, Fouzi and Sun, Ying},
  journal={Journal of Ambient Intelligence and Humanized Computing},
  volume={14},
  number={9},
  pages={11481--11495},
  year={2023}}

@thesis{linthicum2018improving,
  author  = {Linthicum, Benjamin},
  title   = {{Improving Emergency Department Throughput by Adoption of an Admissions Predictor Tool at Triage}},
  school  = {University of North Carolina at Chapel Hill},
  year    = {2018},
  type    = {Doctoral thesis},
  doi     = {10.17615/a5rv-w510}
}

@article{Wu2025,
  title={Reinforcement learning for healthcare operations management: methodological framework, recent developments, and future research directions},
  author={Wu, Qihao and Han, Jiangxue and Yan, Yimo and Kuo, Yong-Hong and Shen, Zuo-Jun Max},
  journal={Health Care Management Science},
  number={2},
  volume={28},
  pages={298-333},
  year={2025}
}

@article{Shin2025,
author = {Shin, Hyun A. and Kang, Hyeonji and Choi, Mona},
title = {Triage Data-Driven Prediction Models for Hospital Admission of Emergency Department Patients: A Systematic Review},
journal = {Healthcare Informatics Research},
volume = {31},
number = {1},
pages = {23--36},
year = {2025}
}
